\newcommand{\thm}[1]{\hyperref[thm:#1]{Theorem~\ref*{thm:#1}}}
\newcommand{\defn}[1]{\hyperref[defn:#1]{Definition~\ref*{defn:#1}}}
\newcommand{\lem}[1]{\hyperref[lem:#1]{Lemma~\ref*{lem:#1}}}
\newcommand{\prop}[1]{\hyperref[prop:#1]{Proposition~\ref*{prop:#1}}}
\newcommand{\fig}[1]{\hyperref[fig:#1]{Figure~\ref*{fig:#1}}}
\newcommand{\tab}[1]{\hyperref[tab:#1]{Table~\ref*{tab:#1}}}
\renewcommand{\sec}[1]{\hyperref[sec:#1]{Section~\ref*{sec:#1}}}
\newcommand{\app}[1]{\hyperref[app:#1]{Appendix~\ref*{app:#1}}}
\newcommand{\cor}[1]{\hyperref[cor:#1]{Corollary~\ref*{cor:#1}}}
\newcommand{\obs}[1]{\hyperref[obs:#1]{Observation~\ref*{obs:#1}}}
\newcommand{\nn}{\nonumber \\}
\newcommand{\append}[1]{\hyperref[append:#1]{Appendix~\ref*{append:#1}}}
\newtheorem{theorem}{Theorem}
\newtheorem{definition}[theorem]{Definition}
\newtheorem{lemma}[theorem]{Lemma}
\newtheorem{proposition}[theorem]{Proposition}
\newenvironment{customthm}[1]
{\innercustomthm}
{\endinnercustomthm}
\newcommand{\qw}[1][-1]{\ar @{-} [0,#1]}
\newcommand{\qwx}[1][-1]{\ar @{-} [#1,0]}
\newcommand{\gate}[1]{*+<.6em>{#1} \POS ="i","i"+UR;"i"+UL **\dir{-};"i"+DL **\dir{-};"i"+DR **\dir{-};"i"+UR **\dir{-},"i" \qw}
\newcommand{\control}{*!<0em,.025em>-=-<.2em>{\bullet}}
\newcommand{\controlo}{*+<.01em>{\xy -<.095em>*\xycircle<.19em>{} \endxy}}
\newcommand{\ctrl}[1]{\control \qwx[#1] \qw}
\newcommand{\ctrlo}[1]{\controlo \qwx[#1] \qw}
\newcommand{\rstick}[1]{*!L!<-.5em,0em>=<0em>{#1}}
\newcommand{\lstick}[1]{*!R!<.5em,0em>=<0em>{#1}}
\newcommand{\Qcircuit}{\xymatrix @*=<0em>}
\newcommand{\MQ}{\affiliation{%
Department of Physics and Astronomy,
Macquarie University, Sydney, NSW 2109, AU} }
\newcommand{\Google}{\affiliation{%
Google Quantum AI, Venice, CA 90291, USA}}
\newcommand{\UTS}{\affiliation{%
Centre for Quantum Software and Information,
University of Technology Sydney, Sydney, NSW 2007, AU}}
\newcommand{\UMD}{\affiliation{%
Joint Center for Quantum Information and Computer Science, University of Maryland, College Park, MD 20742, USA}}          
\begin{document}
\title{Optimal scaling quantum linear systems solver via discrete adiabatic theorem}
\date{\today}
\author{Pedro C.~S.~Costa} \MQ
\author{Dong An} \UMD \Google
\author{Yuval R.~Sanders}  \MQ \UTS
\author{Yuan Su} \Google
\author{Ryan Babbush} \Google
\author{Dominic W.~Berry} \MQ

\begin{abstract}
Recently, several approaches to solving linear systems on a quantum computer have been formulated in terms of the quantum adiabatic theorem for a continuously varying Hamiltonian. Such approaches enabled near-linear scaling in the condition number $\kappa$ of the linear system, without requiring a complicated variable-time amplitude amplification procedure. However, the most efficient of those procedures is still asymptotically sub-optimal by a factor of $\log(\kappa)$. Here, we prove a rigorous form of the adiabatic theorem that bounds the error in terms of the spectral gap for intrinsically discrete time evolutions.
We use this discrete adiabatic theorem to develop a quantum algorithm for solving linear systems that is asymptotically optimal, in the sense that the complexity is strictly linear in $\kappa$, matching a known lower bound on the complexity.
Our $\mathcal{O}(\kappa\log(1/\epsilon))$ complexity is also optimal in terms of the combined scaling in $\kappa$ and the precision $\epsilon$.
Compared to existing suboptimal methods, our algorithm is simpler and easier to implement.
Moreover, we determine the constant factors in the algorithm, which would be suitable for determining the complexity in terms of gate counts for specific applications.
\end{abstract}

\maketitle

\tableofcontents

\section{Introduction}

Finding the solution to a system of linear equations is a fundamental task that underlies many areas of science and technology. A classical linear system solver takes time proportional to the number of unknown variables even to write down the solution, and thus has a prohibitive computational cost for solving large linear systems. However, a quantum computer with a suitable input access can produce a quantum state that encodes the problem solution much faster than its classical counterpart. The first quantum algorithm for the quantum linear system problem (QLSP) was proposed by Harrow, Hassidim, and Lloyd (HHL) \cite{Harrow_2009}, and has been subsequently refined by later work. Due to the ubiquitous nature of the problem, quantum algorithms for QLSP have found a variety of applications, such as computing electromagnetic scattering \cite{Clader13}, solving differential equations \cite{BerryJPA14,BerryCMP17}, data fitting \cite{Wiebe2012}, machine learning \cite{supvec,lloyd2013quantum}, and more general solution of partial differential equations \cite{Ashley2016}.

Specifically, the goal of QLSP is to produce a quantum state $\ket{x}$ proportional to the solution of linear system $Ax=b$, where $A$ is an $N$-by-$N$ matrix.
The complexity of solving QLSP depends on various input parameters, such as the problem size $N$, the sparsity (for sparse linear systems), the norm of the coefficient matrix $A$, the condition number $\kappa$, and the error $\epsilon$ in the solution.
To simplify the discussion, we assume that $\norm{A}=1$ and hence $\norm{A^{-1}}=\kappa$, where $\norm{\cdot}$ denotes the spectral norm.
To further simplify the analysis, we assume that we have a block encoding of the coefficient matrix $A$ and a given operation to prepare the target vector $\ket{b}$, and consider the number of queries to these oracles. 
One can also consider the complexity in terms of the number of calls to entries of a sparse matrix, as in \cite{Harrow_2009}, but there are standard methods to block encode sparse matrices \cite{rootd}, so our result can be easily applied to that case.
These simplifications mean that the only relevant parameters which our algorithm depends on are $\kappa$ and $\epsilon$.

The original algorithm proposed by HHL has a complexity scaling quadratically with the condition number $\kappa$ and linearly with the inverse accuracy $1/\epsilon$ \cite{Harrow_2009}. The scaling with the condition number was improved by Ambainis using ``variable time amplitude amplification'' \cite{ambainis2010variable}; the resulting algorithm has a near-linear dependence on $\kappa$ but a much worse dependence on $1/\epsilon$. A further improvement was provided in \cite{CKS}, which yields a complexity logarithmic in the allowable error $\epsilon$. 
Unfortunately, algorithms based on variable time amplitude amplification \cite{ambainis2010variable,CKS} perform multiple rounds of recursive amplitude amplifications and can be challenging to implement in practice.

To address this, recent work has suggested alternative approaches based on adiabatic quantum computing (AQC). AQC is a universal model of quantum computing that has been shown to be polynomially equivalent to the standard gate model~\cite{farhi2000quantum,Aharonov2007}.
In AQC one encodes the solution to a computational problem in the ground state of a Hamiltonian $H_1$. Then, one initializes a quantum system in the ground state of an easy-to-prepare Hamiltonian $H_0$ and slowly deforms from the ground state of $H_0$ to the ground state of $H_1$ under a time-dependent Hamiltonian that interpolates between the two, such as $H(s)=(1-s)H_0+sH_1$. The advantage of using the adiabatic approach to solve QLSP as in \cite{PhysRevLett.122.060504} is that it naturally provides complexity close to linear in $\kappa$, without the highly complicated variable time amplitude amplification procedure. That work was further improved in \cite{an2019quantum} then \cite{Lin2020optimalpolynomial}, which gave complexity logarithmic in $\epsilon$ by using eigenstate filtering. We summarize key developments reducing the complexity in \tab{linear_systems_history}.

\begin{table*}[ht]
\centering
\begin{tabular}{|c|c|c|c|}
\hline
Year
& Reference
& Primary innovation
& Query complexity\\
\hline\hline
2008
& Harrow, Hassidim, Lloyd \cite{Harrow_2009}
& first quantum approach
& ${\cal O}(\kappa^2 / \epsilon)$\\
2012
& Ambainis \cite{ambainis2010variable}
& variable-time amplitude amplification
& ${\cal O}(\kappa (\log (\kappa) /\epsilon)^3)$\\
2017
& Childs, Kothari, Somma \cite{CKS}
& Fourier/Chebyshev fitting using LCU
& ${\cal O}(\kappa\, {\rm polylog} (\kappa/\epsilon))$\\
2018
& Subasi, Somma, Orsucci \cite{PhysRevLett.122.060504}
& adiabatic randomization method
& ${\cal O}((\kappa \log \kappa) / \epsilon)$\\
2019
& An, Lin \cite{an2019quantum}
& time-optimal adiabatic method
& ${\cal O}(\kappa \, {\rm polylog}(\kappa/\epsilon)) $\\
2019
& Tong, Lin \cite{Lin2020optimalpolynomial}
& Zeno eigenstate filtering
& ${\cal O}(\kappa \log (\kappa/\epsilon))$\\
2021
& this paper
& discrete adiabatic theorem
& ${\cal O}(\kappa \log(1/\epsilon))$\\
\hline
\end{tabular}
\caption{\label{tab:linear_systems_history} History of the lowest scaling algorithms for solving linear systems of equations on a quantum computer. Specifically, the problem is to prepare the state $\ket{x}$ given oracular access to the matrix $A$ and the ability to prepare the initial state $\ket{b}$ encoding a vector $b$ with the relation $A x = b$. Here, $\kappa$ is the condition number of $A$ and $\epsilon$ is the target precision to which we would like to prepare the state $\ket{x}$. However, the cost of a query for all classical algorithms is expected to scale polynomially in $N$ (the dimension of the matrix $A$), whereas on a quantum computer it is possible to make queries in complexity scaling as ${\cal O}(\textrm{polylog}(N))$ when $A$ is a sparse matrix. Query complexity of $\Omega(\kappa \log(1/\epsilon))$ is a known lower bound on the complexity.}
\end{table*}

It is known that a quantum algorithm must make at least $\Omega(\kappa \log(1/\epsilon))$ queries in general to solve the sparse QLSP problem \cite{RobinAram}. Therefore, the method in \cite{Lin2020optimalpolynomial} is already optimal in the scaling with solution accuracy $\epsilon$. However, a question left open was: how can we achieve an optimal scaling with the condition number $\kappa$, or is it possible to prove a lower bound ruling out this improvement? From the algorithmic perspective, finding a quantum algorithm with linear $\kappa$-scaling is technically challenging. Previous fast linear system solvers depend on polynomial approximations to implement the inverse function $1/x$ on $x\in[1/\kappa,1]$ \cite{CKS}, or truncations of the Dyson series to implement the continuous adiabatic evolution \cite{PhysRevLett.122.060504,an2019quantum,Lin2020optimalpolynomial}. In either case, an extra $\mathrm{polylog}(\kappa)$ factor is required to suppress the truncation or approximation error, resulting in a superlinear scaling with the condition number.

In this work, we develop a quantum algorithm for solving systems of linear equations with complexity $\mathcal{O}(\kappa\log(1/\epsilon))$. That is, we achieve a strictly linear scaling with $\kappa$, while maintaining the logarithmic scaling with $1/\epsilon$ from the best previous algorithms. Combining with the lower bound of \cite{RobinAram}, we establish for the first time a quantum linear system algorithm with optimal scaling in the condition number. 
It is also optimal in the combined scaling with $\kappa$ and $\epsilon$, because one cannot for example reduce the scaling to $\mathcal{O}(\kappa+\log(1/\epsilon))$.
We formally state our result in \sec{filter} and preview it here.
\begin{customthm}{}[QLSP with linear dependence on $\kappa$]
Let $Ax=b$ be a system of linear equations, where $A$ is an $N$-by-$N$ matrix with $\norm{A}=1$ and $\norm{A^{-1}}=\kappa$. Given an oracle block encoding the operator $A$ and an oracle preparing $\ket{b}$, there exists a quantum algorithm which produces the normalized state $\ket{A^{-1}b}$ to within error $\epsilon$ using a number
\begin{equation}
    \mathcal{O}(\kappa\log(1/\epsilon))
\end{equation}
of oracle calls.
\end{customthm}

Our algorithm is conceptually simple and easy to describe.
All it requires is a sequence of quantum walk steps corresponding to a qubitised form of the Hamiltonian used in prior work.
It completely avoids the heavy mechanisms of variable time amplitude amplification or the truncated Dyson-series subroutine from previous methods. 
Moreover, we provide a bound on the constant prefactor for our approach, that will allow estimation of the complexity in terms of the number of gates for specific applications.
We expect that our estimate of the prefactor can be tightened, and our algorithm will be the most efficient for the early fault-tolerant regime of quantum computation as well as having the best asymptotic scaling for large-scale applications.

The new insight that allows us to establish the optimal $\kappa$-scaling is the use of a discrete quantum adiabatic theorem, a result proved by Dranov, Kellendonk, and Seiler (DKS)~\cite{DKS98}. Unlike the continuous version, the discrete adiabatic theorem is formulated based on a quantum walk operator $W_T(s)$. Provided that the steps of quantum walk vary sufficiently slowly, the eigenstates of the walk operator can be approximately preserved throughout the entire discrete adiabatic evolution. Indeed, DKS showed that the error in the evolution scales as $\mathcal{O}(1/T)$ for $T$ steps of the walk. However, their analysis overlooked the scaling with other parameters, in particular, the spectral gap dependence.
In the case of solving QLSP, the gap depends on $\kappa$, so the result of DKS is not sufficient to give the $\kappa$-dependence of the algorithm.
Here, we give a complete analysis of the discrete adiabatic theorem, keeping track of all the parameters of interest while fixing several minor mistakes in the original proof. 

In developing our quantum linear system algorithm, we provide an improved method of filtering the final state which may be of independent interest.
Prior methods were based on singular value processing \cite{Lin2020optimalpolynomial}, which requires a sequence of rotations to be found by a numerically demanding procedure~\cite{Childs18,Haah2019product,Dong21,Chao2020}.
Our method is just as efficient, but the sequence of operations needed is easily determined.
Combining the discrete adiabatic theorem on the qubitised quantum walk with the improved eigenstate filtering then gives our result on the solution of linear systems.

The remainder of the paper is organized as follows.
In the following we give more detailed background and summarise our result in \sec{summary}.
Then in \sec{adtheo} we give a thorough proof of the discrete adiabatic theorem.
We base our method on the approach of DKS, but make many of the details rigorous and provide a strict bound on the error including constant factors.
We apply the discrete adiabatic theorem to the QLSP in \sec{linsys}.
In \sec{filter} we provide our general method of filtering that is just as efficient as that based on singular value processing.

\section{Discrete adiabatic theorems}
\label{sec:summary}

\subsection{Background}
\label{sec:summary_background}

Before presenting our results, let us present the main ideas of the DKS bound on the error in discrete-time adiabatic evolution \cite{DKS98}.
In this proposal the model of the adiabatic evolution is based on a sequence of $T$ walk operators $\{W_T(n/T): n \in \mathbb{N}, 0\leq n \leq T-1\}$.
That is, the system is initially prepared in a state $\ket{\psi_0}$, then the sequence of unitary transformations $W_T(n/T)$ have the effect 
$\ket{\psi_0} \mapsto \ket{\psi_1} \mapsto \cdots$.
To model this evolution, with $s := n/T$ we can write
\begin{equation}
\label{eq:U_as_product_of_W}
    U_T(s) = \prod_{n = 1}^{sT-1} W_T\left(n/T\right),
\end{equation}
and $U_T(0) \equiv \mathds{1}$, such that $\ket{\psi_n} = U_T(s)\ket{\psi_0}$.
The adiabatic limit is then the limit $T \to \infty$.
Alternatively we can construct the total unitary evolution recursively as
\begin{equation}
\label{eq:discrete_time_schroedinger_eqn_rescaled}
  U_T (s + 1/T) = W_T (s+1/T) U_T (s), \quad U_T (0) = \mathds{1}.
\end{equation}

The adiabatic limit is then the limit $T \to \infty$.
For the purpose of quantum algorithm design,
we are trying to choose $U_T$ so that
$\lim_{T \to \infty} U_T (1) \ket{\psi_0} = \ket{\psi_{\rm target}}$,
where $\ket{\psi_{\rm target}}$ is a desired `target' state that
enables us to solve a computational problem.
In order for this to be an accurate adiabatic evolution yielding the target state,
$U(n/T) \ket{\psi_0}$ should be approximately an eigenstate of $W (n/T)$ for all $n$.

We need to establish some terminology before we can present the statement of the result from \cite{DKS98}.
For each $T \in \mathbb{R}$ and $n \in \mathbb{N}$, introduce a projector $P_T (s)$ (with $s \equiv n/T$ as before) called the
\textit{spectral projection}, which projects onto the eigenspace of interest.
In addition, the \textit{complementary spectral projection} $Q_T (s)=\openone-P_T (s)$ projects onto all orthogonal eigenvectors.
An operator representing the ideal adiabatic evolution is denoted $U_T^A(s)$, so that
\begin{equation}
    P_T(s)=U_T^A(s) P_T (0) U_T^{A\dagger}(s) .
\end{equation}
That is, evolving the original eigenspace to step $n=sT$ under the ideal adiabatic evolution gives the corresponding eigenspace for the walk operator $W_T(s)$.

The adiabatic theorem is a statement about how close the ideal adiabatic evolution $U_T^A(s)$ is to the real evolution $U_{T}(s)$ at a given time.
Beginning with the initial state $\ket{\phi(0)}$ in the subspace of interest so $(P_T(0)\ket{\phi(0)}=\ket{\psi(0)})$, the goal is to bound the \textit{error}  between $U_{T}$ and its ideal evolution $U_T^A$ by an expression of the form
\begin{equation}
\label{eq:goal}
    \left\|\left(U_T(s)-U_T^A(s)\right)\ket{\phi(0)}\right\| \leq  \left\|U_T(s)-U_T^A(s)\right\| \leq \frac{\theta}{T} ,
\end{equation}
where $\| \cdot \|$ is the spectral norm.
Proving this result shows that increasing the number of steps reduces the error.
The constant $\theta$ in \cref{eq:goal} is a constant independent of the total time $T$, but depends on the gap $\Delta(s)$ between the eigenspace of interest and the complementary eigenspace.

In \cite{DKS98} it was shown that the error is $\mathcal{O}(1/T)$, which means that there exists some constant $\theta$, but that constant and its dependence on the gap were not determined.
That is a crucial difficulty in applying the result to the QLSP,
because the gap in using the adiabatic approach to the QLSP depends on the condition number $\kappa$.
Therefore, to determine the complexity of the algorithm in terms of $\kappa$, we need to know the dependence of the error on the gap.
In particular, we will show that the error scales as $\mathcal{O}(\kappa/T)$, which means that to obtain the solution to fixed error one can use $T=\mathcal{O}(\kappa)$ steps.
Then complexity linear in $\kappa$ and logarithmic in $1/\epsilon$ can be obtained using filtering.
To show this result we cannot simply use the result as given \cite{DKS98}, and need to derive the bound for the error far more carefully in order to give the dependence on the gap.

\subsection{Our result}
\label{sec:summary_result}

Our main goal here is to provide the explicit dependence on the gap in the discrete adiabatic theorem in order to improve the version given in \cite{DKS98}. 
The starting point is to replace the general order scaling
\begin{equation}
   W_T\left(s+1/T\right)-W_T\left(s\right)\approx \mathcal{O}(T^{-1}),
   \end{equation}
with an upper bound with explicit schedule dependence, 
\begin{equation}
\label{eq:main_ass}
   \left\|W_T\left(s+1/T\right)-W_T\left(s\right)\right\|\leq \frac{c(s)}{T}.
 \end{equation}
Implicit in this definition is the assumption that the behaviour of $W_T (s)$ is sufficiently smooth that $c(s)$ can be chosen independently of $T$.
This will need to be shown for the given applications.
More generally, we will need to consider higher-order differences, which result in values of $c_k(s)$ given in the following definition.
 
 \begin{definition}[Multistep Differences]
 \label{def:difs}
 For a positive integer $k$, the $k^{\rm th}$ difference of $W_T$ is
    \begin{equation}
        D^{(k)} W_T(s) :=
       D^{(k-1)} W_T \left( s + \frac{1}{T} \right) - D^{(k-1)} W_T(s), \quad
       D^{(1)} W_T(s) := DW_T(s) =
       W_T \left( s + \frac{1}{T} \right) - W_T (s).
    \end{equation}
    For $T > 0$,
    we define the function $c_k(s)$, which is implicit dependent of $T$, such that
    \begin{equation}\label{eqn:assump1}
        \left\|D^{(k)} W_T(s)\right\| \leq \frac{c_k(s)}{T^{k}}.
    \end{equation}
We then define the $\hat{c}_k(s)$ taking into account neighbouring steps as
\begin{equation}
\label{eq:chat}
    \hat{c}_k(s) = \max_{s' \in \left\{s-1/T,s,s+1/T\right\} \cap [0,1-k/T] } c_k(s') .
\end{equation}
\end{definition}

The principle of the gap is that it separates the eigenvalues of $W_T(s)$ into two groups that depend on the time parameter $s$.
Since $W_T(s)$ is unitary, these are groups on the unit circle in the complex plane.
Because it is on the unit circle, we need to separate these groups of eigenvalues with gaps in two locations.
We will denote one set of eigenvalues as $\sigma_P(s)$ and the other as $\sigma_Q(s)$, with corresponding projectors $P_T(s)$ and $Q_T(s)$, respectively.

We also need to account for the gaps for successive operators $W_T(s)$ and $W_T(s+1/T)$.
That is, there needs to be a gap between $\sigma_P(s)\cup \sigma_P(s+1/T)$ and $\sigma_Q(s)\cup \sigma_Q(s+1/T)$.
Moreover, we need to ensure that these regions are non-interleaved.
To do this, we will define arcs that contain the eigenvalues, such that
\begin{equation}
    \sigma^{(1)}_P \supseteq \sigma_P(s)\cup \sigma_P(s+1/T), \qquad     \sigma^{(1)}_Q \supseteq \sigma_Q(s)\cup \sigma_Q(s+1/T).
\end{equation}
To rule out interleaved regions, these arcs cannot intersect, and we consider the gap between these arcs.
We are interested in the case where this only has a small effect on the gap.
In turn this means that $T$ should not be too large, so we introduce a lower bound $T^*$ on the values of $T$ allowed.
We therefore define the multistep gaps as follows.

\begin{definition}[Multistep Gap]
\label{def:gaps}
For $T \in \mathbb{N}$ and $k$ a non-negative integer, $\Delta_k(s)$ is defined to be the minimum angular distance between arcs $\sigma^{(k)}_P$ and $\sigma^{(k)}_Q$, which satisfy
\begin{equation}\label{eqn:assump2}
    \sigma^{(k)}_P \supseteq \bigcup_{l=0}^k \sigma_P(s+l/T), \qquad 
    \sigma^{(k)}_Q \supseteq \bigcup_{l=0}^k \sigma_Q(s+l/T),
\end{equation}
for $T\ge T_*$.
The gap $\Delta(s)$, which is also implicitly dependent on $T$, is then in most cases the minimum gap for three successive steps, except in the cases at the boundaries:
\begin{equation}\label{eq:minGaps}
    \Delta(s) = \begin{cases}
        \Delta_2(s), & 0 \leq s \leq 1-2/T, \\
        \Delta_1(s), & s = 1-1/T, \\
        \Delta_0(s), & s = 1.
    \end{cases}
\end{equation}
Finally, $\check{\Delta}(s)$ is an adjustment for $\Delta(s)$ at neighbouring points:
\begin{equation}
\label{eq:fhat}
    \check{\Delta}(s) = \min_{s' \in \left\{s-1/T,s,s+1/T\right\} \cap [0,1] } \Delta(s').
\end{equation}
\end{definition}

Note that we have freedom to choose larger arcs than necessary, so these are lower bounds on the gap, though we will often call them the ``gap'' for convenience.
Also, given $\Delta_2(s)$, one can always choose arcs $\sigma^{(k)}_P$ and $\sigma^{(k)}_Q$ for $k=0,1$ such that $\Delta_k(s)\ge \Delta_2(s)$.
This means that in \cref{eq:minGaps} we can simply use $\Delta_2(s)$, rather than taking the minimum of $\Delta_k(s)$ for $k \in \{0,1,2\}$.

We have proven two forms of the discrete adiabatic theorem.
One is highly complicated, so we give it explicitly in later in \cref{sec:dat1}.
Here we instead give a simplified but looser form of the discrete adiabatic theorem.

\begin{theorem}[The Second Discrete Adiabatic Theorem]\label{cor:adia}
Suppose that the operators $W_T (s)$ satisfy
 $\left\|D^{(k)}W_T(s)\right\|\leq c_k(s)/T^k$ for $k = 1,2$, as per \cref{eqn:assump1}, and $T \geq \max_{s\in [0,1]} (4\hat{c}_1(s)/\check{\Delta}(s))$, 
 Then for any time $s$, s.t., $sT \in \mathbb{N}$, we have
 \begin{align}
     \quad \|U_T(s) - U_T^{A}(s)\| 
       & \leq \frac{12\hat{c}_1(0)}{T\check{\Delta}(0)^2}+  \frac{12\hat{c}_1(s)}{T\check{\Delta}(s)^2} + \frac{6\hat{c}_1(s)}{T\check{\Delta}(s)} + 305\sum_{n=1}^{sT-1}\frac{\hat{c}_1(n/T)^2}{T^2\check{\Delta}(n/T)^3} \nonumber \\
       & \quad  + 44\sum_{n=0}^{sT-1} \frac{ \hat{c}_1(n/T)^2}{ T^2 \check{\Delta}(n/T)^2} +  32\sum_{n=1}^{sT-1}\frac{\hat{c}_2(n/T)}{T^2\check{\Delta}(n/T)^2}, 
\end{align}
where $\hat{c}_k(s)$ and $\check{\Delta}(s)$ are defined in \cref{def:difs} and \cref{def:gaps}, respectively.
\end{theorem}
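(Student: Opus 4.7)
The plan is to adapt the resolvent-based discrete adiabatic argument of DKS, but to track every occurrence of the gap $\check{\Delta}(s)$ explicitly rather than absorbing it into order symbols. The central object is a \emph{discrete intertwiner} $K_T(s)$ that solves the one-step commutator equation
\[
  W_T(s+1/T)\,K_T(s) - K_T(s)\,W_T(s) \;=\; P_T(s+1/T) - P_T(s),
\]
constructed as a contour integral
\[
  K_T(s) \;=\; \frac{1}{2\pi i}\oint_{\Gamma} \bigl(z-W_T(s+1/T)\bigr)^{-1}\bigl[P_T(s+1/T)-P_T(s)\bigr]\bigl(z-W_T(s)\bigr)^{-1}\,dz ,
\]
where $\Gamma$ is a positively oriented curve separating the arcs $\sigma^{(1)}_P$ and $\sigma^{(1)}_Q$. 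Standard resolvent estimates combined with the contour representation of $P_T(s+1/T)-P_T(s)$ give $\|K_T(s)\| \lesssim \hat{c}_1(s)/(T\,\check{\Delta}(s)^2)$: one factor of $\check{\Delta}^{-1}$ from each of the two resolvents, and one factor of $T^{-1}\hat{c}_1$ from the projector difference. The hypothesis $T\ge 4\hat{c}_1/\check{\Delta}$ enters precisely so that the resolvent at the shifted time $s+1/T$ differs from the resolvent at $s$ by at most a constant multiplicative factor, keeping these bounds finite.

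With $K_T$ in hand I would write the error as a telescoping sum,
\[
  U_T(s) - U_T^A(s) \;=\; \sum_{n=0}^{sT-1} U_T(s)\,U_T((n{+}1)/T)^{\dagger}\Bigl[W_T(n/T) - U_T^A((n{+}1)/T)U_T^A(n/T)^{\dagger}\Bigr]U_T(n/T),
\]
and use the commutator identity to replace each one-step mismatch by an expression whose principal piece is $[W_T(n/T), K_T(n/T)]$ dressed by spectral projectors. This makes each summand an exact discrete derivative plus a second-order remainder, so the discrete Abel (summation-by-parts) identity applies: the sum collapses into a \emph{boundary contribution} evaluated at $n=0$ and $n=sT$, plus a \emph{bulk sum} containing the first difference $DK_T(n/T)$ and the second difference $D^{(2)}W_T(n/T)$. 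The endpoint pieces give the three single-step terms of the bound: the two factors of $12\hat{c}_1/(T\check{\Delta}^2)$ at $s=0$ and at the final time come directly from $\|K_T\|$ at the boundaries, while the $6\hat{c}_1/(T\check{\Delta})$ contribution arises from a subleading boundary piece containing only one resolvent rather than a full intertwiner, which is why only a single power of $\check{\Delta}^{-1}$ appears.

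The remaining three sums come from the bulk, which splits after one expansion of $D(K_T P_T) = (DK_T)P_T + K_T(s+1/T)DP_T$ together with the identity $D(W_T U_T) = (DW_T)U_T + W_T(s+1/T)DU_T$. Products of two intertwiner-type factors give the $\hat{c}_1^2/(T^2\check{\Delta}^3)$ sum with coefficient $305$; mixed products of an intertwiner with a projector commutator yield the $\hat{c}_1^2/(T^2\check{\Delta}^2)$ sum with coefficient $44$; and the portion of $DK_T$ that involves the second-order walk difference, bounded by $\hat{c}_2/T^2$, produces the $\hat{c}_2/(T^2\check{\Delta}^2)$ sum with coefficient $32$. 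The main obstacle I anticipate is not the overall architecture but the explicit constant-tracking. Each time a resolvent is evaluated at a shifted time one picks up a multiplicative correction of size $\hat{c}_1/(T\check{\Delta})$, and the hypothesis $T\ge 4\hat{c}_1/\check{\Delta}$ is exactly what is needed for the resulting geometric series to sum to at most a constant (roughly $4/3$). Propagating these bounded-but-non-trivial factors through the half-dozen rearrangements required to separate the six summand families is what ultimately yields the specific integer coefficients; the cleanest route is to first establish a sharper operator-product form (the ``first'' discrete adiabatic theorem referred to in the text) and then deduce this looser corollary by triangle-inequality bounds on each factor.
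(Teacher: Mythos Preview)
Your proposal is architecturally correct and follows the same DKS resolvent-plus-summation-by-parts framework as the paper, including your closing observation that the cleanest route is via a sharper ``first theorem'' followed by simplification---that is exactly what the paper does. A few points of divergence are worth noting.

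First, the paper does not telescope $U_T - U_T^A$ directly. It works with the wave operator $\Omega_T = U_T^{A\dagger}U_T$ and the Volterra identity $\Omega_T(s) = I - T^{-1}\sum_n K_T(n/T)\Omega_T(n/T)$, where here $K_T = T(I-\Theta_T)$ is the kernel (not an intertwiner). The intertwiner role is played by $\tilde X(s)$, a contour integral with $X(s)=T(I-V_T^\dagger(s-1/T))$ inside---built from the polar-decomposition unitary $V_T$ rather than from $DP_T$ directly. Your $K_T$ with $P_T(s+1/T)-P_T(s)$ inside is a reasonable variant, but the twisted-commutator identity you wrote does not hold as an operator equation; it only holds after sandwiching by complementary projectors (your phrase ``dressed by spectral projectors'' hints at this, but it is essential, not cosmetic).

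Second, the paper makes an explicit diagonal/off-diagonal split by inserting $P_0+Q_0$ on both sides of the Volterra summand. The diagonal pieces $P_0(\cdots)P_0$ and $Q_0(\cdots)Q_0$ are bounded \emph{directly}, without summation by parts, using the projector identity $p(p-q)p = p(p-q)^2 p$; they feed into the $\hat c_1^2/\check\Delta^2$ sum. Only the off-diagonal pieces invoke the summation-by-parts lemma. Your outline blurs this distinction.

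Finally, the passage from the First to the Second theorem is purely scalar: under the hypothesis $T\ge 4\hat c_1/\check\Delta$ every argument of the auxiliary functions $\mathcal D_1,\mathcal D_2,\mathcal D_3$ lies in $[0,1/2]$, so one replaces them by linear bounds $\mathcal D_k(z)\le \xi_k z$ (or $\le\xi_1$), propagates these through the $\mathcal G_{T,j}$ expressions, and rounds the resulting combinations of $\xi_k$'s up to the integers $12,6,305,44,32$. Your anticipation that the constants arise from geometric corrections bounded by $4/3$ is on the right track but the actual mechanism is this elementary substitution rather than a geometric series.
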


Note that this theorem depends on the first and second differences, described by $\hat{c}_1(s)$ and $\hat{c}_2(s)$, respectively.
These are analogous to the first and second derivatives in the continuous form of the adiabatic theorem, so we can see that these results are analogous.
We have three single terms with $1/T$ scaling, and three sums with $1/T^2$ scaling, which gives overall scaling of the complexity as $\mathcal{O}(1/T)$.
We also have a cubic dependence in the inverse gap $1/\Delta$ in the first sum given.
In choosing the quantum walk, one would aim to schedule the variation of $W_T$ such that they vary more slowly where the gap is small, making $\hat{c}_1$ smaller. 

\section{The first adiabatic theorem}
\label{sec:adtheo}

In this section, we prove our first form of the discrete adiabatic theorem, given later as \cref{theoAdia}, and then use it to prove \cref{cor:adia}. Following the general method and notation of \cite{DKS98}, we use the \textit{wave operator}
\begin{equation}
\label{eq:waveOp}
  \Omega_T (s) := U^{A\dagger}_T(s) U_T (s) .
\end{equation}
The aim of the discrete adiabatic theorem, the first and the second, is to prove that $\Omega_T (s)$ is close to the identity because
\begin{equation}
    \left\|U_T(s)-U_T^A(s)\right\| = \left\|U_{T}^{A\dagger}(s)U_T(s)-\mathds{1}\right\| = \left\|\Omega_T (s)-\mathds{1}\right\|.
\end{equation}
In \cite{DKS98} it was shown that $\Omega_T (s) = \mathds{1} + \order{1/T}$, but we instead aim to provide the explicit bounds dependent on the gap.

To prove the bound, one can define a \emph{kernel function} $K_T (s)$ as well, which corresponds to the difference of a single step of $\Omega_T (s)$ from the identity.
The wave operator at step $n$ is then given by
        \begin{equation}
        \label{eq:waveOp_kernel}
            \Omega_T(n/T) = I - \frac{1}{T}\sum_{m=0}^{n-1} K_T(m/T)\Omega_T(m/T).
        \end{equation}
The goal is then to show that the sum is small.
This is done with a discrete form of the summation by parts formula, giving our first discrete adiabatic theorem.

\subsection{Operator definitions}

\begin{figure}
    \centering
    \includegraphics[width = 0.5\textwidth]{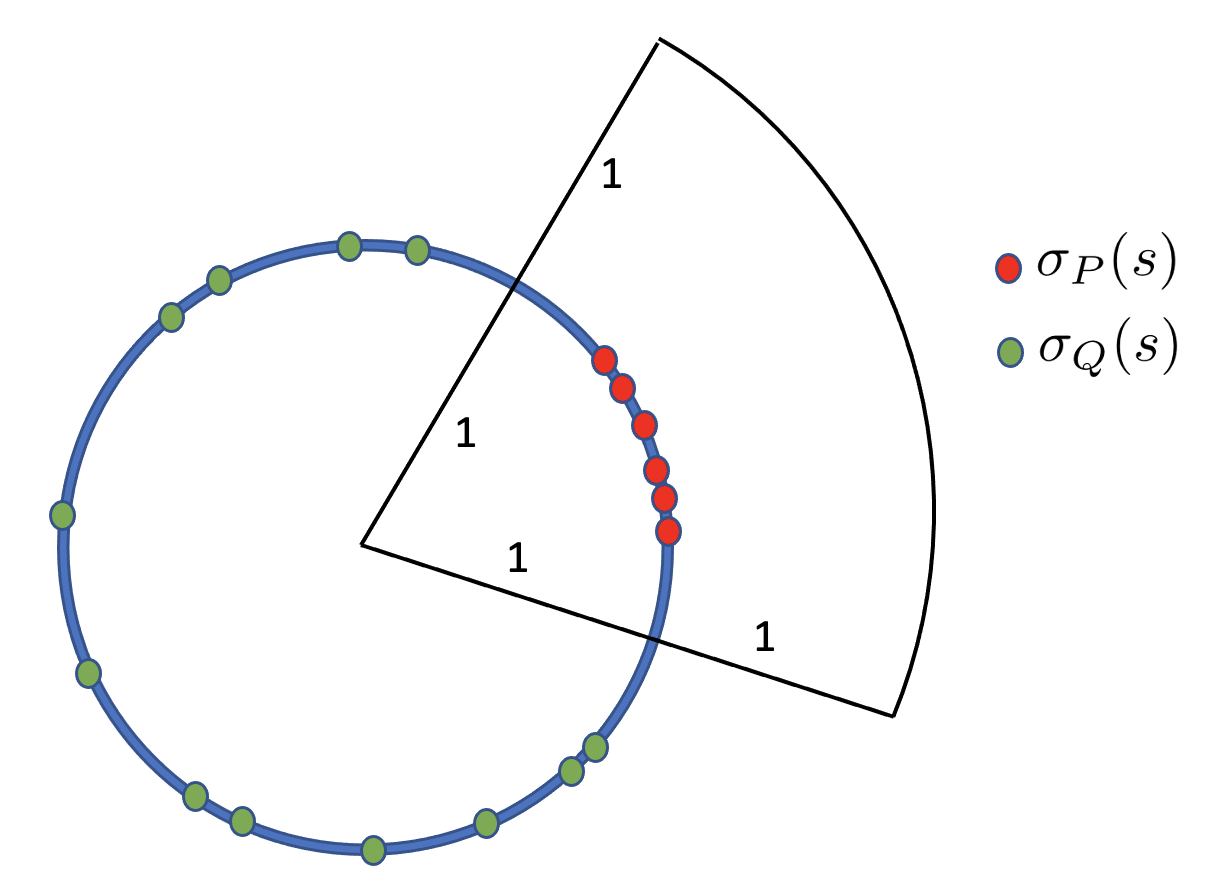}
    \caption{Illustration of the choice of the contour $\Gamma_T(s,k)$ for $k = 0$. 
    That is, we consider the eigenvalues for only a single step of the walk.
    The red dots indicate the spectrum of interest, which will often just be a single eigenvalue, for example for a ground state.
    The contour around the spectrum of interest is used to obtain a projector onto the spectrum of interest.
    For the illustration, we use a contour with radius $2$, but in practice, we take the limit that the radius goes to infinity. }
    \label{fig:contour_1}
\end{figure}

We next define the operators that are needed to understand the proof.
Let $\Gamma_T(s)$ be a sector contour enclosing the spectrum of interest, for example see \cref{fig:contour_1}.
Then the spectral projection $P_T(s)$ onto the spectrum of interest is given by the integral
\begin{equation}
\label{eq:ProjectCount}
    P_T(s)=\frac{1}{2 \pi i} \oint_{\Gamma_T(s)} R_T(s,z) dz, 
\end{equation}
where
\begin{equation}
\label{eq:resolv}
    R_T(s,z)\coloneqq\left(W_T(s)-z I\right)^{-1},
\end{equation}
is the resolvent of $W_T(s)$.
  Let 
 \begin{align}\label{eq:STdef}
        S_T(s,s')&\coloneqq P_T(s)P_T(s')+Q_T(s)Q_T(s'), \\
     v_T(s,s') &\coloneqq \sqrt{S_T(s,s')S^{\dagger}_T(s,s')} 
      = \sqrt{I-\left(P_T(s)-P_T\left(s'\right)\right)^2},
      \end{align}
      and
      \begin{equation}
      \label{eq:V}
           V_{T}(s,s') \coloneqq v_T(s,s')^{-1}S_T(s,s'),
      \end{equation}
      which is the unitary of the left polar decomposition of $S_T(s,s')$ (see Eq.~(11) of \cite{DKS98}). 
      We use the shorthand notations $S_T(s) = S_T(s+1/T,s)$, $v_T(s) = v_T(s+1/T,s)$, $V_T(s) = V_T(s+1/T,s)$, and define 
      (see Eqs.~(7) and (10) of \cite{DKS98})
    \begin{align}
    \label{eq:Wa}
         W_T^A(s) & \coloneqq V_T(s)W_T(s), \\
     U_T^A\left(s+\frac{1}{T}\right) & \coloneqq W_T^{A}(s)U_T^A(s),\\
     U^A_T(0) & \coloneqq I.
    \end{align}
    It can be checked from the definition that $V_T(s)$ is a unitary operator, and thus $W_T^{A}$ and $U_T^A$ are unitary. 
    In fact, $W_T^{A}$ is exactly the adiabatic walk operator, and the corresponding $U_T^{A}$ is the corresponding adiabatic evolution operator.

    To describe the proof we use the \emph{wave operator}
    \begin{equation}\label{eq:Omegadef}
        \Omega_T(s) \coloneqq U_T^{A\dagger}(s)U_T(s),
    \end{equation}
    which describes the difference between the actual evolution given by $U_T(s)$ and the ideal adiabatic evolution $U_T^{A}(s)$.
    To demonstrate that the evolution is close to adiabatic, we should have $\Omega_T(s)$ close to $I$.
    The \emph{ripple operator} is defined as 
    \begin{equation}\label{eq:ripple}
        \Theta_T (s) \coloneqq \Omega_T (s + 1/T) \Omega_T^\dagger (s), 
    \end{equation}
    so the wave operator is a product of ripple operators for each time step.
    The kernel function is defined as 
    \begin{equation}
    \label{eq:K}
        K_T(s) \coloneqq T( I - \Theta_T(s)),
    \end{equation}
    and should be close to zero for the evolution to be close to adiabatic.
    
\subsection{Sequence of Lemmas}
    Next we give an outline of the lemmas that will be proven and how they fit together to provide the final theorem.
    In the proof of the discrete adiabatic theorems, we keep with the the definition giving in \cref{eq:waveOp_kernel} for the wave operator. 
    The goal is then to upper bound the sum to show that $\Omega_T(n/T)$ is close to the identity, and therefore the evolution is close to adiabatic.
    In the sum we substitute the identity being equal to the sum of projections onto the desired subspace and the orthogonal subspace.
    That gives us four sums.
    Two of these are ``diagonal'' sums with two projections onto the same subspace, and two are ``off-diagonal'' sums with two different projections.
    
    The diagonal sums are relatively easily bounded, whereas for the off-diagonal sums are more difficult,
    for those we use the ``summation by parts formula'' in \cref{lem:sum_by_parts}.
    In that Lemma we use operators $X$ and $Y$, which will be taken to be $T(I-V_T^\dagger)$ and $\Omega_T$.
    In \cref{lem:sum_by_parts} we define operators $\tilde X$, $A$, $B$ and $Z$.
    The operator, $\tilde X$, is defined as a contour integral including $X$, then we have $B$ and $Z$ defined in terms of $\tilde X$.

    The sequence of lemmas used to prove the first adiabatic theorem (\cref{theoAdia}) are listed below.
    \begin{itemize}
        \item In \cref{lem:P} we bound the norms $\|DP_T\|$ and $\|D^{(2)}P_T\|$.
        The quantity $DP_T(s)$ is the difference in $P_T$ at successive time steps, and $D^{(2)}P_T$ is the difference in $DP_T$.
        These quantities are bounded in terms of the bounds on $DW_T$ and $D^{(2)}W_T$.
        \item \cref{lem:V_v2} gives $V_T$ in terms of $P_T$ and a new operator $\mathcal{F}_T$.
        \item \cref{lem:V_bound2} uses $\mathcal{F}_T$ to place an upper bound on the norm of $V_T-I$.
        Because $X$ is $T(I-V_T^\dagger)$, that enables us to place an upper bound on $X$.
        \item \cref{lem:DV_v2} provides an upper bound on the norm of $D\mathcal{F}_T$, which enables us to place an upper bound on $DV_T$.
        This uses the upper bounds on $\|DP_T\|$ and $\|D^{(2)}P_T\|$ from \cref{lem:P}.
        \item \cref{lem:WAv3} places an upper bound on $\|DW_T^A\|$ using the upper bound on $DV_T$ from \cref{lem:DV_v2}.
        Recall that $W_T^A$ is the ideal step for adiabatic evolution.
        \item \cref{lem:DOmega2} places an upper bound on $\|D\Omega_T\|$ using the upper bound on $\|V_T-I\|$ from \cref{lem:V_bound2}.
        \item \cref{lem:Xt} places upper bounds on $\|\tilde X\|$ and $\|D\tilde X\|$ in terms of $\|X\|$ and $\|DX\|$.
        Recall that $X$ will correspond to $T(I-V^\dagger)$.
        \item \cref{lem:ABZ2_v2} places upper bounds on the norms of the $A$, $B$ and $Z$ operators.
        The bound on $\|A\|$ uses the bound on $\|V_T-I\|$ from \cref{lem:V_bound2}.
        The bounds on $\|B\|$ and $\|Z\|$ use the bounds on $\|\tilde X\|$ and $\|D\tilde X\|$ from \cref{lem:Xt} as well as the bound on $\|DW_T^A\|$ from \cref{lem:WAv3}.
    \end{itemize}
    Finally, the bounds on $\|A\|$, $\|B\|$ and $\|Z\|$ from \cref{lem:ABZ2_v2} are used in the summation by parts formula in \cref{lem:sum_by_parts} to prove \cref{theoAdia}.
    
    \subsection{Properties of operators}
    Before giving the detailed lemmas, we provide some properties of the operators from \cite{DKS98}.
    First we give properties of the adiabatic operators and the projectors onto the subspaces.
    \begin{proposition}
        For any integers $T, n,m$ and the corresponding discrete time $s = n/T, s' = m/T$, we have $W_T^{A}(s)$ and $U_T^A(s)$ are unitary, and
\begin{align}\label{eq:projU}
                U_T^A(s) P_T(0) &= P_T(s)U_T^A(s), \\
                P_T(s+1/T)W_T^A(s) &= W_T^A(s)P_T(s), \\
  P_T (s) U_T (s) U_T^\dagger (s') P_T(s') &= 
    P_T (s) v_{T} (s, s') U_T^A (s)
    {U_T^A}^\dagger(s') P_T(s'), \label{eq:DKS_ansatz_P} \\
  Q_T (s) U_T (s) U_T^\dagger (s') Q_T(s') &= 
    Q_T (s) v_{T} (s, s') U_T^A (s)
    {U_T^A}^\dagger(s') Q_T(s'), \label{eq:DKS_ansatz_Q} \\
\label{eq:MainEq}
  P_T (s + 1/T) W_T (s) P_T (s) &=
    P_T (s + 1/T) v_T (s) W_T^A (s) P_T (s),\\
  Q_T (s + 1/T) W_T (s) Q_T (s) &=
    Q_T (s + 1/T) v_T (s) W_T^A (s) Q_T (s).
\end{align}
    \end{proposition}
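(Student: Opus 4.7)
The plan is to dispose of unitarity and the intertwining identities (\cref{eq:projU} together with the analogous $W_T^A$ identity sitting immediately after it) first, then derive the one-step block identities \cref{eq:MainEq} and its $Q$-analogue by direct computation, and finally attack the multi-step ``ansatz'' identities \cref{eq:DKS_ansatz_P,eq:DKS_ansatz_Q} by a joint induction. Unitarity of $V_T(s,s')$ is immediate from its definition as the unitary factor in the polar decomposition $S_T = v_T V_T$, so $W_T^A(s) = V_T(s) W_T(s)$ and, inductively, $U_T^A(s)$ are unitary. The key lemma for the intertwining is that $P_T(s)$ commutes with $v_T(s,s')$: since $v_T(s,s')^2 = I - (P_T(s) - P_T(s'))^2$, a short expansion shows $P_T(s) (P_T(s) - P_T(s'))^2 = P_T(s) - P_T(s) P_T(s') P_T(s) = (P_T(s) - P_T(s'))^2 P_T(s)$, so commutation with the positive square root $v_T$ follows. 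Combined with $S_T(s,s') P_T(s') = P_T(s) P_T(s') = P_T(s) S_T(s,s')$, this yields $V_T(s,s') P_T(s') = P_T(s) V_T(s,s')$. Using $W_T(s) P_T(s) = P_T(s) W_T(s)$ (spectral projector commutation), the intertwining $P_T(s+1/T) W_T^A(s) = W_T^A(s) P_T(s)$ is then immediate, and \cref{eq:projU} follows by induction on $sT$ from the recursion $U_T^A(s+1/T) = W_T^A(s) U_T^A(s)$.

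\textbf{One-step block identities.} The proof of \cref{eq:MainEq} is a short calculation. From $V_T = v_T^{-1} S_T$ one has $v_T(s) W_T^A(s) = S_T(s) W_T(s)$. Multiplying by $P_T(s+1/T)$ on the left and using $P_T(s+1/T) S_T(s) = P_T(s+1/T)[P_T(s+1/T) P_T(s) + Q_T(s+1/T) Q_T(s)] = P_T(s+1/T) P_T(s)$ gives $P_T(s+1/T) v_T(s) W_T^A(s) P_T(s) = P_T(s+1/T) P_T(s) W_T(s) P_T(s) = P_T(s+1/T) W_T(s) P_T(s)$, where the last equality uses $P_T(s) W_T(s) P_T(s) = W_T(s) P_T(s)$. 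The $Q$-analogue of \cref{eq:MainEq} follows by the same argument with $P \leftrightarrow Q$.

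\textbf{Main obstacle: the ansatz identities.} I would prove \cref{eq:DKS_ansatz_P,eq:DKS_ansatz_Q} by joint induction on $n - n' = (s - s')T$, with the base case $s = s'$ trivial since $v_T(s, s) = I$. For the inductive step at \cref{eq:DKS_ansatz_P}, applying $W_T(s)$ on the left and inserting $I = P_T(s) + Q_T(s)$ splits the left-hand side into a ``diagonal'' term $P_T(s+1/T) W_T(s) P_T(s) U_T(s) U_T^\dagger(s') P_T(s')$ and a ``cross'' term $P_T(s+1/T) W_T(s) Q_T(s) U_T(s) U_T^\dagger(s') P_T(s')$. The diagonal term simplifies via \cref{eq:MainEq}, the inductive hypothesis, the commutation $[v_T(s,s'), P_T(s)] = 0$, and the intertwining \cref{eq:projU} to $P_T(s+1/T) v_T(s) W_T^A(s) v_T(s, s') U_T^A(s) U_T^{A\dagger}(s') P_T(s')$, whereas the target right-hand side at $s+1/T$ is $P_T(s+1/T) v_T(s+1/T, s') W_T^A(s) U_T^A(s) U_T^{A\dagger}(s') P_T(s')$. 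The hard part is showing that the cross term, which by $[W_T(s), Q_T(s)] = 0$ equals $P_T(s+1/T) Q_T(s) U_T(s+1/T) U_T^\dagger(s') P_T(s')$ and carries the $\mathcal{O}(1/T)$ cross projector $P_T(s+1/T) Q_T(s) = P_T(s+1/T)(P_T(s+1/T) - P_T(s))$, precisely accounts for the mismatch $P_T(s+1/T)[v_T(s+1/T, s') W_T^A(s) - v_T(s) W_T^A(s) v_T(s, s')]$. This exact cancellation is the substance of the ``DKS ansatz'' of \cite{DKS98}; I would establish it by unwinding the factorisation $W_T(s) = S_T(s)^\dagger v_T(s)^{-1} W_T^A(s)$ and exploiting the commutations of $v_T$ with $P_T(s')$, $P_T(s)$, $P_T(s+1/T)$ in tandem with the $Q$-side inductive hypothesis from \cref{eq:DKS_ansatz_Q}, so that the two identities close together.
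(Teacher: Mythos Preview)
The paper gives no proof here; it simply refers the reader to Eqs.~(8)--(9) of \cite{DKS98}. Your arguments for unitarity, the two intertwining relations, and the one-step block identities \cref{eq:MainEq} and its $Q$-analogue are correct and are the standard ones.

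The multi-step relations \cref{eq:DKS_ansatz_P,eq:DKS_ansatz_Q}, however, are \emph{false} as stated for $|s-s'|>1/T$, so no induction of the kind you describe can close. A two-dimensional counterexample: take $W_T(0)=W_T(2/T)=\sigma_z$ and $W_T(1/T)=\sigma_x$, with $P_T$ tracking the $+1$ eigenspace, so $P_T(0)=P_T(2/T)=\ketbra{0}{0}$ and $P_T(1/T)=\ketbra{+}{+}$. Then $v_T(2/T,0)=I$; one checks directly that $W_T^A(0)=W_T^A(1/T)=H$ (the Hadamard), hence $U_T^A(2/T)U_T^{A\dagger}(0)=H^2=I$, and the right side of \cref{eq:DKS_ansatz_P} equals $\ketbra{0}{0}$. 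But the left side is $\ketbra{0}{0}\,\sigma_x\sigma_z\,\ketbra{0}{0}=0$ since $\bra{0}\sigma_x\sigma_z\ket{0}=0$. The obstacle you flagged in the inductive step---that the cross term $P_T(s+1/T)Q_T(s)U_T(s)U_T^\dagger(s')P_T(s')$ carries $P_T(s')$ on the right, so neither the $P$- nor the $Q$-side hypothesis applies to it---is therefore a genuine obstruction, not a defect of your technique. The two-time relation in \cite{DKS98} is the motivating \emph{ansatz} used to pin down $W_T^A$ so that it holds exactly for \emph{adjacent} steps; its approximate validity for general $s,s'$ is precisely the content of the adiabatic theorem being proved. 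Neither of these two labels is invoked anywhere else in the paper, so the overstatement has no downstream consequence.
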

    
See Eqs.~(8) to (9) and the accompanying discussion in \cite{DKS98} for explanation of these properties.
    Next we consider properties of the \emph{wave operator} $\Omega_T(s)$, the \emph{ripple operator} $\Theta_T (s)$ and the kernel function $K_T(s)$.
    One can simply prove that the ripple operator is a rotation of the operator $V_T$, and $\Omega_T$ satisfies a discrete form of the Volterra equation. 
    The key results are as in the following proposition, which is equivalent to Eqs.~(19) and (20) from \cite{DKS98}.
    \begin{proposition}\label{prop:volterra}
        For any integer $T,n$ and the discrete time $s = n/T$, we have 
        \begin{equation}\label{eqn:Theta_V}
      \Theta_T (s) =
      {U_T^A}^\dagger (s + 1/T) V_T^\dagger (s) {U_T^A} (s + 1/T).
      \end{equation}
        and the Volterra equation
        \begin{equation}\label{eqn:Volterra}
            \Omega_T(n/T) = I - \frac{1}{T}\sum_{m=0}^{n-1} K_T(m/T)\Omega_T(m/T).
        \end{equation}
    \end{proposition}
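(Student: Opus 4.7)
The plan is to verify both identities by direct manipulation of the definitions \eqref{eq:Wa}--\eqref{eq:K}, essentially as algebraic consequences of the one-step recursions for $U_T$ and $U_T^A$.

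For \eqref{eqn:Theta_V}, I would start from the definition $\Theta_T(s) = \Omega_T(s + 1/T) \Omega_T^\dagger(s)$ and expand via $\Omega_T = U_T^{A\dagger} U_T$ to obtain
\begin{equation}
    \Theta_T(s) = U_T^{A\dagger}(s + 1/T) \bigl[ U_T(s + 1/T) U_T^\dagger(s) \bigr] U_T^A(s).
\end{equation}
The actual-evolution recursion collapses the bracketed factor to the walk operator at the corresponding step. Meanwhile, the adiabatic recursion $U_T^A(s+1/T) = V_T(s) W_T(s) U_T^A(s)$ coming from \eqref{eq:Wa} can be inverted to express that walk operator as $W_T(s) = V_T^\dagger(s) U_T^A(s+1/T) U_T^{A\dagger}(s)$. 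Substituting this expression into the bracket and cancelling the adjacent pair $U_T^{A\dagger}(s) U_T^A(s) = I$ yields \eqref{eqn:Theta_V} in a single line.

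For the Volterra identity \eqref{eqn:Volterra}, I would first observe that the ripple-operator definition \eqref{eq:ripple} combined with unitarity of $\Omega_T$ gives $\Theta_T(s) \Omega_T(s) = \Omega_T(s+1/T)$, so the kernel definition $K_T = T(I - \Theta_T)$ from \eqref{eq:K} rearranges into the first-order discrete evolution
\begin{equation}
    \Omega_T(s + 1/T) - \Omega_T(s) = -\frac{1}{T} K_T(s) \Omega_T(s).
\end{equation}
Telescoping this identity over $s = 0, 1/T, \dots, (n-1)/T$, together with the trivial initial condition $\Omega_T(0) = U_T^{A\dagger}(0) U_T(0) = I$, produces \eqref{eqn:Volterra} immediately.

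Both parts are essentially bookkeeping, so I do not expect any substantive obstacle. The only point requiring care is the indexing convention: the shorthand $V_T(s) = V_T(s+1/T, s)$ and the composite $W_T^A(s) = V_T(s) W_T(s)$ are defined so as to transport objects from time $s$ to time $s + 1/T$, and the actual walk operator paired with $V_T^\dagger(s)$ in the inversion step must carry the matching index. Once this convention is tracked consistently, both identities in the proposition fall out in a couple of lines each.
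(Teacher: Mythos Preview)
Your proposal is correct and is exactly the natural direct verification from the definitions; the paper itself does not supply a proof but defers to \cite{DKS98}, and your argument is the standard one. Your caveat about the indexing is well taken: the identity \eqref{eqn:Theta_V} requires the convention $U_T(s+1/T)=W_T(s)U_T(s)$ (as in the product formula in \cref{theoAdia}), which you use consistently.
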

    
    Finally, we provide the definitions of the $\Tilde{X}$, $A$, $B$ and $Z$ operators.
    They are 
     \begin{align}
    \label{eq:Xtilde}
        \Tilde{X}(s) & \coloneqq -\frac{1}{2\pi i}\oint_{\Gamma_T(s)}R_T(s,z)X(s)R_T(s,z)dz,\\
    \label{eq:A}
        A(s) & \coloneqq \left(V_T^{\dagger}\left(s\right)-I\right)W_T^A(s),\\
    \label{eq:B}
         B(s) & \coloneqq D\Tilde{X}\left(s\right)W_T^A\left(s\right)+DW_T^A\left(s-1/T\right)\Tilde{X}\left(s\right),\\
    \label{eq:Z}
        Z(s) & \coloneqq T\left(\left[A\left(s\right),\Tilde{X}\left(s\right)\right]+B\left(s\right)\right),
    \end{align}
    where $R_T(s,z)$ is defined in \cref{eq:resolv}.
    At this point the intuition behind these operators is not clear, but we will see later that these operators are related to a summation by parts formula and can simplify the notation.
    The definitions of $\Tilde{X}(s)$ and $Z(s)$ are equivalent to Eqs.~(21) and (25) of \cite{DKS98}, and $A(s)$ and $B(s)$ are defined in unnumbered equations in the proof of Theorem 1 of that work.

\subsection{Bounding the operators}

Here we show the bounds for operators of interest with explicit dependence in terms of the gap. 
A key part of the method is that we will need to consider a contour $\Gamma_T(s)$ that encloses the spectrum of interest for successive steps of the walk.
In particular, we will use the notation $\Gamma_T(s,k)$ to indicate a contour that encloses the spectrum of interest for $k+1$ successive steps of the walk.
Moreover, for $\Gamma_T(s,k)$ we will take the specific contour that passes in straight lines from the center through the gaps in the spectrum, as shown in \cref{fig:contour_1} and \cref{fig:contour_2}.
Those figures indicate that the contour is closed by an arc at radius 2.
We will take the closure of the contour to be at a distance that approaches infinity for the contours $\Gamma_T(s,k)$.
The results can be obtained by taking the closure at a finite radius then taking that radius to infinity, but for simplicity of the explanation we will not give that limit explicitly except for one illustrative example.
Note that we will only take this limit when the integrand approaches zero more quickly than $1/|z|$.
That will be true for all the contour integrals we consider except that for $P_T(s)$.

\begin{figure}
    \centering
    \includegraphics[width = 0.5\textwidth]{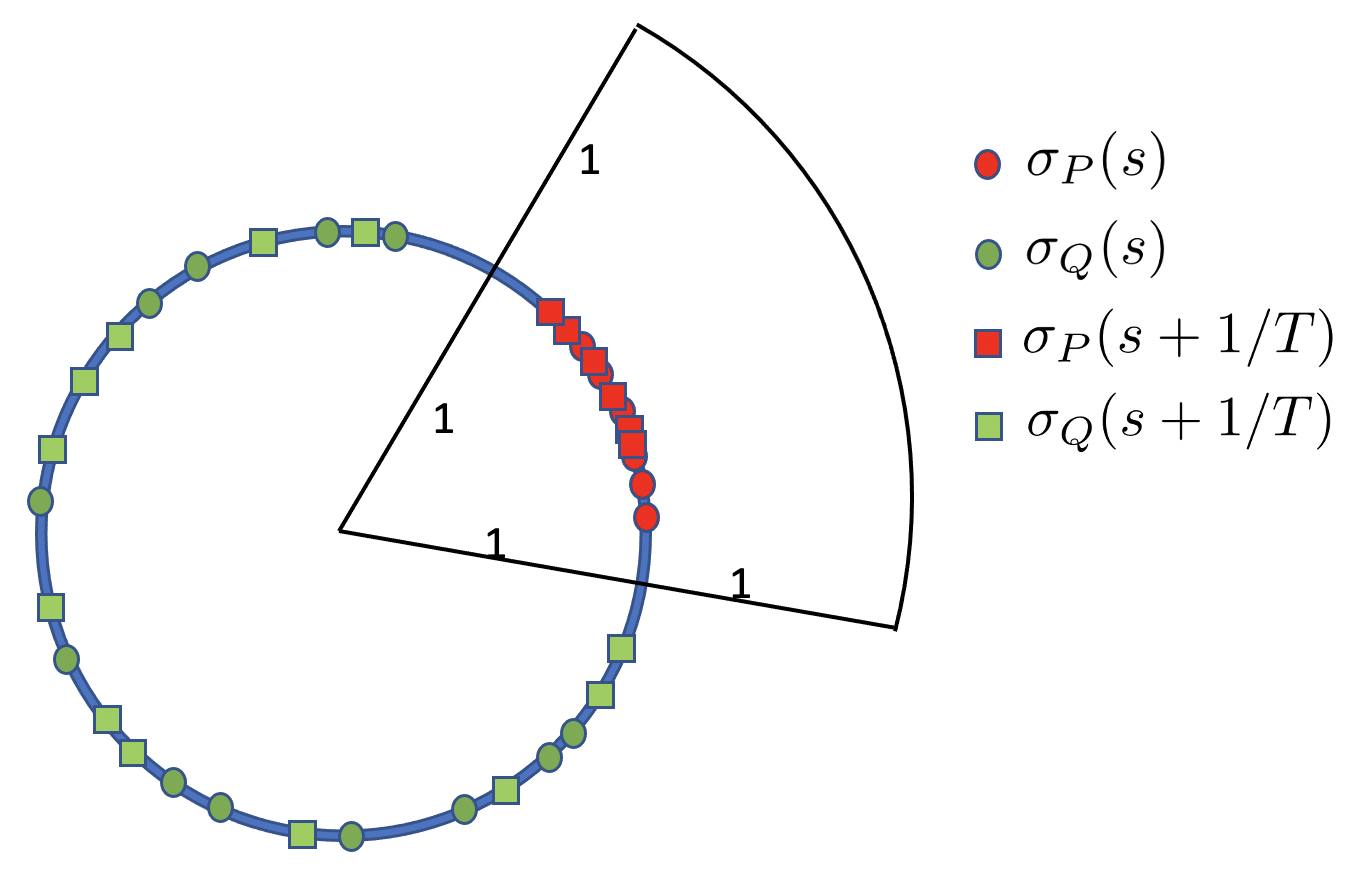}
    \caption{Illustration of the choice of the contour $\Gamma_T(s,k)$ for $k = 1$.
    That is, there are two successive steps of the walk, and we would need to consider the spectrum for both.
    We need to be able to use a contour that separates out the spectrum of interest for both steps of the walk.
    This ensures that we have projectors onto the spectrum of interest that are consistent for both steps, with a gap between the contour and the eigenvalues.
    We do not allow eigenvalues of interest to cross the gap between one step and the next.
    Again we show a contour with radius $2$, but we would take the infinity limit of the radius. }
    \label{fig:contour_2}
\end{figure}

We start with the bounds for $DP_T$ and $D^{(2)}P_T$, which can be obtained by direct calculations from the definitions. 
\begin{lemma}\label{lem:P}
    For any integer $T$ and $n$ and the corresponding discrete time $s=n/T$ we have 
    \begin{equation}
    \label{eq:Dp}
        \|DP_T(s)\|\leq \frac{2c_1(s)}{T\Delta_1(s)},
    \end{equation}
    and 
    \begin{equation}\label{eqn:D2P}
        \|D^{(2)}P_T(s)\| \leq \frac{\mathcal{G}_{T,1}(s)}{T^2}, 
    \end{equation}
    with
    \begin{equation}\label{eq:GT1}
        \mathcal{G}_{T,1}(s) \coloneqq \frac{ c_1(s)^2+c_1(s)c_1(s+1/T)}{\pi (1-\cos(\Delta_2(s)/2))}  + \frac{2c_2(s)}{\Delta_2(s)}.
    \end{equation}
    \end{lemma}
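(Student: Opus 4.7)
The plan is to derive both bounds from the spectral projection formula $P_T(s) = (2\pi i)^{-1}\oint_{\Gamma_T(s,k)} R_T(s,z)\,dz$ combined with the resolvent identity, taking the contour to be two straight rays from the origin through the two angular gaps and closed by an arc at infinity whose contribution vanishes (the integrands that appear decay at least as $|z|^{-2}$). The key geometric input is that on each ray the distance from a point $z = re^{i\theta}$ to any spectral point is bounded below by $d(r) = \sqrt{r^2 - 2r\cos(\Delta_k(s)/2) + 1}$, since by construction the ray is angularly separated from the nearest spectral point by at least $\Delta_k(s)/2$. This reduces both parts of the proof to scalar integrals in $r$.

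For $\|DP_T(s)\|$, I will use the common contour $\Gamma_T(s,1)$ (valid because the projector is unchanged under deformation of the contour as long as the intended spectrum remains enclosed) to write
\begin{equation*}
DP_T(s) = -\frac{1}{2\pi i}\oint_{\Gamma_T(s,1)} R_T(s+1/T,z)\, DW_T(s)\, R_T(s,z)\, dz,
\end{equation*}
after applying the first-order resolvent identity. Taking norms, using $\|DW_T(s)\| \leq c_1(s)/T$, and accounting for the two rays leaves $\int_0^\infty dr/d(r)^2$, which evaluates in closed form to $(\pi - \Delta_1(s)/2)/\sin(\Delta_1(s)/2)$. The elementary inequality $\sin x \geq x - x^2/\pi$ on $[0,\pi/2]$ (which follows from comparing first and second derivatives of both sides at $0$) then converts this to the target bound $2c_1(s)/(T\Delta_1(s))$.

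For $\|D^{(2)}P_T(s)\|$ the template is the same but the algebra is more delicate, and I expect this to be the main obstacle. The task is to decompose the second difference of resolvents so that the three resulting terms carry the coefficient combinations $c_1(s)^2$, $c_1(s)c_1(s+1/T)$ and $c_2(s)$, rather than the symmetric $c_1(s)^2$, $c_1(s+1/T)^2$ and $c_2(s)$ that a naive expansion produces; only the former matches $\mathcal{G}_{T,1}$. Applying the resolvent identity in both consecutive steps and then adding and subtracting the hybrid term $R_T(s+2/T,z)\,DW_T(s)\,R_T(s+1/T,z)$ yields the identity
\begin{align*}
&R_T(s+2/T,z) - 2R_T(s+1/T,z) + R_T(s,z) \\
&\quad = R_T(s+1/T,z)\, DW_T(s)\, R_T(s+1/T,z)\, DW_T(s)\, R_T(s,z) \\
&\quad\quad + R_T(s+2/T,z)\, DW_T(s+1/T)\, R_T(s+1/T,z)\, DW_T(s)\, R_T(s+1/T,z) \\
&\quad\quad - R_T(s+2/T,z)\, D^{(2)}W_T(s)\, R_T(s+1/T,z),
\end{align*}
which is the structure required.

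The remaining step is scalar integration along $\Gamma_T(s,2)$. The two triple-resolvent terms reduce to $\int_0^\infty dr/d(r)^3$; via the substitution $r - \cos(\Delta_2(s)/2) = \sin(\Delta_2(s)/2)\tan\theta$ this evaluates to $(1+\cos(\Delta_2(s)/2))/\sin^2(\Delta_2(s)/2) = 1/(1-\cos(\Delta_2(s)/2))$. The $D^{(2)}W_T$ term reduces to $\int_0^\infty dr/d(r)^2$, which is bounded by $2/\Delta_2(s)$ by exactly the same argument as in the first half. Collecting the $1/(2\pi)$ prefactor together with the factor of two from the two rays, multiplying each term by the respective norm bounds $c_1(s)^2/T^2$, $c_1(s)c_1(s+1/T)/T^2$ and $c_2(s)/T^2$, and summing reproduces the claimed formula for $\mathcal{G}_{T,1}(s)$.
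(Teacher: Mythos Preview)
Your proposal is correct and follows essentially the same route as the paper: the contour-integral representation of $P_T$, the resolvent identity, the sector contour through the two gaps with the arc pushed to infinity, and reduction to the same scalar integrals $\int_0^\infty dr/d(r)^2$ and $\int_0^\infty dr/d(r)^3$ with $d(r)^2=(r-\cos(\Delta_k/2))^2+\sin^2(\Delta_k/2)$, bounded by the same elementary inequality $(\pi-x)/\sin x\le\pi/x$.

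The only noteworthy difference is the algebra in the second-difference step. The paper splits by adding and subtracting $R_T(s{+}1/T,z)\,DW_T(s{+}1/T)\,R_T(s{+}1/T,z)$ and then $R_T(s{+}1/T,z)\,DW_T(s)\,R_T(s{+}1/T,z)$, which (as written) produces triple-resolvent terms carrying $DW_T(s{+}1/T)$ twice and $DW_T(s)$ twice. Your choice of the hybrid $R_T(s{+}2/T,z)\,DW_T(s)\,R_T(s{+}1/T,z)$ instead is a cleaner split: it directly yields one term with two copies of $DW_T(s)$ and one with the mixed pair $DW_T(s{+}1/T)\,\cdots\,DW_T(s)$, so the coefficients $c_1(s)^2+c_1(s)c_1(s{+}1/T)$ in $\mathcal G_{T,1}$ fall out without further comment. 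Both decompositions are equivalent at the level of the bound, but yours matches the stated form of $\mathcal G_{T,1}$ on the nose.
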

    See \cref{ap:proofDP} for the lengthy proof of this result.
Now we move on to bounding the finite difference of the kernel function and the adiabatic walk operator. 
The key here is to express and bound the operator $V_T$, because it is related to both the kernel and the adiabatic walk operator.
First, we re-express $V_T$ in terms of $P_T$.

\begin{lemma}\label{lem:V_v2}
 For a discrete time $s$, we have  
   \begin{equation}\label{eqn:def_remainder_V}
       V_T(s) = \mathcal{F}_T(s)\left[I + DP_T(s)(2P_T(s)-I)\right], 
   \end{equation}
where
\begin{equation}
\label{eq:def_F}
    \mathcal{F}_T(s) \coloneqq \left[I-\left(DP_T(s)\right)^2 \right]^{-1/2}.
\end{equation}
\end{lemma}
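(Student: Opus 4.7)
The plan is to reduce the claim to a direct algebraic identity for $S_T(s)$, since by definition $V_T(s)=v_T(s)^{-1}S_T(s)$ and $v_T(s)^{-1}=\bigl[I-(DP_T(s))^2\bigr]^{-1/2}=\mathcal{F}_T(s)$ (cf.\ \cref{eq:STdef,eq:V,eq:def_F}). Thus the only substantive content is the identity
\begin{equation}
    S_T(s) \;=\; I + DP_T(s)\bigl(2P_T(s)-I\bigr).
\end{equation}

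First I would expand $S_T(s)=P_T(s+1/T)P_T(s)+Q_T(s+1/T)Q_T(s)$ using $Q_T=I-P_T$, obtaining
\begin{equation}
    S_T(s)=I-P_T(s)-P_T(s+1/T)+2P_T(s+1/T)P_T(s).
\end{equation}
Next I would expand the right-hand side of the claimed identity, using $DP_T(s)=P_T(s+1/T)-P_T(s)$ and the idempotency $P_T(s)^2=P_T(s)$:
\begin{equation}
    I+DP_T(s)(2P_T(s)-I) = I+2P_T(s+1/T)P_T(s)-P_T(s+1/T)-2P_T(s)+P_T(s).
\end{equation}
These two expressions match, so $S_T(s)=I+DP_T(s)(2P_T(s)-I)$.

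Multiplying on the left by $v_T(s)^{-1}=\mathcal{F}_T(s)$ then gives $V_T(s)=\mathcal{F}_T(s)\bigl[I+DP_T(s)(2P_T(s)-I)\bigr]$, as required. The only point that merits a brief comment is that $\mathcal{F}_T(s)$ is well-defined: since $DP_T(s)$ is a difference of orthogonal projections its square satisfies $(DP_T(s))^2\preceq I$, with strict inequality under the gap assumption (so that $P_T$ varies continuously enough that $\|DP_T(s)\|<1$, which follows from \cref{lem:P} combined with the hypothesis $T\ge 4\hat{c}_1(s)/\check{\Delta}(s)$). Since the argument is a one-line algebraic verification followed by invoking the definition of $v_T$, there is no real obstacle to overcome; the lemma's role is simply to repackage $V_T$ in a form where its leading $O(1/T)$ deviation from $I$ is manifest, which will be the starting point for the bounds in \cref{lem:V_bound2} and \cref{lem:DV_v2}.
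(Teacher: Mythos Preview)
Your proof is correct and follows essentially the same route as the paper: both expand $S_T(s)=P_T(s+1/T)P_T(s)+Q_T(s+1/T)Q_T(s)$ using $Q_T=I-P_T$ and idempotency of $P_T$ to obtain $I+DP_T(s)(2P_T(s)-I)$, then apply $v_T(s)^{-1}=\mathcal{F}_T(s)$. Your added remark on the well-definedness of $\mathcal{F}_T(s)$ is not in the paper's proof of this lemma, but it is a reasonable aside.
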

\begin{proof}
    By the definition of $V_T(s)$ in \cref{eq:V}, 
\begin{align}
\label{eq:Vtaprox}
  V_T(s)&=\left[I-\left(P_T(s)-P_T\left(s+1/T\right)\right)^2\right]^{-1/2} S_T\left(s+1/T,s\right) \nonumber\\
  &=\left[I-\left(P_T(s)-P_T\left(s+1/T\right)\right)^2\right]^{-1/2} \left[I-P_T(s)-P_T\left(s+1/T\right)+2P_T\left(s+1/T\right) P_T(s)\right] \nonumber\\
  &=\left[I-\left(P_T(s)-P_T\left(s+1/T\right)\right)^2 \right]^{-1/2} \left[I+\left(P_T\left(s+1/T\right)-P_T(s)\right)\left(2P_T(s)-I\right)\right]\nonumber \\
  &=\mathcal{F}_T(s)\left[I + DP_T(s)(2P_T(s)-I)\right].
\end{align}
\end{proof}

That enables us to bound the difference of $V_T$ from the identity.

\begin{lemma}\label{lem:V_bound2}
    For a discrete time $s$  
    \begin{equation}
       \|V_T(s) - I\| \leq \|\mathcal{F}_T(s) - I\| +  \|DP_T(s)\|\|\mathcal{F}_T(s) \|.
   \end{equation}
\end{lemma}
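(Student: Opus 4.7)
The plan is straightforward: I would start from the explicit factorisation of $V_T(s)$ provided by the previous lemma, namely
\begin{equation}
    V_T(s) = \mathcal{F}_T(s)\bigl[I + DP_T(s)(2P_T(s)-I)\bigr],
\end{equation}
and rearrange to obtain $V_T(s) - I$ as a sum of two pieces, one involving $\mathcal{F}_T(s)-I$ and one involving the product $\mathcal{F}_T(s)\, DP_T(s)(2P_T(s)-I)$. Concretely, add and subtract $I$ inside the bracket and distribute $\mathcal{F}_T(s)$, giving
\begin{equation}
    V_T(s) - I = \bigl(\mathcal{F}_T(s) - I\bigr) + \mathcal{F}_T(s)\, DP_T(s)\bigl(2P_T(s)-I\bigr).
\end{equation}

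Next I would apply the triangle inequality to split these two contributions and then submultiplicativity of the spectral norm to break the second term apart:
\begin{equation}
    \|V_T(s) - I\| \leq \|\mathcal{F}_T(s) - I\| + \|\mathcal{F}_T(s)\|\,\|DP_T(s)\|\,\|2P_T(s)-I\|.
\end{equation}

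The final observation is that $2P_T(s) - I$ is a self-adjoint reflection about the subspace onto which $P_T(s)$ projects; since $P_T(s)$ is an orthogonal projector we have $(2P_T(s)-I)^2 = I$, so $2P_T(s)-I$ is unitary and hence $\|2P_T(s)-I\| = 1$. Inserting this value yields the claimed bound. There is no real obstacle: the argument is essentially algebraic manipulation of the expression from the previous lemma followed by two standard norm inequalities and the observation that a reflection operator has unit spectral norm.
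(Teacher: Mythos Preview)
Your proposal is correct and follows essentially the same approach as the paper: both start from the factorisation $V_T(s) = \mathcal{F}_T(s)[I + DP_T(s)(2P_T(s)-I)]$, subtract $I$, apply the triangle inequality, and use that $2P_T(s)-I$ is a unitary reflection. The only cosmetic difference is that the paper invokes unitary invariance of the norm to write $\|\mathcal{F}_T(s)DP_T(s)(2P_T(s)-I)\| = \|\mathcal{F}_T(s)DP_T(s)\|$ before applying submultiplicativity, whereas you apply submultiplicativity to all three factors first and then set $\|2P_T(s)-I\|=1$; the two routes yield the same bound.
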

\begin{proof}
From \cref{lem:V_v2} we have
\begin{equation}
V_T(s) - I  = \mathcal{F}_T(s) + \mathcal{F}_T(s)(DP_T(s)(2P_T(s)-I)) -I.
\end{equation}
Then the triangle inequality gives 
\begin{align}
\|V_T(s) - I\|  &\leq \|\mathcal{F}_T(s)-I\| + \|\mathcal{F}_T(s)(DP_T(s)(2P_T(s)-I))\| \nn
&= \|\mathcal{F}_T(s)-I\| + \|\mathcal{F}_T(s)DP_T(s)\| ,
\end{align}
where in the second line we have used the fact that $2P_T(s)-I$ is a unitary reflection operator.
The inequality $\|\mathcal{F}_T(s)DP_T(s)\| \leq \|\mathcal{F}_T(s)\| \, \|DP_T(s)\|$ then gives the bound required.
\end{proof}

Next we bound the change in $V_T$.

\begin{lemma}\label{lem:DV_v2}
For a discrete time $s$,
\begin{align}
    \left\|DV_T(s)\right\| &\leq \left(1+\|DP_T(s+1/T)\|\right)\left\|D^{(2)}P_T(s)\right\| \mathcal{D}_3\left( \max(\|DP_T(s+1/T)\|,\|DP_T(s)\|) \right)\nonumber\\
    &\quad+\|\mathcal{F}_T(s)\|\left(\left\|D^{(2)}P_T(s)\right\| + 2 \|DP_T(s)\|^2\right),
\end{align}
with
\begin{equation}
    \mathcal{D}_3(z)\coloneqq \frac{z}{(1-z^2)^{3/2}}.
\end{equation}
\end{lemma}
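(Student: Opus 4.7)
The plan is to decompose $DV_T(s)$ using \cref{lem:V_v2} and the product rule. Writing $V_T(s)=\mathcal{F}_T(s)\,G(s)$ with $G(s):=I+DP_T(s)(2P_T(s)-I)$, we split
\begin{equation}
DV_T(s)=\bigl[\mathcal{F}_T(s+1/T)-\mathcal{F}_T(s)\bigr]\,G(s+1/T)+\mathcal{F}_T(s)\,[G(s+1/T)-G(s)].
\end{equation}
This particular grouping is chosen so that $\mathcal{F}_T(s)$ (not $\mathcal{F}_T(s+1/T)$) appears in the second term, matching the stated bound.

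The factor $G(s+1/T)$ is straightforward: since $2P_T(s+1/T)-I$ is a unitary reflection, $\|G(s+1/T)\|\le 1+\|DP_T(s+1/T)\|$. For the finite difference $DG(s)$, I add and subtract $DP_T(s)(2P_T(s+1/T)-I)$ to telescope, yielding
\begin{equation}
DG(s)=D^{(2)}P_T(s)\bigl(2P_T(s+1/T)-I\bigr)+2\bigl(DP_T(s)\bigr)^2,
\end{equation}
whose norm is at most $\|D^{(2)}P_T(s)\|+2\|DP_T(s)\|^2$. This is the second piece of the claimed bound.

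The main obstacle is controlling $D\mathcal{F}_T(s)=\mathcal{F}_T(s+1/T)-\mathcal{F}_T(s)$, because $A(s):=I-(DP_T(s))^2$ at successive times do not commute, so scalar functional calculus of $x\mapsto x^{-1/2}$ cannot be applied term-by-term. The fix is the L\"owner integral representation
\begin{equation}
A^{-1/2}=\frac{1}{\pi}\int_0^\infty \frac{1}{\sqrt{t}}\,(A+t)^{-1}\,dt,
\end{equation}
valid for positive $A$ (and $A(s)$ is positive with spectrum in $[1-\|DP_T(s)\|^2,1]$ once $\|DP_T(s)\|<1$, which is guaranteed by the smallness hypothesis on $T$ implicit here). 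Applying the resolvent identity,
\begin{equation}
D\mathcal{F}_T(s)=-\frac{1}{\pi}\int_0^\infty \frac{1}{\sqrt{t}}\,\bigl(A(s+1/T)+t\bigr)^{-1} DA(s)\,(A(s)+t)^{-1}\,dt.
\end{equation}
I would then bound $\|DA(s)\|$ by telescoping $(DP_T(s+1/T))^2-(DP_T(s))^2$ into $DP_T(s+1/T)\,D^{(2)}P_T(s)+D^{(2)}P_T(s)\,DP_T(s)$, giving $\|DA(s)\|\le 2z\,\|D^{(2)}P_T(s)\|$ with $z:=\max(\|DP_T(s)\|,\|DP_T(s+1/T)\|)$. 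Using $\|(A(\cdot)+t)^{-1}\|\le(1-z^2+t)^{-1}$ and the elementary evaluation
\begin{equation}
\int_0^\infty \frac{dt}{\sqrt{t}\,(1-z^2+t)^2}=\frac{\pi}{2(1-z^2)^{3/2}},
\end{equation>
I obtain $\|D\mathcal{F}_T(s)\|\le z(1-z^2)^{-3/2}\|D^{(2)}P_T(s)\|=\mathcal{D}_3(z)\|D^{(2)}P_T(s)\|$.

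Combining the two factor bounds with the splitting gives precisely the claimed inequality. The only genuinely delicate step is the resolvent representation for the operator square root; everything else is triangle inequalities, unitary reflections, and a discrete product rule.
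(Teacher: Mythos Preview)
Your decomposition via the discrete product rule, the bound on $G(s+1/T)$, and the computation of $DG(s)$ are identical to the paper's. The one genuine difference is in how you control $D\mathcal{F}_T(s)$. The paper expands $\mathcal{F}_T(s)=[I-(DP_T(s))^2]^{-1/2}$ as a binomial power series, subtracts termwise, telescopes each $(DP_T(s+1/T))^{2k}-(DP_T(s))^{2k}$, and then recognises the resulting scalar series $\sum_{k\ge 1}\frac{(2k-1)!!}{2^k k!}(2k)z^{2k-1}$ as the Taylor expansion of $\mathcal{D}_3(z)$. You instead use the L\"owner representation $A^{-1/2}=\pi^{-1}\int_0^\infty t^{-1/2}(A+t)^{-1}\,dt$ together with the resolvent identity and the explicit evaluation $\int_0^\infty t^{-1/2}(a+t)^{-2}\,dt=\tfrac{\pi}{2}a^{-3/2}$. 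Both routes yield the same bound $\|D\mathcal{F}_T(s)\|\le \mathcal{D}_3(z)\|D^{(2)}P_T(s)\|$. The paper's series argument is slightly more elementary and self-contained; your integral representation handles the non-commutativity more systematically and does not require recognising a closed form for the series. Either is fine here, and the final inequality is the same.
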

\begin{proof}
For the difference operator $D$ there is the product rule $D(X(s)Y(s))=DX(s)Y(s+1/T)+X(s)DY(s)$ for any two operators $X(s)$ and $Y(s)$. Then, from \cref{lem:V_v2}  we have 
\begin{align}
    D V_T(s) &= D\mathcal{F}_T(s) [I + DP_T(s+1/T)(2P_T(s+1/T)-I)] + \mathcal{F}_T(s)D\left[DP_T(s)(2P_T(s)-I)\right]\nonumber\\
    &=D\mathcal{F}_T(s) [I + DP_T(s+1/T)(2P_T(s+1/T)-I)]\nonumber\\  &\quad+ \mathcal{F}_T(s)\left[D^{(2)}P_T(s)(2P_T(s+1/T)-I)+2(DP_T(s))^2\right].
\end{align}
By the triangle inequality and using the fact that the reflection operator is unitary, we get
\begin{equation}\label{eq:l9tmp}
    \|D V_T(s)\|\leq\|D\mathcal{F}_T(s)\| \left(1 + \|DP_T(s+1/T)\|\right) + \|\mathcal{F}_T(s)\|\left(\|D^{(2)}P_T(s)\| + 2\|DP_T(s)\|^2\right).
\end{equation}
Now, from the Taylor expansion of $\mathcal{F}_T$
\begin{equation}\label{eqn:F_Taylor}
    \mathcal{F}_T(s) = I + \sum_{k=1}^{\infty} \frac{\Pi_{j=1}^{k}(2j-1)}{2^k k!}(DP_T(s))^{2k}, 
\end{equation}
the bound of $D\mathcal{F}_T$ in terms of $P_T$ can be computed, i.e.
\begin{align}
    \| \mathcal{F}_T(s+1/T) - \mathcal{F}_T(s) \| &= \left\| \sum_{k=1}^{\infty} \frac{\Pi_{i=1}^{k}(2i-1)}{2^k k!} \left[ \left(DP_T(s+1/T)\right)^{2k} - \left(DP_T(s)\right)^{2k}\right] \right\| \nn
    &= \left\| \sum_{k=1}^{\infty} \frac{\Pi_{i=1}^{k}(2i-1)}{2^k k!}  \sum_{j=0}^{2k-1}\left(DP_T(s+1/T)\right)^{j} \left[ DP_T(s+1/T)-DP_T(s)\right] \left(DP_T(s)\right)^{2k-1-j} \right\| \nn
    &\le \sum_{k=1}^{\infty} \frac{\Pi_{i=1}^{k}(2i-1)}{2^k k!} \left\|D^{(2)}P_T(s)\right\|\sum_{j=0}^{2k-1}\|\left(DP_T(s+1/T)\right)^{j}  \left(DP_T(s)\right)^{2k-1-j}\| \nn
    &\le \left\|D^{(2)}P_T(s)\right\|\sum_{k=1}^{\infty} \frac{\Pi_{i=1}^{k}(2i-1)}{2^k k!} (2k) \left[ \max(\|DP_T(s+1/T)\|,\|DP_T(s)\|) \right]^{2k-1} \nn
    &=  \left\|D^{(2)}P_T(s)\right\| \mathcal{D}_3\left( \max(\|DP_T(s+1/T)\|,\|DP_T(s)\|) \right),
\end{align}
where we have used the Taylor expansion of the function $\mathcal{D}_3$.
Substituting this into \cref{eq:l9tmp} gives the bound required.
\end{proof}

\begin{lemma}\label{lem:WAv3}
For any discrete time $s$ with $W_T^A(s)$ defined as in \cref{eq:Wa}, we have
\begin{equation}
\label{eqn:DWA}
\|DW_T^A(s)\| \leq \frac{c_1(s)}{T}+ \|DV_T(s)\|. 
\end{equation}
\end{lemma}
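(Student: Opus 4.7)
The plan is to expand $DW_T^A(s)$ using the product rule for finite differences and then bound each factor using unitarity together with the hypothesis on $DW_T$.

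First I would recall from \cref{eq:Wa} that $W_T^A(s) = V_T(s) W_T(s)$. Applying the discrete product rule $D(X(s)Y(s)) = DX(s)\,Y(s+1/T) + X(s)\,DY(s)$ (which was already used in the proof of \cref{lem:DV_v2}) gives
\begin{equation}
DW_T^A(s) = DV_T(s)\,W_T(s+1/T) + V_T(s)\,DW_T(s).
\end{equation}

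Then by the triangle inequality and submultiplicativity of the spectral norm,
\begin{equation}
\|DW_T^A(s)\| \le \|DV_T(s)\|\,\|W_T(s+1/T)\| + \|V_T(s)\|\,\|DW_T(s)\|.
\end{equation}
Both $W_T(s+1/T)$ and $V_T(s)$ are unitary (the former by assumption, the latter because it is defined in \cref{eq:V} as the unitary factor in the left polar decomposition of $S_T(s)$), so both spectral norms equal $1$. Finally, $\|DW_T(s)\| \le c_1(s)/T$ is precisely the $k=1$ case of \cref{eqn:assump1} from \cref{def:difs}, yielding the claimed bound.

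There is no real obstacle here: the lemma is a one-line application of the product rule, and the only inputs needed are unitarity of $V_T$ and $W_T$, together with the smoothness hypothesis on $W_T$. The role of this lemma in the broader argument is simply to transfer the bound on the discrete derivative of the physical walk operator to the adiabatic walk operator, with the price being the additive term $\|DV_T(s)\|$ that is itself controlled by \cref{lem:DV_v2}.
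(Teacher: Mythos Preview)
Your proposal is correct and follows essentially the same approach as the paper: expand $DW_T^A(s)$ via the discrete product rule, invoke unitarity of $W_T$ and $V_T$ to drop those factors, and apply the $k=1$ case of \cref{eqn:assump1}. The paper's proof is line-for-line the same.
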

\begin{proof}
According to the definition of $W_T^A(s)$ as $V_T(s)W_T(s)$,
\begin{align}
DW_T^A(s) = DV_T(s) W_T(s+1/T) + V_T(s) DW_T(s). \end{align}
Since $W_T$ and $V_T$ are unitary, and using the triangle inequality, we have
\begin{equation}
    \left\|DW_T^A(s)\right\| \leq \left\|DV_T(s)\right\| + \left\|DW_T(s)\right\|. \nonumber\\
   \label{eq:upperDWa}
    \end{equation}
Using \cref{eqn:assump1} with $k=1$ for $\left\|DW_T(s)\right\|$ then gives \cref{eqn:DWA}.
\end{proof}

\begin{lemma}\label{lem:DOmega2}
    For any discrete time $s$, $\Omega_T$ as defined in \cref{eq:Omegadef}, and $\mathcal{F}_T$ defined as in \cref{eq:def_F}, we have the upper bound on $D\Omega_T(s)$
    \begin{equation}
    \|D\Omega_T(s)\|  \leq \|\mathcal{F}_T(s) - I\| +  \|DP_T(s)\|\|\mathcal{F}_T(s) \|. 
    \end{equation}
\end{lemma}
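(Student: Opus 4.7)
\medskip

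\noindent\textbf{Proof plan for \cref{lem:DOmega2}.} The plan is to reduce the bound on $\|D\Omega_T(s)\|$ to a bound on $\|V_T(s)-I\|$ via the ripple operator, and then invoke \cref{lem:V_bound2}. The key observation is that the ripple operator $\Theta_T(s) = \Omega_T(s+1/T)\Omega_T^\dagger(s)$ of \cref{eq:ripple} gives us
\begin{equation}
    D\Omega_T(s) = \Omega_T(s+1/T) - \Omega_T(s) = \bigl(\Theta_T(s) - I\bigr)\Omega_T(s).
\end{equation}
Since $\Omega_T(s)$ is unitary (being a product of unitaries $U_T^{A\dagger}(s)U_T(s)$), the spectral norm is invariant under multiplication by it on the right, so $\|D\Omega_T(s)\| = \|\Theta_T(s) - I\|$.

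The next step is to eliminate the adiabatic evolution operators from $\Theta_T(s)$. Applying \cref{eqn:Theta_V} from \cref{prop:volterra}, we have
\begin{equation}
    \Theta_T(s) - I = U_T^{A\dagger}(s+1/T)\,V_T^\dagger(s)\,U_T^A(s+1/T) - I = U_T^{A\dagger}(s+1/T)\bigl(V_T^\dagger(s) - I\bigr)U_T^A(s+1/T),
\end{equation}
where in the second equality we used that $U_T^A(s+1/T)$ is unitary. Taking norms and using invariance of the spectral norm under unitary conjugation gives $\|\Theta_T(s) - I\| = \|V_T^\dagger(s) - I\| = \|V_T(s) - I\|$, where the last equality follows because the spectral norm of an operator equals the spectral norm of its adjoint.

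Combining these two reductions yields $\|D\Omega_T(s)\| = \|V_T(s) - I\|$, and then applying \cref{lem:V_bound2} directly produces the stated bound
\begin{equation}
    \|D\Omega_T(s)\| \leq \|\mathcal{F}_T(s) - I\| + \|DP_T(s)\|\,\|\mathcal{F}_T(s)\|.
\end{equation}
There is no genuine obstacle here: the argument is essentially bookkeeping once one notices that the discrete derivative of the wave operator is a unitary conjugate of $V_T^\dagger(s) - I$. All the analytic content is already packaged in \cref{lem:V_bound2}, which controls the deviation of $V_T$ from the identity in terms of the spectral projector differences.
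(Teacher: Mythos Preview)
Your proof is correct and follows essentially the same route as the paper: both reduce $\|D\Omega_T(s)\|$ to $\|V_T(s)-I\|$ via the identity $D\Omega_T(s)=(\Theta_T(s)-I)\Omega_T(s)$ together with \cref{eqn:Theta_V}, and then invoke \cref{lem:V_bound2}. The paper arrives at the first identity through the Volterra equation and kernel function, whereas you read it off directly from the ripple operator definition, but the argument is otherwise identical.
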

\begin{proof}
    From Proposition~\ref{prop:volterra} and \cref{eqn:Volterra} we see that the difference between $\Omega_T\left(s\right)$ and $\Omega_T\left(s+1/T\right)$ is the term in the sum with $m=n$.
    That gives
    \begin{align}
        D\Omega_T\left(s\right) &= - \frac{1}{T} K_T(s)\Omega_T(s) \nonumber\\
        &=  (\Theta_T(s)-I)\Omega_T(s) \nonumber\\
        &= {U_T^A}(s+1/T)^{\dagger} (V_T(s)-I)^{\dagger}U_T^A(s+1/T)\Omega_T(s). 
    \end{align}
    In the third line we have used \cref{eqn:Theta_V} for $\Theta_T(s)$.
    Since $U_T^A$ and $\Omega_T$ are unitary, we have
    \begin{equation}
        \| D\Omega_T\left(s\right) \| = \|(V_T(s)-I)^{\dagger}\| = \|V_T(s)-I\|.
    \end{equation}
    Then the desired bound follows from \cref{lem:V_bound2}. 
\end{proof}

Finally we summarize the bounds for the operators related to the summation by parts formula. 
\begin{lemma}\label{lem:Xt}
    For any discrete time $s$ in $\tilde{X}(s)$ as defined in \cref{eq:Xtilde}, and any bounded operator $X(s)$, we have
    \begin{equation}\label{eqn:Xt_bound}
    \left\|\tilde{X}(s)\right\| \leq \frac{2}{\Delta_0(s)}\|X(s)\|,
    \end{equation}
    and 
    \begin{equation}
    \left\|D\Tilde{X}(s)\right\| \leq \frac{2}{\Delta_1(s)}\left\|DX(s)\right\| + \frac{2c_1(s)}{\pi T (1-\cos(\Delta_1(s)/2))} \|X(s)\|. 
    \end{equation}
\end{lemma}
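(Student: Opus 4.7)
The plan is to prove both bounds by direct contour-integral estimates on the specific contour $\Gamma_T(s,k)$ described in the paper: two radial rays from the origin through the midpoints of the two spectral gaps, closed by an arc at radius $\to\infty$ whose contribution vanishes because the integrand decays like $|z|^{-2}$.

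For the first inequality, I would parameterize each ray as $z=re^{i\theta_0}$ where $\theta_0$ bisects one of the gaps. Since $W_T(s)$ is unitary, the two nearest eigenvalues lie at $e^{i(\theta_0\pm\Delta_0(s)/2)}$, giving $\|R_T(s,z)\|^2\le 1/((r-\cos(\Delta_0/2))^2+\sin^2(\Delta_0/2))$. Applying $\|\tilde X(s)\|\le (2\pi)^{-1}\oint \|R_T(s,z)\|^2\,|dz|\,\|X(s)\|$ together with the elementary antiderivative $\int_0^{\infty}dr/((r-a)^2+b^2)=b^{-1}(\pi/2+\arctan(a/b))$ evaluated at $a=\cos(\Delta_0/2)$, $b=\sin(\Delta_0/2)$ (so that $\arctan(a/b)=\pi/2-\Delta_0/2$) produces a contribution $(\pi-\Delta_0/2)/\sin(\Delta_0/2)$ per ray. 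Summing the two rays and dividing by $2\pi$ gives $\|\tilde X(s)\|\le (\pi-\Delta_0/2)\|X(s)\|/(\pi\sin(\Delta_0/2))$, and the elementary real-variable inequality $\Delta_0(\pi-\Delta_0/2)\le 2\pi\sin(\Delta_0/2)$ on $(0,\pi]$ converts this into the stated $2\|X(s)\|/\Delta_0(s)$.

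For the second inequality, I would rewrite $D\tilde X(s) = \tilde X(s+1/T)-\tilde X(s)$ as a single integral over the common contour $\Gamma_T(s,1)$ (legal by Cauchy's theorem since the two contours can be deformed into each other without crossing any eigenvalue) and split the integrand as
\begin{equation*}
R_T(s{+}\tfrac{1}{T},z)\,X(s{+}\tfrac{1}{T})\,R_T(s{+}\tfrac{1}{T},z)-R_T(s,z)\,X(s)\,R_T(s,z) = R_T(s{+}\tfrac{1}{T},z)\,DX(s)\,R_T(s{+}\tfrac{1}{T},z) + \Delta R,
\end{equation*}
with $\Delta R := R_T(s{+}\tfrac{1}{T},z)X(s)R_T(s{+}\tfrac{1}{T},z)-R_T(s,z)X(s)R_T(s,z)$. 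The first piece is itself a $\tilde{\cdot}$-type operator for the walk $W_T(s+1/T)$ on the contour $\Gamma_T(s,1)$; re-running the first-bound argument with gap $\Delta_1(s)$ (the gap from this wider contour to the spectrum of $W_T(s+1/T)$) in place of $\Delta_0$ supplies the contribution $2\|DX(s)\|/\Delta_1(s)$. For $\Delta R$, the resolvent identity $R_T(s+1/T,z)-R_T(s,z)=-R_T(s+1/T,z)\,DW_T(s)\,R_T(s,z)$ expresses $\Delta R$ as two triple-resolvent products, each carrying a single factor of $DW_T(s)$. Bounding the integrand by $\|DW_T(s)\|\|X(s)\|/\min_\lambda|z-\lambda|^3\le (c_1(s)/T)\|X(s)\|/|z-\lambda|^3$ and evaluating $\int_0^{\infty} dr/((r-a)^2+b^2)^{3/2} = (1+a)/b^2 = 1/(1-\cos(\Delta_1/2))$ (via the substitution $r-a=b\tan\phi$, using $a^2+b^2=1$) then produces exactly $\frac{2c_1(s)\|X(s)\|}{\pi T(1-\cos(\Delta_1(s)/2))}$ once the two rays and the two terms of $\Delta R$ are summed.

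The main obstacles are routine but demand care: (i) establishing the sharp real-variable inequality $\Delta_0(\pi-\Delta_0/2)\le 2\pi\sin(\Delta_0/2)$ for $\Delta_0\in(0,\pi]$, which I would handle by showing that $f(\Delta_0):=2\pi\sin(\Delta_0/2)-\Delta_0(\pi-\Delta_0/2)$ satisfies $f(0)=f'(0)=0$ with $f'\ge 0$ on $[0,\pi]$; (ii) the two elementary ray integrals above; and (iii) verifying that the closing arc at infinity contributes zero for both integrands, which follows from their $|z|^{-2}$ decay. The one conceptual point to keep straight is which gap governs which piece — the single-step $\Delta_0(s)$ for $\tilde X(s)$ itself, but the two-step $\Delta_1(s)$ for every piece of $D\tilde X(s)$ since that piece is evaluated on the wider contour $\Gamma_T(s,1)$.
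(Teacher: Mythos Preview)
Your proposal is correct and follows essentially the same route as the paper. The paper likewise takes the norm inside the contour integral, uses the explicit sector contour $\Gamma_T(s,k)$ with the arc at infinity discarded by $|z|^{-2}$ decay, and evaluates the same two radial integrals $\int_0^\infty dr/((r-\cos(\Delta/2))^2+\sin^2(\Delta/2))=(2\pi-\Delta)/(2\sin(\Delta/2))$ and $\int_0^\infty dr/((r-\cos(\Delta/2))^2+\sin^2(\Delta/2))^{3/2}=1/(1-\cos(\Delta/2))$ (it records these in Appendix~B as \cref{eq:CountInt1,eq:CountInt2} and simply cites them here); for $D\tilde X$ it performs exactly your telescoping into a $DX$ piece and two triple-resolvent pieces via the resolvent identity $DR_T=-R_T(s{+}1/T)\,DW_T\,R_T$. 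Your elementary inequality $\Delta_0(\pi-\Delta_0/2)\le 2\pi\sin(\Delta_0/2)$ is the same as the paper's $(\pi-x)/\sin x\le \pi/x$ with $x=\Delta_0/2$; your plan to verify it via $f'\ge 0$ works, though note $f''$ changes sign on $[0,\pi]$, so you will need the observation that $f'(0)=f'(\pi)=0$ with $f'$ unimodal (single interior maximum) rather than convexity alone.
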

\begin{proof}
    The bound for $\tilde{X}$ directly follows from the definition \cref{eq:Xtilde} and choosing an appropriate contour $\Gamma_T(s,0)$.
    As shown in \cref{fig:contour_1}, the contour passes in a straight line from the centre through both gaps, and has a circular arc of radius 2 between these two straight lines.
    That is, \cref{eq:Xtilde} gives
    \begin{align}
        \|\Tilde{X}(s)\| & \le \frac{1}{2\pi}\oint_{\Gamma_T(s,0)} \| R_T(s,z)\|^2 \|X(s)\| \, |dz| \nn
       & \le \frac{1}{2\pi} \|X(s)\| \frac{4\pi}{\Delta_0(s)} = \frac{2}{\Delta_0(s)}\|X(s)\|,
    \end{align}
    where we have used \cref{eq:CountInt1}, but replaced $\Delta_1(s)$ with $\Delta_0(s)$ because we need only consider the eigenvalues for a single step of the walk.
    
    For $D\Tilde{X}(s)$, using $\Gamma_T(s,1)$ (for two consecutive steps of the walk), and using
    \cref{eq:Xtilde} we have 
    \begin{align}
    D\Tilde{X}(s)&=-\frac{1}{2\pi i}\oint_{\Gamma_T(s,1)}\left(R_T\left(s+\frac{1}{T},z\right)X\left(s+\frac{1}{T}\right)R_T\left(s+\frac{1}{T},z\right)-R_T\left(s,z\right)X\left(s\right)R_T\left(s,z\right)\right)dz\nonumber\\
    &=-\frac{1}{2\pi i}\oint_{\Gamma_T(s,1)}R_T\left(s+\frac{1}{T},z\right)DX\left(s\right)R_T\left(s+\frac{1}{T},z\right)dz\nonumber\\
    &\quad -\frac{1}{2\pi i}\left(\oint_{\Gamma_T(s,1)}R_T\left(s,z\right)X\left(s\right)DR_T\left(s,z\right)dz+\oint_{\Gamma_T(s,1)}DR_T\left(s,z\right)X\left(s\right)R_T\left(s+\frac{1}{T},z\right)dz\right).
    \end{align}
Using \cref{eq:contProj} we have
\begin{equation}
    D R_T(s,z) = - R_T\left(s+\frac{1}{T},z\right)DW_T\left(s\right) R_T\left(s,z\right),
\end{equation}
so
\begin{align}
    \left\|DR_T\left(s,z\right)\right\| &\leq 
    \left\| R_T\left(s+\frac{1}{T},z\right)\right\| \left\|DW_T\left(s\right)\right\|  \left\| R_T\left(s,z\right)\right\| \\
& \leq    \frac{c_1(s)}{T}\left\|R_T\left(s+\frac{1}{T},z\right)\right\|\left\|R_T\left(s,z\right)\right\| .
\end{align}
We can therefore write an upper bound as
\begin{align}
        \|D\Tilde{X}(s)\| &\le \frac{\left\| DX\left(s\right)\right\|}{2\pi} \oint_{\Gamma_T(s,1)}\left\|R_T\left(s+\frac{1}{T},z\right)\right\|^2   |dz| \nonumber\\
    &\quad +\frac{\left\| X\left(s\right)\right\|}{2\pi}\left(\oint_{\Gamma_T(s,1)}\left\| R_T\left(s,z\right)\right\|^2  \left\|R_T\left(s+\frac 1T,z\right)\right\| |dz| + \oint_{\Gamma_T(s,1)}\left\| R_T\left(s,z\right)\right\|  \left\|R_T\left(s+\frac{1}{T},z\right)\right\|^2 |dz|\right).
\end{align}
Using the bounds on the contour integrals given in \cref{eq:CountInt1} and \cref{eq:CountInt2}, we then get
\begin{align}
    \left\|D\Tilde{X}(s)\right\| &\leq \frac{1}{2\pi} \left\|DX(s)\right\|\frac{4\pi}{\Delta_1(s)} + \frac{1}{\pi} \|X(s)\| \frac{c_1(s)}{T} \frac{2}{1-\cos(\Delta_1(s)/2)} \nonumber\\
    & =  \frac{2}{\Delta_1(s)}\left\|DX(s)\right\| + \frac{2c_1(s)}{\pi T (1-\cos(\Delta_1(s)/2))} \|X(s)\|. 
\end{align}
\end{proof}

\begin{lemma}\label{lem:ABZ2_v2}
    For a discrete time $s$ in $A(s)$, $B(s)$ and $Z(s)$ defined in \cref{eq:A,eq:B,eq:Z} respectively, and any bounded operator $X(s)$, we have
    \begin{equation}
        \label{eq:boundA}
     \|A(s)\|\leq \|\mathcal{F}_T(s) - I\| +  \|DP_T(s)\|\|\mathcal{F}_T(s) \|, 
    \end{equation}
    \begin{align}
    \label{eq:BoundB}
    \left\|B\left(s\right)\right\| &\leq \frac{2}{\Delta_1(s)}\left\|DX(s)\right\| + \frac{2c_1(s)}{\pi T (1-\cos(\Delta_1(s)/2))} \|X(s)\|\nonumber\\ 
    &\quad+ \frac{2}{\Delta_0(s)}\left (\frac{c_1(s-1/T)}{T}+ \|DV_T(s-1/T)\|\right)\|X(s)\|, 
   \end{align}
   and 
   \begin{align}\label{eq:Zbound}
       \|Z(s)\| &\leq \frac{4T}{\Delta_0(s)} \left(\|\mathcal{F}_T(s) - I\| +  \|DP_T(s)\|\|\mathcal{F}_T(s) \|\right)\|X(s)\| + \frac{2T}{\Delta_1(s)}\left\|DX(s)\right\| \nonumber\\
       &\quad+ \frac{2c_1(s)}{\pi  (1-\cos(\Delta_1(s)/2))} \|X(s)\| + \frac{2}{\Delta_0(s)}\left (c_1(s-1/T)+ T\|DV_T(s-1/T)\|\right)\|X(s)\|. 
   \end{align}
\end{lemma}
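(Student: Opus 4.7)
The bound on $\|A(s)\|$ is the simplest: since $W_T^A(s)$ is unitary by construction, $\|A(s)\| = \|(V_T^\dagger(s) - I)W_T^A(s)\| = \|V_T^\dagger(s) - I\| = \|V_T(s) - I\|$. Applying \cref{lem:V_bound2} directly yields the claimed inequality~\cref{eq:boundA}. This step is essentially immediate and reduces to a sub-result already proven.

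For $\|B(s)\|$, I would apply the triangle inequality to the definition in \cref{eq:B}, obtaining
\begin{equation}
\|B(s)\| \leq \|D\tilde{X}(s)\| \cdot \|W_T^A(s)\| + \|DW_T^A(s-1/T)\| \cdot \|\tilde{X}(s)\|.
\end{equation}
Since $W_T^A(s)$ is unitary, $\|W_T^A(s)\| = 1$. The first term is bounded by plugging in the second inequality of \cref{lem:Xt} for $\|D\tilde X(s)\|$. For the second term, I would bound $\|DW_T^A(s-1/T)\|$ with \cref{lem:WAv3} (giving the $c_1(s-1/T)/T + \|DV_T(s-1/T)\|$ factor) and bound $\|\tilde X(s)\|$ with the first inequality of \cref{lem:Xt} (giving the factor $2\|X(s)\|/\Delta_0(s)$). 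Assembling yields \cref{eq:BoundB} directly.

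For $\|Z(s)\|$, the definition \cref{eq:Z} combined with the commutator estimate $\|[A(s),\tilde X(s)]\| \leq 2\|A(s)\|\|\tilde X(s)\|$ gives
\begin{equation}
\|Z(s)\| \leq T\bigl(2\|A(s)\|\|\tilde X(s)\| + \|B(s)\|\bigr).
\end{equation}
Substituting the bound \cref{eq:boundA} for $\|A(s)\|$, the bound $\|\tilde X(s)\| \leq 2\|X(s)\|/\Delta_0(s)$ from \cref{lem:Xt}, and the already-derived bound \cref{eq:BoundB} for $\|B(s)\|$, and then distributing the factor of $T$ through, produces exactly \cref{eq:Zbound}.

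The proof is therefore essentially a bookkeeping exercise: each operator has been prepared in advance by earlier lemmas (\cref{lem:V_bound2,lem:WAv3,lem:Xt}), and the only tools required are the triangle inequality, submultiplicativity of the spectral norm, unitarity of $W_T^A$, and the commutator inequality. The main mild obstacle is simply keeping straight which time argument ($s$ versus $s-1/T$) appears where inside $B(s)$, and making sure to use the correct gaps ($\Delta_0$ for the $\tilde X$ factor arising from a single-step contour and $\Delta_1$ for the $D\tilde X$ factor arising from a two-step contour) as dictated by \cref{lem:Xt}; no new estimates or clever manipulations are needed.
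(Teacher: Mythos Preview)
Your proposal is correct and follows essentially the same approach as the paper: both proofs reduce each bound to a direct application of the triangle inequality, submultiplicativity, unitarity of $W_T^A$, and the previously established \cref{lem:V_bound2,lem:WAv3,lem:Xt}, in the same order and with the same time-shift and gap bookkeeping.
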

\begin{proof}
    From the definition of $A$ in \cref{eq:A}, we have
\begin{equation}
    \|A(s)\|=\left\|\left(V_T\left(s\right)^{\dagger}-I\right)W_T^A(s) \right\|\leq \left\|\left(V_T\left(s\right)^{\dagger}-I\right) \right\| \left\|W_T^A(s) \right\|
    = \left\|\left(V_T\left(s\right)^{\dagger}-I\right) \right\|. 
\end{equation}
    The bound for $\|A\|$ follows from \cref{lem:V_bound2}. 
 
 For $B$, from \cref{eq:B}, we have that
\begin{equation}
    \left\|B\left(s\right)\right\| \leq \left\| D\Tilde{X}\left(s\right)W_T^A\left(s\right)\right\|+\left\| DW_T^A\left(s-1/T\right)\Tilde{X}\left(s\right)\right\|
    \leq \left\| D\Tilde{X}\left(s\right)\right\|+\left\| DW_T^A\left(s-1/T\right)\right\| \left\|\Tilde{X}\left(s\right)\right\|.
\end{equation}
By inserting the bounds previously computed for $D\tilde{X}$ (in \cref{lem:Xt}) and $DW_T^A$ (in \cref{lem:WAv3}), the desired bound for $B$ is established. 

Finally, for $Z$ the definition \cref{eq:Z} immediately gives
\begin{equation}
    \left\|Z\left(s\right)\right\|\leq T\left(2\left\|A\left(s\right)\right\| \left\|\tilde{X}\left(s\right)\right\| + \left\|B\left(s\right)\right\|\right).
\end{equation}
The bounds previously computed in \cref{eq:BoundB,eqn:Xt_bound,eq:boundA} then give the upper bound required.
\end{proof}

The summation by parts formula, which is presented as our next lemma here, is given in Theorem 1 of \cite{DKS98} with a typo in the sign of both operators $\mathcal{S}$ and $\mathcal{B}$.
Here we correct the sign slightly differently for the two quantities, taking $\mathcal{B}$ to be the negative of the $\mathcal{B}$ defined in \cite{DKS98}, and taking $\mathcal{S}$ to be the same but placing a minus sign in the statement of the theorem (so there is $\mathcal{B}-\mathcal{S}/T$).
Throughout the lemma and its proof we will encounter slight shift of the discrete time very frequently. To simplify the notation, for any positive integer $n$, we define $n_+ = n+1$ and $n_- = n-1$. 

\begin{lemma}[Summation by parts formula]
\label{lem:sum_by_parts}
Let $W_T(s)$, $s\in \mathbb{Z}/T$, be a sequence of unitaries, 
and suppose that $X(s)$ and $Y(s)$ are sequences of operators. 
Then
\begin{equation}
    \sum_{n=1}^{l}Q_0 U^{A\dagger}_T\left(\frac{n}{T}\right )X\left(\frac{n}{T}\right) U^{A}_T\left(\frac{n}{T}\right) P_0 Y\left(\frac{n}{T}\right) = \mathcal{B} - \frac{1}{T}\mathcal{S},
\end{equation}
where $P_0=P_T(0)$ and $Q_0=Q_T(0)$,
\begin{equation}
\label{eq:Bou}
    \mathcal{B} = Q_0U^{A\dagger}_T\left(\frac{l}{T}\right)\Tilde{X}\left(\frac{l_+}{T}\right)U^{A}_T\left(\frac{l_+}{T}\right)P_0Y\left(\frac{l_+}{T}\right) - Q_0U_T^{A\dagger}(0)\Tilde{X}\left(\frac{1}{T}\right)U^{A}_T\left(\frac{1}{T}\right)P_0Y\left(\frac{1}{T}\right),
\end{equation}
is the boundary term and
\begin{equation}
\label{eq:Sum}
\mathcal{S}=\sum_{n=1}^{l} Q_0 U^{A\dagger}_T\left(\frac{n}{T}\right) \left(Z\left(\frac{n}{T}\right) U_T^{A}\left(\frac{n}{T}\right) P_0 Y\left(\frac{n}{T}\right) + \Tilde{X}\left(\frac{n_+}{T}\right) W^A_T\left(\frac{n}{T}\right) U_T^A\left(\frac{n}{T}\right) P_0 T DY\left(\frac{n}{T}\right)\right),
\end{equation}
is the sum. 
\end{lemma}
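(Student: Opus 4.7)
This is a discrete integration by parts, so my plan is to introduce a telescoping quantity whose endpoints yield $\mathcal{B}$ and whose successive differences reproduce the left-hand summand together with the pieces of $\mathcal{S}/T$. With $n_+=n+1$ and $n_-=n-1$, I would set
\[
    F(n) \;:=\; Q_0\, U_T^{A\dagger}\!\left(\tfrac{n_-}{T}\right)\, \tilde X\!\left(\tfrac{n}{T}\right)\, U_T^A\!\left(\tfrac{n}{T}\right)\, P_0,
\]
so that $\mathcal{B}=F(l_+)\,Y(l_+/T)-F(1)\,Y(1/T)=\sum_{n=1}^l \bigl[F(n_+)Y(n_+/T)-F(n)Y(n/T)\bigr]$. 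Splitting each summand by the discrete product rule in the form $DF(n)\,Y(n/T)+F(n_+)\,DY(n/T)$, the $F(n_+)\,DY(n/T)$ piece is, after rescaling by $T/T$, already the $\tilde X\,W_T^A\cdot TDY$ summand appearing in $\mathcal{S}$, so the work concentrates on $DF(n)\,Y(n/T)$.

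To compute $DF(n)$, I use $U_T^A(n_+/T)=W_T^A(n/T)U_T^A(n/T)$ and its adjoint to extract a common central sandwich $U_T^{A\dagger}(n/T)\cdots U_T^A(n/T)$, obtaining
\[
    DF(n) = Q_0 U_T^{A\dagger}(n/T)\bigl[\tilde X(n_+/T)\,W_T^A(n/T) - W_T^A(n_-/T)\,\tilde X(n/T)\bigr] U_T^A(n/T)\, P_0.
\]
Direct expansion from \cref{eq:B} shows the bracket equals $B(n/T)-[W_T^A(n/T),\tilde X(n/T)]$, reducing everything to relating $[W_T^A,\tilde X]$ to $X$.

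The algebraic heart of the argument is the commutator identity $W_T(s)\tilde X(s)-\tilde X(s)W_T(s)=[P_T(s),X(s)]$, which I derive from the resolvent relation $W_T R_T(s,z)=I+zR_T(s,z)=R_T(s,z)W_T$: in the contour integral \cref{eq:Xtilde} the $z R_T X R_T$ terms cancel and the remaining $\oint R_T\,dz/(2\pi i)$ integrals collapse to $P_T$ by \cref{eq:ProjectCount}. Combining this with $W_T=W_T^A+A$ (immediate from \cref{eq:A} via $V_T^\dagger W_T^A=W_T$) yields $[W_T^A,\tilde X]=[P_T,X]-[A,\tilde X]$, so by \cref{eq:Z} the bracket in $DF(n)$ becomes $B+[A,\tilde X]-[P_T,X]=Z(n/T)/T-[P_T,X]$. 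Finally the intertwining $U_T^A(s)P_0=P_T(s)U_T^A(s)$ together with $Q_0 P_0=0$ collapses the residual commutator to $Q_0 U_T^{A\dagger}(n/T)[P_T,X]\,U_T^A(n/T)P_0=-Q_0 U_T^{A\dagger}(n/T)X(n/T)U_T^A(n/T)P_0$, which, multiplied by $Y(n/T)$, is exactly the summand of the lemma's left-hand side. Summing $n=1,\dots,l$ and rearranging delivers $\sum=\mathcal{B}-\mathcal{S}/T$.

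The main obstacle is the time-shift bookkeeping: $\tilde X$, $W_T^A$, and $U_T^A$ live at three distinct shifted times across $F(n)$ and $F(n_+)$, and the factorization into the common central sandwich must be applied symmetrically on both sides before $B(n/T)$ and the commutator $[W_T^A,\tilde X]$ emerge cleanly; an error of a single $\pm 1/T$ shift would put $\tilde X$ or $W_T^A$ at the wrong time and spoil the match with $B$. Once the shifts are aligned and the resolvent identity $[W_T,\tilde X]=[P_T,X]$ is in hand, the remainder of the argument is routine algebra.
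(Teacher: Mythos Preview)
Your proposal is correct and follows essentially the same route as the paper's proof: both hinge on the resolvent identity $[W_T,\tilde X]=[P_T,X]$, the splitting $W_T=W_T^A+A$, the recursion $U_T^A(n_+/T)=W_T^A(n/T)U_T^A(n/T)$, and the observation that the mismatch between $\tilde X(n_+/T)W_T^A(n/T)$ and $W_T^A(n_-/T)\tilde X(n/T)$ is exactly $B(n/T)-[W_T^A,\tilde X]$. The only difference is organizational: the paper starts from the left-hand summand and manufactures the telescoping difference plus $B$, whereas you start from the telescope $F(n_+)Y(n_+/T)-F(n)Y(n/T)$ and extract the summand; the computations are mirror images of each other.
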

As we are making a correction to the theorem, and it is quite lengthy, we give a proof in \cref{ap:sum_parts}.

\subsection{The first discrete adiabatic theorem}
\label{sec:dat1}

We now give the complete explicit form of the discrete adiabatic theorem.

 \begin{theorem}[The First Discrete Adiabatic Theorem]
 \label{theoAdia} 
 Let $U_T(s) = \prod_{l = 0}^{sT-1} W_T\left(l/T\right)$ for $s\in \mathbb{Z}/T$ be a product of unitary operators $W_T\left(l/T\right)$ as per
 \cref{eq:U_as_product_of_W}, and let $U_T^A(s)$ be the corresponding ideal adiabatic evolution that maps an eigenstate of $W_T(0)$ to the corresponding eigenstate of $W_T(s)$.
 Suppose further that the operators $W_T (s)$ satisfy
 $\left\|D^{(k)}W_T(s)\right\|\leq c_k(s)/T^k$ for $k = 1,2$, as per \cref{def:difs}, 
 we consider the gaps $\Delta_k(s)$ as defined in \cref{def:gaps}, 
 and $T \geq \max_{s\in [0,1]} (2c_1(s)/\Delta_1(s))$.
 Then for any time $s$, we have
 \begin{align}
 \label{eq:mainEqTheo}
            & \quad \left\|U_T (s) - U_T^{A}(s) \right\| \nn
            & \leq\frac{4}{\Delta_0(1/T)}\mathcal{D}_2\left(\frac{2c_1(0)}{T\Delta_1(0)}\right) +  \frac{4}{\Delta_0(s)}\mathcal{D}_2\left(\frac{2c_1(s-1/T)}{T\Delta_1(s-1/T)}\right) + 2\mathcal{D}_2\left(\frac{2c_1(s-1/T)}{T\Delta_1
       (s-1/T)}\right) \nonumber\\
       & \quad + \sum_{n=1}^{sT-1}4\left(\frac{1}{\Delta_0(n_+/T)} + \frac{2}{\Delta_0(n/T)}\right)\mathcal{D}_2\left(\frac{2c_1(n/T)}{T\Delta_1(n/T)}\right)\mathcal{D}_2\left(\frac{2c_1(n_-/T)}{T\Delta_1(n_-/T)}\right) \nonumber\\
       &\quad + \sum_{n=1}^{sT-1}\frac{4\mathcal{G}_{T,3}(n_-/T)}{T^2\Delta_1(n/T)} + \sum_{n=1}^{sT-1} \frac{4c_1(n/T)}{\pi T (1-\cos(\Delta_1(n/T)/2))} \mathcal{D}_2\left(\frac{2c_1(n_-/T)}{T\Delta_1
       (n_-/T)}\right)\nonumber\\
       &\quad+ \sum_{n=1}^{sT-1}\frac{4 \mathcal{G}_{T,4}(n_-/T)}{T\Delta_0(n/T)}\mathcal{D}_2\left(\frac{2c_1(n_-/T)}{T\Delta_1
       (n_-/T)}\right)  + \sum_{n=0}^{sT-1} \frac{24c_1(n/T)^2}{T^2\Delta_1(n/T)^2} +  \sum_{n=0}^{sT-1}\frac{4c_1(n/T)^2}{T^2\Delta_1(n/T)^2}\left(1-\frac{2c_1(n/T)}{T\Delta_1(n/T)}\right)^{-1}, \end{align}
where
\begin{align}
\label{eq:D_i}
        \mathcal{D}_1(z) &\coloneqq \frac{1}{\sqrt{1-z^2}}, \quad \mathcal{D}_2(z) \coloneqq \sqrt{\frac{1+z}{1-z}} - 1, \quad \mathcal{D}_3(z)\coloneqq \frac{z}{(1-z^2)^{3/2}},\\
    \label{eq:G1}
        \mathcal{G}_{T,1}(s) &\coloneqq \frac{ c_1(s)^2+c_1(s)c_1(s+1/T)}{\pi (1-\cos(\Delta_2(s)/2))}  + \frac{2c_2(s)}{\Delta_2(s)}, \\
        \label{eq:G2}
        \mathcal{G}_{T,2}(s) &\coloneqq \mathcal{G}_{T,1}(s)\mathcal{D}_3\left( \max\left(\frac{2c_1(s+1/T)}{T\Delta_1(s+1/T)},\frac{2c_1(s)}{T\Delta_1(s)}\right) \right), \\
        \label{eq:G3}
         \mathcal{G}_{T,3}(s) &\coloneqq \mathcal{G}_{T,2}(s)\left(1 + \frac{2c_1(s)}{T\Delta_1(s)}\right) +  \mathcal{D}_1\left(\frac{2c_1(s)}{T\Delta_1(s)}\right)\left(\mathcal{G}_{T,1}(s) +  \frac{8c_1(s)^2}{\Delta_1(s)^2}\right), \\
         \label{eq:G4}
         \mathcal{G}_{T,4}(s) &\coloneqq \frac{\mathcal{G}_{T,3}(s)}{T} + c_1(s). 
    \end{align}
\end{theorem}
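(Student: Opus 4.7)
The plan is to start from the Volterra identity \cref{eqn:Volterra} of \cref{prop:volterra},
\begin{equation*}
    \Omega_T(n/T) - I = -\frac{1}{T}\sum_{m=0}^{n-1} K_T(m/T)\Omega_T(m/T),
\end{equation*}
and show that this sum is $\mathcal{O}(1/T)$ with the claimed explicit gap dependence. Using \cref{eqn:Theta_V} I rewrite each summand as $K_T(s) = U_T^{A\dagger}(s+1/T) X(s) U_T^A(s+1/T)$ with $X(s) := T(I - V_T^\dagger(s))$. Inserting $I = P_0 + Q_0$ on both ends of each summand and using the intertwining relation $U_T^A(s) P_0 = P_T(s) U_T^A(s)$ from \cref{eq:projU}, the right-hand side splits into two \emph{diagonal} pieces (sandwiched by $P_0 \cdots P_0$ or $Q_0 \cdots Q_0$) and two \emph{off-diagonal} pieces.

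For the diagonal pieces, the representation $V_T = \mathcal{F}_T[I + DP_T(2P_T - I)]$ from \cref{lem:V_v2} shows that $P_T X P_T$ and $Q_T X Q_T$ are both controlled by $\|\mathcal{F}_T - I\|$ and $\|DP_T\|^2$. Combining the Taylor series \cref{eqn:F_Taylor} with the bound $\|DP_T\| \le 2c_1/(T\Delta_1)$ from \cref{lem:P} yields contributions matching the last two sums of \cref{eq:mainEqTheo}; the geometric factor $\mathcal{D}_1$ and the hypothesis $T \ge 2c_1(s)/\Delta_1(s)$ are precisely what guarantee convergence of that Taylor series.

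For the off-diagonal pieces, after the natural index shift I apply the summation by parts formula \cref{lem:sum_by_parts} with $Y = \Omega_T$. The resulting boundary term $\mathcal{B}$ of \cref{eq:Bou} is bounded via $\|\tilde X\| \le (2/\Delta_0)\|X\|$ from \cref{lem:Xt} combined with $\|V_T - I\| \le \mathcal{D}_2(\|DP_T\|)$, which follows from \cref{lem:V_bound2} by explicitly summing the Taylor series for $\mathcal{F}_T$. These account for the three leading $1/T$-scaling terms of the theorem. Inside the interior sum $\mathcal{S}/T$ of \cref{eq:Sum}, the operators $A$, $B$, $Z$ of \cref{eq:A,eq:B,eq:Z} arise, and their estimates in \cref{lem:ABZ2_v2}---together with $\|DV_T\|$ from \cref{lem:DV_v2}, $\|DW_T^A\|$ from \cref{lem:WAv3}, and $\|D\Omega_T\|$ from \cref{lem:DOmega2} applied to the $TDY$ factor---produce the remaining three $1/T^2$-scaling sums, with $\mathcal{G}_{T,1},\ldots,\mathcal{G}_{T,4}$ merely collecting the algebraic combinations that appear along the way.

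The main obstacle is bookkeeping rather than conceptual novelty: each $\mathcal{G}_{T,k}$ emerges only after stacking a Taylor remainder ($\mathcal{D}_1$ or $\mathcal{D}_3$) from $\mathcal{F}_T$, a resolvent estimate ($1/\Delta_k$ or $1/(1-\cos(\Delta_k/2))$) from one of the contours $\Gamma_T(s,k)$ of \cref{fig:contour_1,fig:contour_2}, and a finite-difference bound on $P_T$ from \cref{lem:P}. Propagating these cleanly through the $Z$-bound of \cref{lem:ABZ2_v2} while keeping both the overall $1/T$ saving and the linear-in-$c_2$ dependence transparent is the most delicate part, and it is also what forces the standing assumption $T \ge \max_s 2c_1(s)/\Delta_1(s)$ to be used throughout.
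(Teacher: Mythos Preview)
Your proposal is correct and follows essentially the same route as the paper: Volterra identity, $P_0/Q_0$ splitting, diagonal pieces via the $\mathcal{F}_T$ Taylor expansion, and off-diagonal pieces via the summation-by-parts lemma with $X(s)=T(I-V_T^\dagger(s-1/T))$, $Y(s)=\Omega_T(s-1/T)$, bounded through the chain \cref{lem:P}--\cref{lem:ABZ2_v2}. One small bookkeeping point: the third leading term $2\mathcal{D}_2(\cdot)$ in \cref{eq:mainEqTheo} comes not from $\mathcal{B}$ but from the endpoint summand $n=sT$ that must be split off before \cref{lem:sum_by_parts} (which runs only to $l=sT-1$) can be applied.
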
 

\begin{proof}
Starting from the definition of $K_T$ and Proposition~\ref{prop:volterra}, for any discrete time $s$, 
\begin{align}
    \|U_T(s)-U_T^A(s)\| &= \|\Omega_T(s)-I\| \nonumber\\
    &= \left\|\frac{1}{T}\sum_{n=0}^{sT-1}K_T\left(\frac{n}{T}\right)\Omega_T\left(\frac{n}{T}\right)\right\| \nonumber\\
    &= \left\|\sum_{n=0}^{sT-1}\left(I - \Theta_T\left(\frac{n}{T}\right)\right)\Omega_T\left(\frac{n}{T}\right)\right\| \nonumber\\
    &= \left\|\sum_{n=1}^{sT}{U_T^A}^{\dagger}\left(\frac{n}{T}\right)\left(I - V_T^{\dagger}\left(\frac{n_-}{T}\right)\right){U_T^A}\left(\frac{n}{T}\right)\Omega_T\left(\frac{n_-}{T}\right)\right\|. 
\end{align}
Note that in the summation by parts formula only the ``off-diagonal'' term is considered. 
This motivates us to further split the sum into ``diagonal'' and ``off-diagonal'' terms as 
    \begin{align}
    \|U_T(s)-U_T^A(s)\|&= \left\|\sum_{n=1}^{sT}{(P_0+Q_0)U_T^A}^{\dagger}\left(\frac{n}{T}\right)\left(I - V_T^{\dagger}\left(\frac{n_-}{T}\right)\right){U_T^A}\left(\frac{n}{T}\right)(P_0+Q_0)\Omega_T\left(\frac{n_-}{T}\right)\right\| \label{eqn:main_bound_split}\\ &\leq \left\|\sum_{n=1}^{sT}P_0{U_T^A}^{\dagger}\left(\frac{n}{T}\right)\left(I - V_T^{\dagger}\left(\frac{n_-}{T}\right)\right){U_T^A}\left(\frac{n}{T}\right)P_0\Omega_T\left(\frac{n_-}{T}\right)\right\| \label{eq:diag1}\\
    & \quad + \left\|\sum_{n=1}^{sT}Q_0{U_T^A}^{\dagger}\left(\frac{n}{T}\right)\left(I - V_T^{\dagger}\left(\frac{n_-}{T}\right)\right){U_T^A}\left(\frac{n}{T}\right)Q_0\Omega_T\left(\frac{n_-}{T}\right)\right\| \label{eq:diag2}\\
    & \quad + \left\|\sum_{n=1}^{sT}Q_0{U_T^A}^{\dagger}\left(\frac{n}{T}\right)\left(I - V_T^{\dagger}\left(\frac{n_-}{T}\right)\right){U_T^A}\left(\frac{n}{T}\right)P_0\Omega_T\left(\frac{n_-}{T}\right)\right\| \label{eq:offdiag1}\\
    & \quad + \left\|\sum_{n=1}^{sT}P_0{U_T^A}^{\dagger}\left(\frac{n}{T}\right)\left(I - V_T^{\dagger}\left(\frac{n_-}{T}\right)\right){U_T^A}\left(\frac{n}{T}\right)Q_0\Omega_T\left(\frac{n_-}{T}\right)\right\|\label{eq:offdiag2},
    \end{align}
where \cref{eq:diag1,eq:diag2} are the diagonal  and  \cref{eq:offdiag1,eq:offdiag2} are the off-diagonal components. For the ``diagonal'' term, it is possible to show
\begin{align}
        &\quad \left\|\sum_{n=1}^{sT}P_0{U_T^A}^{\dagger}\left(\frac{n}{T}\right)\left(I - V_T^{\dagger}\left(\frac{n_-}{T}\right)\right){U_T^A}\left(\frac{n}{T}\right)P_0\Omega_T\left(\frac{n_-}{T}\right)\right\| \nonumber\\
        &\leq \sum_{n=0}^{sT-1} \left\|I-\mathcal{F}_T\left(\frac{n}{T}\right)\right\|\left(1+\left\|DP_T\left(\frac{n}{T}\right)\right\|\right) + 3\sum_{n=0}^{sT-1}\left\|DP_T\left(\frac{n}{T}\right)\right\|^2. 
\end{align}
This is shown in \cref{ap:sec_diag}, where the result is given in \cref{eq:diag_p2}.
The reasoning for the term with $Q_0$ is identical and gives the same result.
Using \cref{lem:V_v2} one can show
\begin{equation}
    \|\mathcal{F}_T(s) - I\|\leq\mathcal{D}_1\left(\frac{2c_1(s)}{T\Delta_1(s)}\right)-1 .
\end{equation}
The steps for deriving the above bound are given in \cref{eq:bounF-i} and the function $\mathcal{D}_1$ is defined in \cref{eq:D_i}. Therefore, one obtains the following bound for the ``diagonal'' term
\begin{align}
    & \quad \left\|\sum_{n=1}^{sT}P_0{U_T^A}^{\dagger}\left(\frac{n}{T}\right)\left(I - V_T^{\dagger}\left(\frac{n_-}{T}\right)\right){U_T^A}\left(\frac{n}{T}\right)P_0\Omega_T\left(\frac{n_-}{T}\right)\right\| \nonumber\\
    &\leq \sum_{n=0}^{sT-1} \frac{12c_1(n/T)^2}{T^2\Delta_1(n/T)^2} +  \sum_{n=0}^{sT-1}\left(\mathcal{D}_1\left(\frac{2c_1(n/T)}{T\Delta_1(n/T)}\right)-1\right)\left(1+\frac{2c_1(n/T)}{T\Delta_1(n/T)}\right) \nonumber\\
    & \leq \sum_{n=0}^{sT-1} \frac{12c_1(n/T)^2}{T^2\Delta_1(n/T)^2} +  \sum_{n=0}^{sT-1}\frac{2c_1(n/T)^2}{T^2\Delta_1(n/T)^2}\left(1-\frac{2c_1(n/T)}{T\Delta_1(n/T)}\right)^{-1}, \label{eqn:diagonal} 
\end{align}
where in the last line we use the inequality $[(1-z^2)^{-1/2}-1](1+z)\le z^2/[2(1-z)]$ for all $0 \leq z < 1 $.
The exact same bound holds for the second diagonal term with $Q_0$ in \cref{eq:diag2}.
The reason is that we have treated the eigenspace of interest and the complementary eigenspace completely symmetrically.
Therefore exactly the same bounds hold with $P_T$ replaced with $Q_T$, and the above bound must continue to hold.

For the ``off-diagonal'' term we can similarly consider only the term with $Q_0$ on the left and $P_0$ on the right as in \cref{eq:offdiag1}, and exactly the same bound will hold for the other off-diagonal term in \cref{eq:offdiag2}.
Using \cref{lem:sum_by_parts} with $X(s) = T(1-V_T^{\dagger}(s-1/T))$ and $Y(s) = \Omega_T(s-1/T)$ (note the slight shift in time), it is possible to show
\begin{align}
    & \quad \left\|\sum_{n=1}^{sT}Q_0{U_T^A}^{\dagger}\left(\frac{n}{T}\right)\left(I - V_T^{\dagger}\left(\frac{n_-}{T}\right)\right){U_T^A}\left(\frac{n}{T}\right)P_0\Omega_T\left(\frac{n_-}{T}\right)\right\|\nonumber\\ 
     & \leq \frac{1}{T}\left\|\tilde{X}\left(\frac{1}{T}\right)\right\| + \frac{1}{T}\left\|\tilde{X}\left(s\right)\right\| + \frac{1}{T}\left\|X\left(s\right)\right\|
     + \frac{1}{T^2}\sum_{n=1}^{sT-1} \left\|Z\left(\frac{n}{T}\right)\right\|  + \frac{1}{T}\sum_{n=1}^{sT-1} \left\|\tilde{X}\left(\frac{n_+}{T}\right)\right\| \left\|DY\left(\frac{n}{T}\right)\right\|. 
\end{align}
See \cref{ap:offD} for the derivation and the result in \cref{eq:offDiag_B}.
By using \cref{lem:P,lem:DOmega2,lem:V_bound2,lem:DV_v2,lem:Xt,lem:ABZ2_v2}, we can show
\begin{align}
    \left\|X\left(\frac{n}{T}\right)\right\| &\leq T\mathcal{D}_2\left(\frac{2c_1(n_-/T)}{T\Delta_1
       (n_-/T)}\right), \\
      \left\|\tilde{X}\left(\frac{n}{T}\right)\right\| &\leq \frac{2T}{\Delta_0(n/T)}\mathcal{D}_2\left(\frac{2c_1(n_-/T)}{T\Delta_1(n_-/T)}\right), \\
        \left\|Z\left(\frac{n}{T}\right)\right\| &\leq \frac{4T^2}{\Delta_0\left(n/T\right)} \mathcal{D}_2\left(\frac{2c_1(n/T)}{\Delta_1(n/T)}\right)\mathcal{D}_2\left(\frac{2c_1(n_-/T)}{\Delta_1(n_-/T)}\right) + \frac{2\mathcal{G}_{T,3}(n_-/T)}{\Delta_1(n/T)} \nonumber\\
       &\quad+ \frac{2Tc_1(n/T)}{\pi  (1-\cos(\Delta_1(n/T)/2))} \mathcal{D}_2\left(\frac{2c_1(n_-/T)}{T\Delta_1(n_-/T)}\right) + \frac{2T\mathcal{G}_{T,4}(n_-/T)}{\Delta_0(n/T)}\mathcal{D}_2\left(\frac{2c_1(n_-/T)}{T\Delta_1(n_-/T)}\right), \\
        \left\|DY\left(\frac{n}{T}\right)\right\| &\leq \mathcal{D}_2\left(\frac{2c_1(n_-/T)}{T\Delta_1(n_-/T)}\right),
\end{align}
with $\mathcal{D}_2(x)$, $\mathcal{G}_{T,3}(n/T)$ and $\mathcal{G}_{T,4}(n/T)$ given in \cref{eq:D_i,eq:G3,eq:G4} respectively. The bounds of these operators are derived in \cref{ap:offD}; see \cref{eq:Xbnd,eq:Xtildebnd,eq:Zbnd,eq:Ybnd}.

Therefore
\begin{align}
    & \quad \left\|\sum_{n=1}^{sT}Q_0{U_T^A}^{\dagger}\left(\frac{n}{T}\right)\left(I - V_T^{\dagger}\left(\frac{n_-}{T}\right)\right){U_T^A}\left(\frac{n}{T}\right)P_0\Omega_T\left(\frac{n_-}{T}\right)\right\| \nonumber\\
    & \leq \frac{2}{\Delta_0(1/T)}\mathcal{D}_2\left(\frac{2c_1(0)}{T\Delta_1(0)}\right) +  \frac{2}{\Delta_0(s)}\mathcal{D}_2\left(\frac{2c_1(s-1/T)}{T\Delta_1(s-1/T)}\right) + \mathcal{D}_2\left(\frac{2c_1(s-1/T)}{T\Delta_1
       (s-1/T)}\right) \nonumber\\
       & \quad + \sum_{n=1}^{sT-1}2\left(\frac{1}{\Delta_0(n_+/T)} + \frac{2}{\Delta_0(n/T)}\right)\mathcal{D}_2\left(\frac{2c_1(n/T)}{T\Delta_1(n/T)}\right)\mathcal{D}_2\left(\frac{2c_1(n_-/T)}{T\Delta_1(n_-/T)}\right) \nonumber\\
       &\quad + \sum_{n=1}^{sT-1}\frac{2\mathcal{G}_{T,3}(n_-/T)}{T^2\Delta_1(n/T)} + \sum_{n=1}^{sT-1} \frac{2c_1(n/T)}{\pi T (1-\cos(\Delta_1(n/T)/2))} \mathcal{D}_2\left(\frac{2c_1(n_-/T)}{T\Delta_1
       (n_-/T)}\right)\nonumber\\ 
       &\quad+ \sum_{n=1}^{sT-1}\frac{2 \mathcal{G}_{T,4}(n_-/T)}{T\Delta_0(n/T)}\mathcal{D}_2\left(\frac{2c_1(n_-/T)}{T\Delta_1
       (n_-/T)}\right). \label{eqn:offdiagonal}
\end{align}
Finally, by using \cref{eqn:diagonal} and \cref{eqn:offdiagonal} in \cref{eqn:main_bound_split}, we obtain the required overall bound in \cref{eq:mainEqTheo}. 
\end{proof}

\subsection{Proof of the second adiabatic theorem}

Because the first form of the discrete adiabatic theorem is quite complicated, we give a simplified but looser form in \cref{cor:adia}.
In this subsection we provide the proof of that result.

\begin{proof}
The key ideas to obtain \cref{cor:adia} from \cref{theoAdia} are: replace the functions $c_1(s)$ and $c_2(s)$ by \cref{eq:chat}, which take into account neighbouring steps; replace the gaps $\Delta_k(s)$ by $\check{\Delta}(s)$ as defined in \cref{eq:fhat}, which takes into account the minimum gap in neighbouring steps; and bounding the higher-order terms by lower-order terms with a slightly more strict assumption on $T$, that it is no less than $\max_s (4\hat{c}_1(s)/\check{\Delta}(s))$. 

We first bound the functions $\mathcal{D}_k$ with simpler expressions. 
Recall that the definitions of $\mathcal{D}_k$ are
    \begin{equation}
        \mathcal{D}_1(z) = \frac{1}{\sqrt{1-z^2}}, \qquad \mathcal{D}_2(z) = \sqrt{\frac{1+z}{1-z}} - 1, \qquad \mathcal{D}_3(z)= \frac{z}{(1-z^2)^{3/2}}.
    \end{equation}
Notice that in \cref{theoAdia} all the arguments in $\mathcal{D}_k$ are in the form of $2c_1/(T\Delta_1)$, then, under the assumption on $T$, we are only interested in the case $0 \leq z \leq 1/2$. 
Then we have 
\begin{align}
\label{eq:upperB_D}
    \mathcal{D}_1(z) \leq \xi_1, \qquad
    \mathcal{D}_2(z) \leq \xi_2 z, \qquad
    \mathcal{D}_3(z)  \leq \xi_3 z
\end{align}
with constants $\xi_1 = 2/\sqrt{3}, \xi_2 = 2\sqrt{3}-2, \xi_3 = 8/(3\sqrt{3})$. 

Now we move on to the functions $\mathcal{G}_{T,k}$. 
For any positive integer $n$, from \cref{theoAdia} we need to bound $\mathcal{G}_{T,3}(n_-/T)$ and $\mathcal{G}_{T,4}(n_-/T)$, which in turn depend on $\mathcal{G}_{T,1}(n_-/T)$ and $\mathcal{G}_{T,2}(n_-/T)$. 
Using the inequality $1-\cos(\theta/2) = 2\sin^2(\theta/4) \geq \theta^2/\pi^2$ for all $0\leq \theta \leq \pi$, it is possible to show that
\begin{equation}\label{eq:GT1body}
    \mathcal{G}_{T,1}(n_-/T) \leq  \frac{2\pi\hat{c}_1(n/T)^2}{\check{\Delta}(n/T)^2} + \frac{2\hat{c}_2(n/T)}{\check{\Delta}(n/T)}. 
\end{equation}
That leads to the following upper bounds for the other main functions, 
\begin{equation}\label{eq:GT2body}
    \mathcal{G}_{T,2}(n_-/T)  \leq  \frac{4\pi\xi_3 \hat{c}_1(n/T)^3}{T\check{\Delta}(n/T)^3} + \frac{4\xi_3 \hat{c}_1(n/T)\hat{c}_2(n/T)}{T\check{\Delta}(n/T)^2}, 
\end{equation}
\begin{equation}\label{eq:GT3body}
    \mathcal{G}_{T,3}(n_-/T) \leq  \left(3\pi\xi_3/2 + (2\pi+8)\xi_1\right)\frac{\hat{c}_1(n/T)^2}{\check{\Delta}(n/T)^2} + \left(3\xi_3/2+2\xi_1\right)\frac{\hat{c}_2(n/T)}{\check{\Delta}(n/T)}, 
\end{equation}
and 
\begin{align}\label{eq:GT4body}
    \mathcal{G}_{T,4}(n_-/T) \leq \left(3\pi\xi_3/2 + (2\pi+8)\xi_1\right)\frac{\hat{c}_1(n/T)^2}{T\check{\Delta}(n/T)^2} + \left(3\xi_3/2+2\xi_1\right)\frac{\hat{c}_2(n/T)}{T\check{\Delta}(n/T)} + \hat{c}_1(n/T). 
\end{align}
See \cref{sec:uppMainF} for the details, where these bounds are given in \cref{eq:GT1app,eq:GT2app,eq:GT3app,eq:GT4app}.
Plugging all these estimates back to \cref{theoAdia} and using $1-\cos(\theta/2) = 2\sin^2(\theta/4) \geq \theta^2/\pi^2$ again, it is possible to show that
\begin{align}\label{eq:Uerrbndbody}
     \|U_T(s) - U_T^A(s)\| 
    & \leq \frac{8\xi_2\hat{c}_1(0)}{T\check{\Delta}(0)^2}+  \frac{8\xi_2\hat{c}_1(s)}{T\check{\Delta}(s)^2} + \frac{4\xi_2\hat{c}_1(s)}{T\check{\Delta}(s)} +  \sum_{n=1}^{sT-1} \left(48\xi_2^2 + 6\pi \xi_3 + (8\pi+32)\xi_1 + 8\pi \xi_2\right) \frac{\hat{c}_1(n/T)^2}{T^2\check{\Delta}(n/T)^3} \nonumber \\
       & \quad + \sum_{n=1}^{sT-1}(6\xi_3+8\xi_1)\frac{\hat{c}_2(n/T)}{T^2\check{\Delta}(n/T)^2}+ \sum_{n=0}^{sT-1} \frac{(32+8\xi_2) \hat{c}_1(n/T)^2}{ T^2 \check{\Delta}(n/T)^2} \nonumber \\
       & \quad + \sum_{n=1}^{sT-1}\left(12\pi\xi_2\xi_3+(16\pi+64)\xi_1\xi_2\right) \frac{\hat{c}_1(n/T)^3}{T^3\check{\Delta}(n/T)^4}  + \sum_{n=1}^{sT-1} \left(12\xi_2\xi_3+16\xi_1\xi_2\right)\frac{\hat{c}_1(n/T)\hat{c}_2(n/T)}{T^3\check{\Delta}(n/T)^3} . 
       \end{align}
This result is derived in \cref{sec:uppMainF}, \cref{eq:Uerrbndapp}.
Finally, for a clear representation in terms of the gap, we slightly modify some terms with $T^3$ on the denominator to $T^2$ by using the bounds $\hat{c}_1(s)/(T\check{\Delta}(s)) \leq 1/4$, then 
\begin{align}
     \|U_T(s) - U_T^A(s)\| 
    & \leq \frac{8\xi_2\hat{c}_1(0)}{T\check{\Delta}(0)^2}+  \frac{8\xi_2\hat{c}_1(s)}{T\check{\Delta}(s)^2} + \frac{4\xi_2\hat{c}_1(s)}{T\check{\Delta}(s)} +  \sum_{n=1}^{sT-1} \left(48\xi_2^2 + 6\pi \xi_3 + (8\pi+32)\xi_1 + 8\pi \xi_2\right) \frac{\hat{c}_1(n/T)^2}{T^2\check{\Delta}(n/T)^3} \nonumber \\
       & \quad + \sum_{n=1}^{sT-1}(6\xi_3+8\xi_1)\frac{\hat{c}_2(n/T)}{T^2\check{\Delta}(n/T)^2}+ \sum_{n=0}^{sT-1} \frac{(32+8\xi_2) \hat{c}_1(n/T)^2}{ T^2 \check{\Delta}(n/T)^2}  \nonumber \\
       & \quad + \sum_{n=1}^{sT-1}\left(3\pi\xi_2\xi_3+(4\pi+16)\xi_1\xi_2\right) \frac{\hat{c}_1(n/T)^2}{T^2\check{\Delta}(n/T)^3}  + \sum_{n=1}^{sT-1} \left(3\xi_2\xi_3+4\xi_1\xi_2\right)\frac{\hat{c}_2(n/T)}{T^2\check{\Delta}(n/T)^2}  \nonumber \\
       & = \frac{8\xi_2\hat{c}_1(0)}{T\check{\Delta}(0)^2}+  \frac{8\xi_2\hat{c}_1(s)}{T\check{\Delta}(s)^2} + \frac{4\xi_2 \hat{c}_1(s)}{T\check{\Delta}(s)} \nonumber \\
       & \quad + \left(48\xi_2^2 + 6\pi \xi_3 + (8\pi+32)\xi_1 + 8\pi \xi_2 + 3\pi\xi_2\xi_3+(4\pi+16)\xi_1\xi_2\right)\sum_{n=1}^{sT-1}\frac{\hat{c}_1(n/T)^2}{T^2\check{\Delta}(n/T)^3} \nonumber \\
       & \quad + \left(32+8\xi_2\right)\sum_{n=0}^{sT-1} \frac{ \hat{c}_1(n/T)^2}{ T^2 \check{\Delta}(n/T)^2} +  (6\xi_3+8\xi_1+3\xi_2\xi_3+4\xi_1\xi_2)\sum_{n=1}^{sT-1}\frac{\hat{c}_2(n/T)}{T^2\check{\Delta}(n/T)^2} \nonumber \\
       & \leq \frac{12\hat{c}_1(0)}{T\check{\Delta}(0)^2}+  \frac{12\hat{c}_1(s)}{T\check{\Delta}(s)^2} + \frac{6 \hat{c}_1(s)}{T\check{\Delta}(s)} \nonumber \\
       & \quad + 305\sum_{n=1}^{sT-1}\frac{\hat{c}_1(n/T)^2}{T^2\check{\Delta}(n/T)^3}  + 44\sum_{n=0}^{sT-1} \frac{ \hat{c}_1(n/T)^2}{ T^2 \check{\Delta}(n/T)^2} +  32\sum_{n=1}^{sT-1}\frac{\hat{c}_2(n/T)}{T^2\check{\Delta}(n/T)^2},
\end{align}
where the last inequality is derived by plugging the concrete values of $\xi_k$ into the bound and rounding the resulting constants to the closest integers greater than or equal to them.

\end{proof}

\section{Application: solving linear systems}
\label{sec:linsys}

\subsection{Preparing the walker}

In this section we apply \cref{theoAdia}, about adiabatic evolution in the discrete setting, to solve the quantum linear system problem. 
In adiabatic quantum computation, one usually uses a Hamiltonian that is a combination of two Hamiltonians as
\begin{equation}
\label{eq:Ham_prob}
    H(s) = (1-f(s))H_0 + f(s)H_1,
\end{equation}
where the function $f(s): [0,1] \rightarrow [0,1]$ is called the schedule function.
Normally $H_0$ is the Hamiltonian where the ground state is easy to prepare, and $H_1$ is the one where the ground state encodes the solution of the problem that we are trying to determine.
For the case of linear systems solvers, the ground state of $H(1)$ should encode the normalized solution for a linear system. In other words, for $A\in \mathbb{C}^{N\times N}$ an invertible matrix with $\|A\|=1$, and a normalized vector $\ket{b}\in \mathbb{C}^N$ the goal is to prepare a normalized state $\ket{\tilde{x}}$, which is an approximation of $\ket{x}=A^{-1}\ket{b}/\|A^{-1}\ket{b}\|$.
For precision $\epsilon$ of the approximation, we require $\|\ket{\tilde{x}} - \ket{x}\|\leq \epsilon$.
One can also bound the error in terms of $\|\ket{\tilde{x}}\bra{\tilde{x}} - \ket{x}\bra{x}\|$ as was done in some prior work \cite{PhysRevLett.122.060504,an2019quantum}, which is asymptotically equal (for small error).
Translating this problem to our theorem for the adiabatic evolution, $\ket{\tilde{x}}$ would be the state achieved from the steps of the walk, and $\ket{x}$ would be obtained from the ideal adiabatic evolution.

Beginning with the simplest case, where $A$ is Hermitian and positive definite, one takes the Hamiltonians  \cite{an2019quantum}
\begin{equation}
\label{eq:H0}
    H_0\coloneqq\begin{pmatrix} 0 &  Q_b\\
 Q_b & 0
\end{pmatrix},
\end{equation}
and
\begin{equation}
\label{eq:H1}
    H_1\coloneqq \begin{pmatrix} 0 &  AQ_b\\
 Q_bA & 0
\end{pmatrix},
\end{equation}
where $Q_b=I_N-\ket{b}\bra{b}$. 
The state $\ket{0,b}$ is an eigenstate of $H_0$ with eigenvalue 0, and one would aim for this to evolve adiabatically to eigenstate $\ket{0,A^{-1}b}$ of $H_1$.
There is also eigenstate $\ket{1,b}$ or both $H_0$ and $H_1$ with the same eigenvalue 0, but it is orthogonal and we will show that there is no crossover in the ideal adiabatic evolution using the walk.

Denoting the condition number of the matrix as $\kappa$, a lower bound for the gap of $H(s)$ is \cite{an2019quantum}
\begin{equation}
\label{eq:gapSc}
    \Delta_0(s)= 1- f(s) + f(s)/\kappa.
\end{equation}
Note that according to \cref{def:gaps}, $\Delta_0(s)$ is a lower bound on the exact gap between the eigenvalues, so we use an equality here rather than an inequality.

Since the goal is to get a schedule function which slows down the evolution as the gap becomes small, a standard condition for the schedule is \cite{jansen2007bounds}
\begin{equation}
\label{eq:gapCon}
    \dot{f}(s) = d_p\Delta_0^p(s),
\end{equation}
 where $f(0)=0$, $p>0$ and $d_p=\int_0^1\Delta_0^{-p}(u)\, du$ is a normalization constant chosen so that $f(1)=1$. It is possible to show that \cite{an2019quantum}
\begin{equation}
\label{eq:sched1}
f(s) = \frac{\kappa}{\kappa - 1}\left[1-\left(1+s\left(\kappa^{p-1}-1\right)\right)^{\frac{1}{1-p}}\right], 
\end{equation}
satisfies \cref{eq:gapCon}, but with $\Delta_0(s)$ replaced with the lower bound on the gap from \cref{eq:gapSc}. This schedule function has two properties that have useful applications to estimate the upper bounds for the difference between consecutive walker operators, namely that $f(s)$ is monotonic increasing and that $\dot{f}(s)$ is monotonic decreasing. 

Distinct from the continuous version of the adiabatic theorem, in our discrete version of the theorem, we have to take into account the gap between the different groups of eigenvalues of $W_T(s)$ for $s,s+1/T$ and $s+2/T$, as described in \cref{eqn:assump2}.
From the property that the gap function is monotonically increasing we have
\begin{equation}
\label{eq:Gaps}
    \Delta_k(s) = 1- f(s+k/T) + f(s+k/T)/\kappa, \quad k=0,1,2.
\end{equation}

In the case where $A$ is not positive definite but Hermitian, the Hamiltonians $H_0$ and $H_1$ can be modified.
In \cite{PhysRevLett.122.060504} the authors show that one can use a larger Hilbert space for both $H_0$ and $H_1$ as 
\begin{equation}
    H_0=\sigma_+ \otimes \left[(\sigma_z \otimes I_N)Q_{+,b}\right] + \sigma_- \otimes \left[Q_{+,b}(\sigma_z \otimes I_N)\right],
\end{equation}
where $Q_{+,b} = I_{2N}-\ket{+,b}\bra{+,b}$ and $\ket{\pm}=\frac{1}{\sqrt{2}}(\ket{0}\pm\ket{1})$, and 
\begin{equation}
\label{eq:H1_nonP}
    H_1=\sigma_+ \otimes \left[(\sigma_x \otimes A)Q_{+,b}\right] + \sigma_- \otimes \left[Q_{+,b}(\sigma_x \otimes A)\right].
\end{equation}
Here two ancilla qubits are needed to enlarge the matrix block. 
The solution of the linear system problem can be obtained if we can prepare the zero-energy state $\ket{0,+,b}$ of $H_0$. 
By replacing the Hamiltonians in \cref{eq:Ham_prob} for solving the QLSP it is possible to show that the spectral gap of $H(s)$ is lower bounded as $\Delta_0(s) \geq \sqrt{(1-f(s))^2 + (f(s)/\kappa)^2}$ \cite{PhysRevLett.122.060504}.
To avoid the need to use this formula, you can use the relation that for $0\leq f(s) \leq 1$, 
\begin{equation}
\sqrt{(1-f(s))^2 + (f(s)/\kappa)^2} \geq (1-f(s) + f(s)/\kappa)/\sqrt{2}.
\end{equation}
When $A$ is not positive definite, you can keep using the same schedule function from \cref{eq:sched1}, but the spectral gaps can instead be lower bounded by
\begin{equation}
\label{eq:GapsA_non}
    \Delta'_k(s) =  \left(1- f(s+k/T) + f(s+k/T)/\kappa\right)/\sqrt{2}, \quad k=0,1,2.
\end{equation}

The standard approach when $A$ is non-Hermitian is to construct a Hermitian matrix as
\begin{equation}\label{eq:Avec}
    \mathbf{A}\coloneqq\begin{pmatrix} 0 &  A\\
 A^{\dagger} & 0
\end{pmatrix},
\end{equation}
and use
\begin{equation}\label{eq:bvec}
    \mathbf{b}\coloneqq\begin{pmatrix} 0\\
 b
\end{pmatrix}.
\end{equation}
In our case we adopt a slightly different approach, and instead of replacing $A$ in \cref{eq:H1}, we replace $\sigma_x \otimes A$ in \cref{eq:H1} by $\mathbf{A}$.
As we will show, this gives the same lower bound on the gap without further expanding the dimension.
Then you have the final Hamiltonian
\begin{equation}
    H_1=\sigma_+ \otimes \left[\mathbf{A} Q_{\mathbf{b}}\right] + \sigma_- \otimes \left[Q_{\mathbf{b}} \mathbf{A}\right].
\end{equation}
Now define
\begin{equation}\label{eq:Af}
    A(f) \coloneqq (1-f) \sigma_z \otimes I_N + f \mathbf{A} =
    \begin{pmatrix} (1-f)I &  f A\\
 f A^{\dagger} & -(1-f)I
\end{pmatrix}
\end{equation}
so
\begin{equation}\label{eq:Hsencoding}
    H(s) = (1-f(s)) H_0 + f(s) H_1 = \begin{pmatrix}
     0 & A(f(s)) Q_{\mathbf{b}} \\
     Q_{\mathbf{b}} A(f(s)) & 0
    \end{pmatrix}
\end{equation}
Then it is found that
\begin{equation}
    H^2(s) = \begin{pmatrix}
     A(f(s)) Q_{\mathbf{b}} A(f(s)) & 0 \\
     0 & Q_{\mathbf{b}} A^2(f(s)) Q_{\mathbf{b}}
    \end{pmatrix}
\end{equation}
As per the analysis in the Supplementary Material of \cite{PhysRevLett.122.060504}, the spectra of $A(f(s)) Q_{\mathbf{b}} A(f(s))$ and $Q_{\mathbf{b}} A^2(f(s)) Q_{\mathbf{b}}$ are identical.
Moreover, following that analysis, the gap of $A(f(s)) Q_{\mathbf{b}} A(f(s))$ is lower bounded by the minimum eigenvalue of $A^2(f(s))$.
In this case, since
\begin{equation}
    A^2(f)  =
    \begin{pmatrix} (1-f)^2I+f^2 A A^\dagger &  0 \\
 0  & (1-f)^2I+f^2 A^\dagger A
\end{pmatrix},
\end{equation}
the minimum eigenvalue is $(1-f)^2+(f/\kappa)^2$.
This translates to a minimum gap of $H(s)$ of $\sqrt{(1-f(s))^2+(f(s)/\kappa)^2}$ as in the Hermitian case.

By using the qubitised quantum walk for the implementation of $W_T$, we can avoid the logarithmic factor in the complexity that arises from using the Dyson series to simulate continuous Hamiltonian evolution.
In order to block encode the Hamiltonian $H(s)$, one can use block encodings of both $H_0$ and $H_1$, supplemented with an ancilla qubit that will be rotated to select between $H_0$ and $H_1$. The rotation is given by
\begin{equation}
\label{eq:C-rot}
R(s)=\frac{1}{\sqrt{\left(1-f(s)\right)^2+f(s)^2}}\begin{pmatrix} 1-f(s) &  f(s)\\
 f(s) & -(1-f(s))
\end{pmatrix}.
\end{equation}
To block encode $A(f(s))$, instead of using symmetric rotations, we use the initial rotation $R(s)$, then apply the controlled operations
\begin{equation}
     \textsc{sel} = \ket{0}\bra{0} \otimes U_0 + \ket{1}\bra{1} \otimes U_1,
\end{equation}
where $U_0$ and $U_1$ are unitaries used for the block encodings of $\sigma_z\otimes I_N$ and $\mathbf{A}$.
Then after this operation, instead of applying the inverse of $R(s)$, we simply perform a Hadamard.
This means that, instead of block encoding $A(f(s))$, we have block encoded
\begin{equation}
    \frac 1{\sqrt{2[(1-f(s))^2+f(s)^2]}} A(f(s)).
\end{equation}
This prefactor is between $1/\sqrt{2}$ and 1, and will reduce the gap.

For the qubitisation it is important that the block encoding is symmetric.
For the complete qubitisation of $H(s)$ as in \cref{eq:Hsencoding}, there are two blocks, one with $Q_\mathbf{b}$ followed by $A(f(s))$, and the other with the reverse order.
These blocks independently are asymmetric, but together they give a Hermitian Hamiltonian, and the block encoding is symmetric.
We can apply exactly the same procedure to account for the asymmetry between the $R(s)$ and Hadamard.
That is, for one block in $H(s)$, we can use $R(s)$ at the beginning and the Hadamard at the end, and the other we reverse the order.
The overall encoding is symmetric, as required for qubitisation.
We can similarly use this approach for combining $H_0$ and $H_1$ from \cref{eq:H0,eq:H0} for the case where $A$ is positive definite and Hermitian.
The qubitised operator $W_T(s)$ is then obtained by combining the block encoding of $H(s)$ with a reflection on the control qubits.
For a complete description of the procedure, see \cref{app:blockHs}.

\subsection{Choosing values for \texorpdfstring{$c_1(s)$}{c1(s)} and \texorpdfstring{$c_2(s)$}{c2(s)}}

Now, to apply \cref{cor:adia} for the QLSP two things that should be estimated are the functions $c_1(s)$ and $c_2(s)$, which in turn require upper bounds for $DW_T(s)$ and $D^{(2)}W_T(s)$.
In order to bound the difference in $W_T(s)$, we can use the fact that the only way $W_T(s)$ is dependent on $s$ is through $R(s)$.
The key feature of this operation is that it has $R(s)$ in two cross-diagonal blocks (in the matrix representation).
As a result, the spectral norm of the difference of operators is equal to the spectral norm of the difference of $R(s)$.

\begin{lemma}
\label{lem:DR} 
For any $0 \leq s \leq 1-1/T$, with $W_T(s)$ encoded using the block encoding of $H(s)$ from \cref{eq:Hsencoding} together with $R(s)$ given in \cref{eq:C-rot}, it is consistent with \cref{def:difs} to choose
    \begin{equation}
\label{eq:cs}
    c_1(s)=2T(f(s+1/T)-f(s)),
\end{equation}
and
\begin{equation}
\label{eq:c2}
    c_2(s) = \begin{cases}
       2\max_{\tau\in \{s,s+1/T,s+2/T\}}\left(2|f'(\tau)|^2 + |f''(\tau)|\right), & 0 \leq s \leq 1-2/T, \\
       2\max_{\tau\in \{s,s+1/T\}}\left(2|f'(\tau)|^2 + |f''(\tau)|\right), & s = 1-1/T.
    \end{cases}
\end{equation}
\end{lemma}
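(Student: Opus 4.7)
The plan is to reduce the finite differences of $W_T(s)$ to those of the $2{\times}2$ rotation $R(s)$, and then bound those by elementary calculus. Every $s$-dependent ingredient of $W_T(s)$ enters only through $R(s)$ acting on an ancilla qubit, tensored with identity elsewhere and sandwiched between $s$-independent unitaries (the controlled selection $\textsc{sel}$, the Hadamard, the preparation of $\ket{b}$, and the qubitisation reflection). Consequently, for $k=1,2$,
\begin{equation*}
    D^{(k)} W_T(s) \;=\; U_L\bigl(D^{(k)} R(s)\otimes I\bigr)U_R,
\end{equation*}
for $s$-independent unitaries $U_L,U_R$, so $\|D^{(k)} W_T(s)\|=\|D^{(k)}R(s)\|$ and it suffices to bound $\|DR(s)\|$ and $\|D^{(2)}R(s)\|$.

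For the first difference I would parameterise by the angle $\alpha(s)\coloneqq\arctan(f(s)/(1-f(s)))$, giving $R(s)=\cos\alpha(s)\,\sigma_z+\sin\alpha(s)\,\sigma_x$. A direct $2{\times}2$ computation using sum-to-product identities yields $\|R(s+1/T)-R(s)\|=2|\sin((\alpha(s+1/T)-\alpha(s))/2)|\le |\alpha(s+1/T)-\alpha(s)|$. Differentiating, $\alpha'(s)=f'(s)/((1-f(s))^2+f(s)^2)$; since $(1-f)^2+f^2\ge 1/2$ and the schedule is monotonically increasing ($f'\ge 0$), integrating gives $|\alpha(s+1/T)-\alpha(s)|\le 2(f(s+1/T)-f(s))$. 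Hence $\|DW_T(s)\|\le 2(f(s+1/T)-f(s))=c_1(s)/T$.

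For the second difference, the integral representation $D^{(2)}R(s)=\int_0^{1/T}\!\int_0^{1/T} R''(s+u+w)\,du\,dw$ reduces the task to bounding $\|R''(v)\|$ on $v\in[s,s+2/T]$. Write $R(s)=\rho(f(s))$ with $\rho(f)=\cos\alpha(f)\,\sigma_z+\sin\alpha(f)\,\sigma_x$ and set $\beta(f)\coloneqq d\alpha/df=1/((1-f)^2+f^2)$. The chain rule yields
\begin{equation*}
    R''(s) \;=\; f''(s)\,\beta\,N \;+\; (f'(s))^2\bigl(\beta'\,N-\beta^2\,\rho\bigr),\qquad N\coloneqq -\sin\alpha\,\sigma_z+\cos\alpha\,\sigma_x.
\end{equation*}
The decisive algebraic fact is the anticommutation $\{N,\rho\}=0$, verified by a two-line $\sigma_z/\sigma_x$ calculation; together with $N^2=\rho^2=I$ this gives $(\beta' N-\beta^2\rho)^2=((\beta')^2+\beta^4)I$, hence the tight identity $\|\beta' N-\beta^2\rho\|=\sqrt{(\beta')^2+\beta^4}$. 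A short calculus check on $f\in[0,1]$ shows $\beta\le 2$ and $(\beta')^2+\beta^4\le 16$, so $\|\beta' N-\beta^2\rho\|\le 4$. Combined with $\|N\|=1$, this yields $\|R''(v)\|\le 2|f''(v)|+4(f'(v))^2=2(|f''(v)|+2(f'(v))^2)$. Because the schedule \eqref{eq:sched1} has $f'$ monotonically decreasing and $|f''|$ monotonic on $[0,1]$, the continuous supremum on $[s,s+2/T]$ is attained at one of the sampled times in $\{s,s+1/T,s+2/T\}$, producing the stated $c_2(s)=2\max_{\tau}(2(f'(\tau))^2+|f''(\tau)|)$.

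The main obstacle is the second-derivative bound. The naive triangle inequality $\|\beta' N-\beta^2\rho\|\le|\beta'|+\beta^2$ maximises over $f\in[0,1]$ to roughly $3\sqrt{3}/2+4\approx 6.6$, which would spoil the coefficient $4$ in front of $(f')^2$ that is needed to obtain the claimed $c_2$. Exploiting the anticommutation $\{N,\rho\}=0$ to replace the triangle inequality by $\|\beta' N-\beta^2\rho\|=\sqrt{(\beta')^2+\beta^4}$ is the only non-routine step, and it is precisely what tightens the constant to $4$ and matches the stated formula.
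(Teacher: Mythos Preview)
Your proof is correct and follows essentially the same strategy as the paper: reduce $D^{(k)}W_T$ to $D^{(k)}R$ (since only $R(s)$ carries the $s$-dependence), then bound the latter through the chain rule together with $\|dR/df\|\le 2$ and $\|d^2R/df^2\|\le 4$. The only difference is in how those two matrix-norm bounds are obtained. The paper writes out the $2\times2$ entries of $dR/df$ and $d^2R/df^2$ explicitly and computes their spectral norms directly, finding $\|d^2R/df^2\|=\sqrt{(16f(f-1)+5)/(2f(f-1)+1)^4}\in[\sqrt{5},4]$; you reach the identical quantity as $\sqrt{(\beta')^2+\beta^4}$ via the angle parameterisation $R=\cos\alpha\,\sigma_z+\sin\alpha\,\sigma_x$ and the anticommutation $\{N,\rho\}=0$. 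Your route makes it clearer why the naive triangle inequality $|\beta'|+\beta^2$ would overshoot, while the paper's brute-force computation is more mechanical; both yield the same constants. You are also slightly more careful than the paper in one place: you explicitly invoke the monotonicity of $f'$ and $|f''|$ under the schedule \eqref{eq:sched1} to pass from $\max_{\tau\in[s,s+2/T]}$ to the discrete $\max_{\tau\in\{s,s+1/T,s+2/T\}}$, a step the paper makes without comment.
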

\begin{proof}
As discussed above, to bound the difference in $W_T(s)$, we can use the difference in the rotation operation $R(s)$, which can be upper bounded as
\begin{align}
\label{eq:DR}
    \left\| R(s+1/T)-R(s) \right\| &= \left\| \int_{s}^{s+1/T} \frac{dR}{ds} ds \right\| \nn
    &= \left\| \int_{s}^{s+1/T} \frac{dR}{df}\frac{df}{ds} ds \right\| \nn
    &\le  \int_{s}^{s+1/T} \left\|\frac{dR}{df}\frac{df}{ds} \right\| ds \nn
    &\le 2 \int_{s}^{s+1/T} \left|\frac{df}{ds} \right| ds \nonumber \\
    &=2 \left(f(s+1/T)-f(s)\right) .
\end{align}
Here we have upper bounded the norm of $dR/df$ by 2.
To show this, the derivative of $R(s)$ with respect to $f$
\begin{equation}
\frac{dR}{df}=\frac{1}{[\left(1-f(s)\right)^2+f(s)^2]^{3/2}}\begin{pmatrix} -f(s) & 1- f(s)\\
 1-f(s) & f(s)
\end{pmatrix},
\end{equation}
so the norm of $dR/df$ is $1/[(1-f(s))^2+f(s)^2]$, which varies between 1 and 2.
We have also used the fact that $df/ds>0$.
Since $\|W_T(s+1/T)-W_T(s)\|=\|R(1+1/T)-R(s)\|$ and $c_1(s)$ required in \cref{def:difs} to satisfy
\begin{equation}
    \|W_T(s+1/T)-W_T(s)\| \le \frac{c_1(s)}T,
\end{equation}
we can take $c_1(s)$ as in \cref{eq:cs}.

Now for the second difference of the walk operator we use Taylor's theorem in two directions to give
    \begin{equation}
        W_T(s+2/T) = W_T(s+1/T) + \frac{1}{T}\frac{dW_T(s+1/T)}{ds} + \int_{s+1/T}^{s+2/T} (s+2/T-\tau)\frac{d^2W_T(\tau)}{d\tau^2}d\tau, 
    \end{equation}
    and 
    \begin{equation}
        W_T(s) = W_T(s+1/T) - \frac{1}{T}\frac{dW_T(s+1/T)}{ds} + \int_{s}^{s+1/T} (\tau-s)\frac{d^2W_T(\tau)}{d\tau^2}d\tau .
    \end{equation}
That gives
\begin{align}
        \|D^{(2)}R(s)\| &= \left\|\int_{s+1/T}^{s+2/T} (s+2/T-\tau)R''(\tau)d\tau + \int_{s}^{s+1/T} (\tau-s)R''(\tau)d\tau\right\| \nonumber \\
        & \leq \frac{1}{T^2} \max_{\tau \in [s,s+2/T]}\|R''(\tau)\|.
    \end{align} 
    
    Moving to the second derivative of $R(s)$ we have
    \begin{equation}
        \frac{d^2R(s)}{ds^2} =\frac{d^2R}{df^2}\left(\frac{df(s)}{ds}\right)^2 + \frac{dR}{df}\frac{df^2(s)}{ds^2},  
    \end{equation}
    where
\begin{equation}
\frac{d^2R}{df^2}=\frac{1}{[\left(1-f(s)\right)^2+f(s)^2]^{5/2}}\begin{pmatrix} (4f(s)-1)f(s) -1 & (4f(s)-7)f(s)+2\\
 (4f(s)-7)f(s)+2 & (1-4f(s))f(s) +1
\end{pmatrix},
\end{equation}
and its norm is
\begin{equation}
    \sqrt{\frac{16(f(s)-1)f(s) +5}{\left[2(f(s)-1)f(s) +1\right]^4}},
\end{equation}
which varies between $\sqrt{5}$ and 4. Then we conclude that
\begin{equation}
     \left\|D^{(2)}R\left(s\right)\right\|\leq \frac{2}{T^2}\max_{\tau\in \{s,s+1/T,s+2/T\}}\left(2|f'(\tau)|^2 + |f''(\tau)|\right).
\end{equation}
Now we have $\|D^{2}W(s)\|=\|D^{2}R(s)\|$ and \cref{def:difs} requires that $\|D^{2}W(s)\|\le c_2(s)/T^2$, so we can take $c_2(s)$ as in \cref{eq:c2}.
\end{proof}

\subsection{Linear \texorpdfstring{$\kappa$}{kappa} for \texorpdfstring{$p=3/2$}{p=3/2}}

Our next step is to show the strict linear dependence in $\kappa$ for the QLSP based on our discrete adiabatic theorem.
In the continuous case, it has been shown in~\cite{an2019quantum} that for all $1<p<2$, the corresponding AQC-based linear system solver can achieve $\mathcal{\kappa/\epsilon}$ scaling.
This suggests that taking $p$ as the midpoint of $3/2$ will give high efficiency.
Here we consider this case to estimate the constant factors in the algorithm.

\begin{theorem}[Strict linear dependence in $\kappa$]\label{theo:p15} Consider solving the QLSP $Ax=b$ for a normalised state $\ket{A^{-1}b}$, where $\|A\|=1$ and $\|A^{-1}\|=\kappa$.
By using $T \geq \kappa$ steps of a quantum walk and the schedule function of \cref{eq:sched1} with $p=3/2$,
in the case of a positive-definite and Hermitian $A$ the error in the solution may be bounded as
    \begin{equation}
    \label{eq:analyticWalk}
        \|U_T(s) - U_T^A(s)\| \leq 5632\frac{\kappa}{T} + {\cal O}\left(\frac{\sqrt{\kappa}}{T}\right),
    \end{equation}
using the encoding of $H_0$ and $H_1$ as in \cref{eq:H0,eq:H1}.
For general $A$, by using the encoding of $H(s)$ as in \cref{eq:Hsencoding} the error may be bounded by
\begin{equation}
\label{eq:analyticWalk2}
\|U_T(s) - U_T^A(s)\| \leq
15307\frac{\kappa}{T} + {\cal O}\left(\frac{\sqrt{\kappa}}{T}\right).
\end{equation}
\end{theorem}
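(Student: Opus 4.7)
The plan is to instantiate \cref{cor:adia} for the quantum walk constructed in this section using the schedule function $f(s)$ from \cref{eq:sched1} with $p=3/2$, then carefully bound each of the six terms in the resulting expression. I would begin by writing out the schedule explicitly. For $p=3/2$ one obtains the closed form $\Delta_0(s)=(1+s(\sqrt{\kappa}-1))^{-2}$, so that $\dot f(s)=d_{3/2}\Delta_0(s)^{3/2}$ with $d_{3/2}=2\kappa/(\sqrt{\kappa}+1)\le 2\sqrt{\kappa}$, and the second derivative can be computed directly, giving $|\ddot f(s)|\lesssim d_{3/2}^2 \Delta_0(s)^2$. Via \cref{lem:DR}, $c_1(s)=2T(f(s+1/T)-f(s))$ and $c_2(s)$ is governed by $2|\dot f|^2+|\ddot f|$. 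Using the monotonicity of $\dot f$ and $\Delta_0$ (which are both decreasing), I would then derive clean upper bounds $\hat c_1(s)\le 2\dot f(s)$ and $\hat c_2(s)\le C(\dot f(s)^2+|\ddot f(s)|)\le C'\kappa\Delta_0(s)^2$, together with $\check\Delta(s)\ge \Delta_0(s+2/T)$, where neighbouring-step ratios are controlled since $T\ge\kappa$ keeps consecutive gaps within constant factors of each other.

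Next I would bound each of the six terms in \cref{cor:adia} separately. The three boundary terms $\hat c_1(0)/(T\check\Delta(0)^2)$, $\hat c_1(s)/(T\check\Delta(s)^2)$, $\hat c_1(s)/(T\check\Delta(s))$ all scale as $\sqrt{\kappa}/T$ at worst (at $s=0$ we have $\Delta_0=1$, and at $s=1$ the factor $\hat c_1$ decays fast enough), so these contribute only to the $\mathcal{O}(\sqrt{\kappa}/T)$ remainder. The three sums are the engine of the linear-in-$\kappa$ scaling: one obtains
\begin{equation}
\sum_{n=1}^{sT-1}\frac{\hat c_1(n/T)^2}{T^2\check\Delta(n/T)^3}\le \frac{4d_{3/2}^2}{T^2}\sum_{n=0}^{T-1}1\cdot \frac{1}{T}\cdot T\le \frac{4d_{3/2}^2}{T},
\end{equation}
since $\hat c_1^2/\check\Delta^3\lesssim d_{3/2}^2$ pointwise; similarly $\hat c_2/\check\Delta^2\lesssim d_{3/2}^2$, giving a second contribution of the same order; and the $\hat c_1^2/\check\Delta^2$ sum is smaller by a factor of $\langle\Delta_0\rangle\sim 1/\sqrt{\kappa}$, contributing only to the subleading term. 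Combining with $d_{3/2}^2\le 4\kappa$ and the explicit constants $305$, $44$, $32$ from \cref{cor:adia}, I would track the prefactor of $\kappa/T$ through and round up to $5632$. The remaining discrete-to-continuous corrections — replacing $c_1(s)$ by $2\dot f(s)$, bounding Riemann sums by integrals via monotonicity, and passing from $\check\Delta(s)$ to $\Delta_0(s)$ at neighbouring points — only affect lower-order $\sqrt{\kappa}/T$ terms because $T\ge\kappa$ ensures the relative size of a single-step drift in $\Delta_0$ is $\mathcal{O}(1/\sqrt{\kappa})$.

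For the second bound, on general (possibly non-Hermitian) $A$, I would note that the construction in \cref{eq:Hsencoding} is a block-encoding of $A(f(s))/\sqrt{2[(1-f(s))^2+f(s)^2]}$. The analysis of $H^2(s)$ given in the paper shows that the minimum eigenvalue gap of $A(f)$ is $\sqrt{(1-f)^2+(f/\kappa)^2}$, and after dividing by the normalisation one obtains a gap lower bound of $\Delta_0(s)/\sqrt{2}$ with $\Delta_0(s)$ as before. Because the normalisation is $s$-independent only up to an $\mathcal{O}(1)$ prefactor, the bounds on $\hat c_1$ and $\hat c_2$ pick up the corresponding Jacobian from differentiating $R(s)$, while the gap factors in the denominators of \cref{cor:adia} pick up powers of $\sqrt{2}$. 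Pushing these factors through the sums, the cubic $1/\check\Delta^3$ term is the dominant one and contributes a factor of roughly $(\sqrt 2)^3$ over the Hermitian case; propagating these factors through the arithmetic yields the claimed constant $15307$.

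The main obstacle will be the bookkeeping of constants. The analytic structure is clean — everything reduces to elementary integrals of powers of $(1+s(\sqrt{\kappa}-1))$ — but getting a tight, honest prefactor in front of $\kappa/T$ requires careful, non-asymptotic handling at each step: one cannot simply drop lower-order contributions from the boundary terms or from the $\hat c_1^2/\check\Delta^2$ sum because those contribute to the $\mathcal{O}(\sqrt{\kappa}/T)$ remainder whose implicit constant one needs to keep bounded. The most delicate step is verifying $\check\Delta(s)\ge c\,\Delta_0(s)$ with a constant $c$ close to $1$ under the hypothesis $T\ge\kappa$, since this is what allows the pointwise upper bounds on $\hat c_1^2/\check\Delta^3$ and $\hat c_2/\check\Delta^2$ to come out with small constants rather than inflating the prefactor.
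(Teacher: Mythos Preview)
Your overall plan is the same as the paper's --- instantiate \cref{cor:adia} and bound each of the six terms --- and the structural reasoning (that $\hat c_1^2/\check\Delta^3\lesssim d_{3/2}^2$ and $\hat c_2/\check\Delta^2\lesssim d_{3/2}^2$ are essentially constant in $s$, so the sums give $\mathcal{O}(\kappa/T)$) is correct. However, you make one concrete error: the boundary term at $s=1$ is \emph{not} subleading. One has $\check\Delta(1)=1/\kappa$ while $\hat c_1(1)\approx 2\dot f(1)=2d_{3/2}\Delta_0(1)^{3/2}\approx 4/\kappa$, so
\[
\frac{\hat c_1(1)}{T\check\Delta(1)^2}\approx \frac{4/\kappa}{T\cdot(1/\kappa)^2}=\frac{4\kappa}{T},
\]
a leading-order contribution. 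Indeed, the paper's constant is exactly $305\cdot 16 + 32\cdot 22 + 12\cdot 4 = 4880+704+48 = 5632$, so the $48$ from this boundary term is required; your dismissal of it as $\mathcal{O}(\sqrt\kappa/T)$ is inconsistent with the constant you then claim to recover.

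For the general-$A$ bound, the paper's argument is simpler than what you sketch: the $c_k(s)$ are unchanged (they depend only on $R(s)$, which is the same encoding), and only the gap is weakened to $\check\Delta/\sqrt 2$. Hence the terms with $\check\Delta^{-1},\check\Delta^{-2},\check\Delta^{-3}$ pick up factors $\sqrt 2,2,2\sqrt 2$ respectively, and summing the rescaled leading contributions gives $12\cdot 4\cdot 2 + 305\cdot 16\cdot 2\sqrt 2 + 32\cdot 22\cdot 2\approx 15307$. There is no Jacobian correction to $\hat c_1,\hat c_2$.
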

\begin{proof}
In \cref{cor:adia} there are six terms to bound, three which are individual terms and three which are sums.
The details of the derivations of bounds on these are given in \cref{ap:upperB_cor}. Namely for the individual terms, it is shown in \cref{appsec:single} (\cref{eq:firstupperB,eq:SecondupperB,eq:thirdupperB}) that
\begin{align}
\label{eq:single_c(0)}
\frac{\hat{c}_1(0)}{T\check{\Delta}(0)^2}
&=\frac{4\sqrt{\kappa}}{T} + \mathcal{O}\left( \frac{\kappa}{T^2} \right), \\
\label{eq:single_c(1)}
\frac{\hat{c}_1(1)}{T\check{\Delta}(1)^2} &
=\frac{4\kappa}{T} + \mathcal{O}\left( \frac{\kappa}{T^2} \right), \\
\label{eq:single_c(1)_2}
\frac{\hat{c}_1(1)}{T\check{\Delta}(1)} &
=\frac{4}{T} + \mathcal{O}\left( \frac{1}{T^2} \right) ,
\end{align}
and for the sums with $c_1(s)$, it is shown in \cref{appsec:summc1}  that
\begin{align}
\label{eq:sumc1}
\sum_{n=1}^{T-1}\frac{\hat{c}_1(n/T)^2}{T^2\check{\Delta}(n/T)^3}
&= \frac{16 \kappa}{T}+ \mathcal{O} \left( \frac{\kappa^{3/2}}{T^2} \right),\\
\label{eq:sumc1_2}
 \sum_{n=0}^{T-1} \frac{ \hat{c}_1(n/T)^2}{ T^2 \check{\Delta}(n/T)^2} &\leq \frac{16}{T} +\mathcal{O} \left( \frac {\kappa}{T^2} \right).
\end{align}
Finally for the sum with $c_2(s)$, it is shown in \cref{appsec:sumc2} that 
\begin{equation}
\label{eq:sumc2}
\sum_{n=1}^{T-1} \frac{ \hat{c}_2(n/T)}{ T^2 \check{\Delta}(n/T)^2} \leq 
\frac {22\kappa}{T} + \mathcal{O}\left( \frac{\kappa^{3/2}}{T^2}\right).
\end{equation}

These results are for the case where $A$ is positive definite and Hermitian.
By adding all the inequalities above, and including the constant factors in \cref{cor:adia}, we obtain the total upper bound in \cref{eq:analyticWalk}.
Note that we only include the leading term proportional to $\kappa/T$, and terms with scalings such as $\kappa^{3/2}/T^2$ are order $\sqrt{\kappa}/T$ due to the requirement that $T>\kappa$.
For the case of general $A$ (which need not be positive definite or Hermitian), the spectral gap can be lower bounded using an extra factor of $1/\sqrt 2$ as in \cref{eq:GapsA_non}.
This means that the upper bounds on the terms with $\check{\Delta}(s)$,  $\check{\Delta}(s)^2$ and  $\check{\Delta}(s)^3$ in the denominator may be multiplied by $\sqrt{2}$, 2 and $2\sqrt{2}$, respectively, to give new upper bounds that hold in the case of general $A$.
Adding these terms together with the weightings from \cref{cor:adia} then gives the upper bound in \cref{eq:analyticWalk2}.

There are two further subtleties in using the adiabatic algorithm for the solution.
One is that the zero eigenvalue of the Hamiltonian is degenerate, with one giving the solution, and the other just the state $\ket{b}$ but with a bit flip in an ancilla.
Because these eigenstates are orthogonal (due to the bit flip), there is no crossover between them in the adiabatic evolution.
This means that the degeneracy has no effect on the quality of the solution; see \cref{app:phasefactor}.

A further subtlety is that the qubitised quantum walk yields two eigenstates for each eigenstate of the Hamiltonian.
For the case here, the eigenvalue of the Hamiltonian we are interested in is $0$, which gives eigenvalues $\pm 1$ of the walk operator.
We may use the discrete adiabatic theorem separately on each of these eigenvalues to show that the eigenstate is preserved in the discrete adiabatic evolution.
The problem is that we need to have a positive superposition of these two eigenstates, which means that there should be no relative phase factor introduced in the adiabatic evolution.
It is shown that there is no relative phase factor in \cref{app:phasefactor}.
Therefore, neither of these subtleties has an effect on the solution, and no adjustment to the bounds is required.
\end{proof}

\subsection{General \texorpdfstring{$p$}{p}}

In this subsection, we will show that the $\mathcal{\kappa/\epsilon}$ scaling also holds for all $1<p<2$ in the discrete setting.
This result is more general, but due to a number of approximations will not be as tight an estimate as that for the specific case of $p=3/2$.
Since here we do not assume a specific value of $p$, the direct computation approach in proving \cref{theo:p15} is not applicable.
Instead, we will approximate the upper bound of the discrete error by some continuous integrals and then bound both the integrals and the approximation errors. 
More precisely, we first notice that in \cref{cor:adia}, the dominant terms are the last three terms, the summations over equidistant discrete time steps. 
These summations are exactly in the form of Riemann sum and approximate some integrals. 
Then the dominant part of the discrete adiabatic errors can be bounded by some integrals plus the difference between the integrals and corresponding Riemann sums. 
Similar to what has been shown in~\cite{an2019quantum}, the integrals exactly scale $\mathcal{O}(\kappa/T)$. 
The difference between the integrals and Riemann sums is indeed of higher order according to the error bound of the first order quadrature formula. 
Combining all these together, we can prove that the discrete adiabatic error for general $1<p<2$ also scales as $\mathcal{O}(\kappa/T)$, which further implies an $\mathcal{O}(\kappa/\epsilon)$ complexity of the discrete AQC-based algorithm to solve the linear system problem within $\epsilon$ error. 

We summarize the main result in the following theorem. 
A detailed proof is given in \cref{app:qlsp_general_p}. 
\begin{theorem}[Linear dependence on $\kappa$ for general $p$]\label{thm:qlsp_general_p}
    Consider using $T$ steps of the discrete adiabatic evolution with the schedule function defined in \cref{eq:sched1} for $1 < p < 2$ to solve the QLSP with general matrix $A$. 
    Then 
    \begin{enumerate}
        \item for any $\kappa > 2$ and $T \geq 32d_p/3 = \mathcal{O}(\kappa^{p-1})$, there exists a positive constant $C_p$, which only depends on $p$, such that the difference between the discrete adiabatic evolution and the solution of the linear system problem can be bounded by 
    \begin{equation}
         C_p \left(\frac{\kappa}{T}+\frac{\kappa^{p-1}}{T} + \frac{\kappa}{T^2} + \frac{1}{T}\right), 
    \end{equation}
    \item in order to prepare an $\epsilon$-approximation of the solution of the linear system problem, it suffices to choose 
    \begin{equation}
        T = \mathcal{O}\left(\frac{\kappa}{\epsilon}\right). 
    \end{equation}
    \end{enumerate}
\end{theorem}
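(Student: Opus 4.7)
The plan is to apply the Second Discrete Adiabatic Theorem (\cref{cor:adia}) to the qubitised walk $W_T(s)$ built from the block-encoding of $H(s)$ in \cref{eq:Hsencoding} with the schedule $f(s)$ of \cref{eq:sched1}, and to bound each of the six terms appearing there. Using \cref{lem:DR} together with $\dot f(s) = d_p\,\Delta_0(s)^p$ and $d_p = \Theta(\kappa^{p-1}/(p-1))$, we have $\hat{c}_1(s)=\Theta(d_p\,\Delta_0(s)^p)$ and, by one more differentiation, $\hat{c}_2(s)=\mathcal{O}(d_p^2\,\Delta_0(s)^{2p-1})$ (the $|f'|^2$ contribution in \cref{eq:c2} carries an extra factor of $\Delta_0$ and is dominated by $|f''|$). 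Under this substitution every quantity in \cref{cor:adia} becomes a function of $\Delta_0(s)$, $p$, $\kappa$ and $T$ only, and the hypothesis $T\ge 32\,d_p/3$ is precisely the quantitative form of $T\ge \max_s 4\hat{c}_1(s)/\check{\Delta}(s)$, attained at $s=0$ where $\Delta_0=1$.

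For the three boundary (non-sum) terms I would evaluate at the endpoints using $\Delta_0(0)=1$, $f'(0)=d_p$, $\Delta_0(1)=1/\kappa$ and $f'(1)=d_p\,\kappa^{-p}$, producing directly
\begin{equation}
\frac{\hat{c}_1(0)}{T\,\check{\Delta}(0)^2}=\mathcal{O}\!\left(\frac{\kappa^{p-1}}{T}\right),\qquad
\frac{\hat{c}_1(1)}{T\,\check{\Delta}(1)^2}=\mathcal{O}\!\left(\frac{\kappa}{T}\right),\qquad
\frac{\hat{c}_1(1)}{T\,\check{\Delta}(1)}=\mathcal{O}\!\left(\frac{1}{T}\right),
\end{equation}
which account for three of the four pieces of the claimed bound.

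For the three sums I would replace each Riemann sum by its integral on $[0,1]$ and control the discretisation error separately. Substituting $u=\Delta_0(s)$, so that $du=-(1-1/\kappa)\,d_p\,u^p\,ds$, the integral $\int_0^1 c_1(s)^2/\Delta_0(s)^k\,ds$ collapses to $(d_p/(1-1/\kappa))\int_{1/\kappa}^1 u^{p-k}\,du$, which is elementary: $\Theta(\kappa/(2-p))$ for $k=3$ and $\Theta(\kappa^{p-1}/(p-1)^2)$ for $k=2$; the $\hat{c}_2$ sum has the same integrand shape $d_p^2\,u^{2p-3}$ as the $k=3$ case. With the extra $1/T$ prefactor from \cref{cor:adia} these contribute $\mathcal{O}(\kappa/T)$, $\mathcal{O}(\kappa^{p-1}/T)$ and $\mathcal{O}(\kappa/T)$ respectively. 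For the Riemann-sum-to-integral correction, each integrand is monotone in $s$ with endpoint values at most $\Theta(\kappa)$ in the dominant $k=3$ case (and smaller in the others), so the rectangle-rule bound $|\tfrac{1}{T}\sum g(n/T)-\int g\,ds|\le \operatorname{Var}(g)/T=\mathcal{O}(\kappa/T)$ yields a discretisation error of $\mathcal{O}(\kappa/T^2)$, exactly the last piece of the target bound.

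The main obstacle will be bookkeeping uniformly in $p\in(1,2)$: the factors $1/(p-1)$ inside $d_p$ and $1/(2-p)$ from the $u^{p-3}$ integral both diverge at the endpoints, but they are absorbed into the $p$-dependent constant $C_p$. A second subtlety is that for $p>3/2$ the integrand $d_p^2 u^{2p-3}$ is monotone decreasing rather than increasing, so the worst Riemann-sum error actually sits near $s=0$ with value $\Theta(\kappa^{2p-2})$; using $T\ge\Theta(d_p)=\Theta(\kappa^{p-1})$ this is reabsorbed into the $\kappa^{p-1}/T$ term. Once the four contributions $\kappa/T$, $\kappa^{p-1}/T$, $\kappa/T^2$ and $1/T$ are collected, part (1) follows, and since the dominant scaling is $\kappa/T$ (because $p-1<1$), choosing $T=\mathcal{O}(\kappa/\epsilon)$ drives the total error below $\epsilon$, giving part (2).
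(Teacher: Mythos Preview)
Your plan is correct and mirrors the paper's proof: both apply \cref{cor:adia}, convert $\hat c_k,\check\Delta$ to powers of $\Delta_0$ via \cref{lem:DR} and the schedule relation $f'=d_p\Delta_0^p$, bound the three boundary terms directly, and replace the three sums by integrals under the change of variable $u=\Delta_0(s)$ (equivalently $x=f(s)$ in the paper). Two small places where the paper is more explicit: it isolates a lemma (\cref{lem:qlsp_time_unify}) showing $\Delta_0$ changes by at most a factor $4/3$ over four consecutive steps, which you are tacitly assuming when you write $\hat c_1=\Theta(d_p\Delta_0^p)$ and $\check\Delta=\Theta(\Delta_0)$; and for the Riemann-sum discretisation error it uses the pointwise bound $|g'(s)|\le d_p\,|g(s)|$ together with $d_p/T\le 3/32$ to obtain a self-referential inequality for the sum, which delivers the $\kappa/T^2$ term uniformly in $p$ and avoids your case split at $p=3/2$.
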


We give an explicit formula for $C_p$ in \cref{eq:Cpdef} in \cref{app:qlsp_general_p}.
We remark that \cref{thm:qlsp_general_p} only guarantees the asymptotic performance of the discrete AQC-based solvers, and the pre-constant $C_p$ is much larger than what we observe numerically. 
In particular, \cref{thm:qlsp_general_p} also holds for the case when $p=3/2$, but the pre-constant in \cref{thm:qlsp_general_p} is much larger than that in \cref{theo:p15}. 
This is because in \cref{thm:qlsp_general_p} we use a general proof strategy, which is applicable for all $1<p<2$ at a sacrifice of using potentially unnecessary inequalities to simplify the analysis. 
These inequalities are definitely not sharp and thus result in a worse pre-constant than that obtained by direct computations in the proof of \cref{app:qlsp_general_p}. 

\subsection{Numerical Results}

\begin{figure}[b]
	\includegraphics[trim= -160 0 0 0 80,clip,scale=0.6]{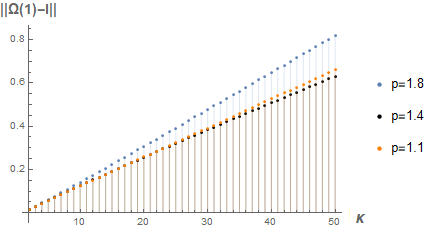}\\
    \caption{This figure shows the upper bound on the error in the adiabatic evolution versus the condition number $\kappa$ for a range of values of $p$ used in the scheduling function $f(s)$.
    The upper bound on the error is computed using \cref{theoAdia}, and
    in all cases the number of steps of the walk is $T=5\times 10^4$.
		\label{fig:blockerror}
	}
\end{figure}

\begin{figure}[tb]
	\includegraphics[trim= -160 0 0 0,clip,scale=0.60]{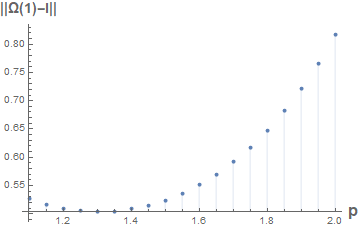}\\
    \caption{The upper bound on the error in the adiabatic evolution as a function of $p$ used in the scheduling function $f(s)$. In this plot we have used constant values $\kappa=40$ and $T=5\times10^4$.
		\label{fig:errBlockp}
 	}
\end{figure}

We first report the numerical results for the case where $A$ is a Hermitian and positive matrix. Rather than using the upper bounds for the first and second differences of the walk operator from \cref{lem:DR}, we exactly compute the norm of $DR$ and $D^{(2)}R$ in order to give values of $c_1(s)$ and $c_2(s)$ in \cref{theoAdia}.
We also account for the fact that the gap actual gap for the quantum walk operator is the $\arcsin$ of that in \cref{eq:Gaps}. 

In~\cref{fig:blockerror} we show the numerical results for the (upper bound on the) error as a function of the condition number $\kappa$ of the matrix $A$.
We use a fixed number of steps $T=5\times 10^4$ and three different values of $p$ for the schedule function of \cref{eq:sched1}.
In each case it can be seen that the error is approximately linear in $\kappa$, which is what results in an overall complexity which is linear in $\kappa$.
The different values of $p$ result in different scaling constants with values close to 1 or 2 giving poorer scaling, which is as expected since we require $1<p<2$.

To more clearly see the dependence of the error in $p$,
in \cref{fig:errBlockp} we show the error as a function of $p$ for constant $\kappa$ (of 40).
In this case it turns out that the smallest error is for $p=1.3$, which is on the lower side of the range $(1,2)$, and smaller than the value $p=3/2$ chosen for \cref{thm:qlsp}.
From \cref{fig:errBlockp} we can also estimate the constant factors for the $\kappa/T$ scaling of the error.
In the case with $p=3/2$, for instance, as was used in \cref{theo:p15}, we have $\|U_T(s)-U_T^A(s)\|\lesssim 638 \kappa/T$. The estimate of the constant factor in \cref{theo:p15} is around 9 times bigger.
This is not unreasonable considering the many approximations made, though it indicates that the constant factor in the analysis can be improved by a more careful analysis.

\section{Filtering for solving linear equations}
\label{sec:filter}

To provide a solution to linear equations using the adiabatic method, one can use the approach of \cite{Lin2020optimalpolynomial} where the initial adiabatic algorithm is used to find the solution to some constant error (independent of $\epsilon$), then the solution can be filtered.
The approach used in \cite{Lin2020optimalpolynomial} was to apply filtering by singular value processing (similar to quantum signal processing), which is efficient and only needs one ancilla qubit, but has the drawback that it requires a highly complicated procedure for finding the correct rotation angles.
Here we provide a method using a linear combination of unitaries with similar efficiency, and only requiring two ancilla qubits (one more than singular value processing).
This has the advantage that determining the sequence of gates needed is much simpler.

The filtering by a linear combination of unitaries is similar in principle to measuring the eigenvalue of the Hamiltonian to ensure that the system is still in the ground state.
However, that does not achieve quite what we want, because it will typically produce an estimate different from the required eigenvalue.
Using a linear combination of unitaries, it may be chosen such that the ``success'' case is obtained with high probability, and one then need only consider the amplitude for incorrect states in the final state.

This is related to the principle of using symmetric states in a linear combination of unitaries.
A phase measurement would be equivalent to using preparation with the desired amplitudes at the beginning, then inverse preparation on an equal superposition at the end.
In contrast, the approach that maximises the success probability for a linear combination of unitaries is to use symmetric preparation before and after the controlled operations.
Calling the desired weights $w_j$, we would initially prepare the control register in the state
\begin{equation}\label{eq:symmsta}
    \frac 1{\sqrt{\sum_j w_j}} \sum_j \sqrt{w_j} \ket{j} .
\end{equation}
Given that we are performing $j$ steps of the walk, and the input system state is an eigenvector of the walk with eigenvalue $e^{i\phi}$, the resulting state is
\begin{equation}
    \frac 1{\sqrt{\sum_j w_j}} \sum_j \sqrt{w_j} e^{ij\phi} .
\end{equation}
Then projecting on the same state as in \cref{eq:symmsta} gives
\begin{equation}
    \frac 1{\sum_j w_j} \sum_j w_j e^{ij\phi} .
\end{equation}

In practice, the target register will be a superposition of the eigenstates
\begin{equation}
    \sum_k \psi_k \ket{k},
\end{equation}
where we are using $\ket{k}$ to indicate the eigenstate of $W_T(1)$ corresponding to eigenvalue $\phi_k$.
The state after applying the linear combination of unitaries is then
\begin{equation}
    \frac 1{\sum_j w_j} \sum_{j,k} w_j \psi_k e^{ij\phi_k} \ket{k} =
    \sum_{k} \tilde w(\phi_k) \psi_k \ket{k},
\end{equation}
where
\begin{equation}\label{eq:tildew}
\tilde w(\phi)=    \frac 1{\sum_j w_j} \sum_{j} w_j e^{ij\phi}.
\end{equation}
Note that the state is not normalised, with the norm giving the probability of the success of this linear combination of unitaries.
We aim to have $\tilde w(\phi)$ for $\phi$ in the spectrum of interest.
Now, let us assume that the initial probability of the state on the spectrum of interest is at least $1/2$.
One can then show that the resulting normalised state obtained after the filtering has error, as quantified by the norm of the difference of states, upper bounded by
\begin{equation}
    \max_{k\in\{\perp\}} \tilde w(\phi_k).
\end{equation}
where $\perp$ is the set of $k$ such that $\phi_k$ is not in the spectrum of interest (so $e^{i\phi_k}\not\in \sigma_P$).
See \cref{ap:filtering} for the proof.

The result of this reasoning is that to bound the error in the filtering, we need to bound the maximum of $\tilde w(\phi_k)$, which is minimised by the Dolph-Chebyshev window.
This is obtained by taking the discrete Fourier transform of the Chebyshev polynomials, so that $\tilde w(\phi)$ is given by Chebyshev polynomials in a similar way as for \cite{Lin2020optimalpolynomial}.
In particular, one can take
\begin{equation}
    \tilde w(\phi) = \epsilon T_{\ell}\left( \beta \cos\left( \phi \right) \right)
\end{equation}
for $\phi$ taking discrete values $\pi k/\ell$ for $k$ from $-\ell$ to $\ell$, and
where $\beta=\cosh(\tfrac 1\ell \cosh^{-1}(1/\epsilon))$.
Taking the discrete Fourier transform of these values gives the window, and the Fourier transform simply yields the formula for $\tilde w(\phi)$ in terms of Chebyshev polynomials.
One obtains powers of $e^{2i\phi}$ from $-\ell/2$ to $+\ell/2$, which means we need a maximum power of $e^{i\phi}$ of $\ell$.
One can obtain the positive and negative powers simultaneously with negligible cost by simply controlling whether the reflection is performed in the qubitisation.
As a result, the cost in terms of calls to the block-encoded matrix is $\ell$, as compared to $2\ell$ for the singular value processing approach.

The peak for $\tilde w(\phi)$ will be at $0$ and $\pi$, which is what is needed because the qubitised operator produces duplicate eigenvalues at $0$ and $\pi$.
The width of the operator can be found by noting that the peak is for the argument of the Chebyshev polynomial equal to $\beta$, and the width is where the argument is 1, so $\beta\cos(\phi)=1$.
This gives us
\begin{equation}\label{eq:chebwid}
    \cosh(\tfrac 1\ell \cosh^{-1}(1/\epsilon)) \cos(\phi) = 1 .
\end{equation}
Now, because the width of the peak should be equal to the gap, and the gap is $1/\kappa$, we can replace $\phi$ with $1/\kappa$, and solving for $\ell$ gives
\begin{equation}
    \ell = \frac{\cosh^{-1}(1/\epsilon)}{\cosh^{-1}(1/\cos(1/\kappa))} \le \kappa \ln(2/\epsilon).
\end{equation}
Note that \cref{eq:chebwid} was for finding the width given an integer $\ell$, but solving for $\ell$ with a width of $1/\kappa$, we should round $\ell$ up to the nearest integer to provide a width no larger than $\kappa$.

In comparison, in \cite{Lin2020optimalpolynomial} the error is given as $2 e^{-\sqrt{2} \ell \Delta}$, which would imply that one can take $\ell \approx \sqrt{1/2} \kappa \ln (2/\epsilon)$.
Since the order of the polynomial is $2\ell$, which is also the number of applications of the block encoding needed, this would imply a cost of $\sqrt{2} \kappa \ln (2/\epsilon)$, which is greater than what we have here by a factor of $\sqrt{2}$.
However, it turns out that the scaling given in \cite{Lin2020optimalpolynomial} is overly conservative, and the actual scaling is $2 e^{-2 \ell \Delta}$, which then gives the same complexity as we have here.

Next we consider how to apply the linear combination of unitaries with minimum ancilla qubits.
To do this we first represent the control registers in unary.
That is, for each of the $\ell$ controlled operations, we use a single qubit which is one or zero depending on whether this operation is to be performed or not.
It may seem counterproductive to expand the size of the ancilla in this way, but it has the advantage that it has a simple state preparation procedure, where an initial qubit is rotated, then the following qubits are prepared by controlled rotations.
When doing this procedure, we can apply a just-in-time preparation procedure, where each qubit is prepared just as it is needed to be used as a control.
An example of this is shown in Fig.~\ref{fig:lcu1}.

\begin{figure}
\centerline{
\Qcircuit @R=1em @C=1em {
\lstick{\ket{0}} & \gate{R_0} & \ctrl{1} & \qw & \qw & \ctrl{4} & \qw & \qw & \qw & \qw & \qw & \ctrl{1} & \gate{R_0^\dagger} & \qw & \rstick{\!\!\!\!\!\bra{0}} \\
\lstick{\ket{0}} & \qw & \gate{R_1} & \ctrl{1} & \qw & \qw & \ctrl{3} & \qw & \qw & \qw & \ctrl{1} & \gate{R_1^\dagger} & \qw & \qw & \rstick{\!\!\!\!\!\bra{0}} \\
\lstick{\ket{0}} & \qw & \qw & \gate{R_2} & \ctrl{1} & \qw & \qw & \ctrl{2} & \qw & \ctrl{1} & \gate{R_2^\dagger} & \qw & \qw & \qw & \rstick{\!\!\!\!\!\bra{0}} \\
\lstick{\ket{0}} & \qw & \qw & \qw & \gate{R_3} & \qw & \qw & \qw & \ctrl{1} & \gate{R_3^\dagger} & \qw & \qw & \qw & \qw & \rstick{\!\!\!\!\!\bra{0}} \\
\lstick{\ket{\psi}} & \qw & \qw & \qw & \qw & \gate{W} & \gate{W} & \gate{W} & \gate{W} & \qw & \qw & \qw & \qw & \qw \\
}}
\caption{A linear combination of steps using control registers prepared in unary using a linear sequence of controlled rotations.
\label{fig:lcu1}}
\end{figure}
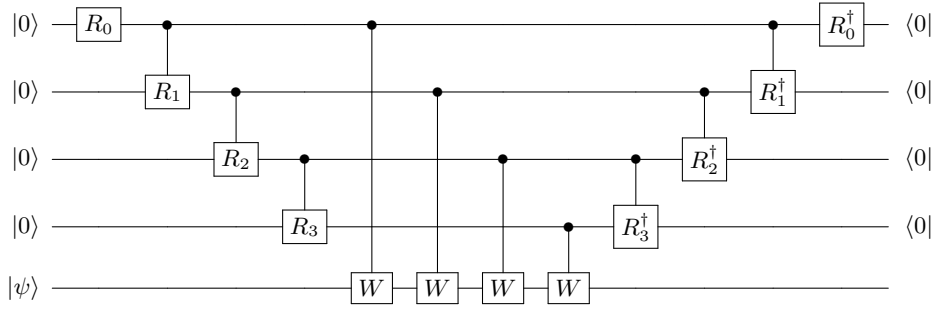

\begin{figure}
\centerline{
\Qcircuit @R=1em @C=1em {
\lstick{\ket{0}} & \gate{R_0} & \ctrl{1} & \qw & \qw & \ctrl{4} & \qw & \qw & \qw & \gate{R_0'} & \qw & \qw & \qw & \qw & \rstick{\!\!\!\!\!\bra{1}} \\
\lstick{\ket{0}} & \qw & \gate{R_1} & \ctrl{1} & \qw & \qw & \ctrl{3} & \qw & \qw & \ctrlo{-1} & \gate{R_1'} & \qw & \qw & \qw & \rstick{\!\!\!\!\!\bra{1}} \\
\lstick{\ket{0}} & \qw & \qw & \gate{R_2} & \ctrl{1} & \qw & \qw & \ctrl{2} & \qw & \qw & \ctrlo{-1} & \gate{R_2'} & \qw & \qw & \rstick{\!\!\!\!\!\bra{1}} \\
\lstick{\ket{0}} & \qw & \qw & \qw & \gate{R_3} & \qw & \qw & \qw & \ctrl{1} & \qw & \qw & \ctrlo{-1} & \gate{R_3'} & \qw & \rstick{\!\!\!\!\!\bra{1}} \\
\lstick{\ket{\psi}} & \qw & \qw & \qw & \qw & \gate{W} & \gate{W} & \gate{W} & \gate{W} & \qw & \qw & \qw & \qw & \qw \\
}}
\caption{A linear combination of steps using control registers prepared in unary using a linear sequence of controlled rotations, but with the inverse preparation performed with the linear sequence in the reverse order.
\label{fig:lcu2}}
\end{figure}
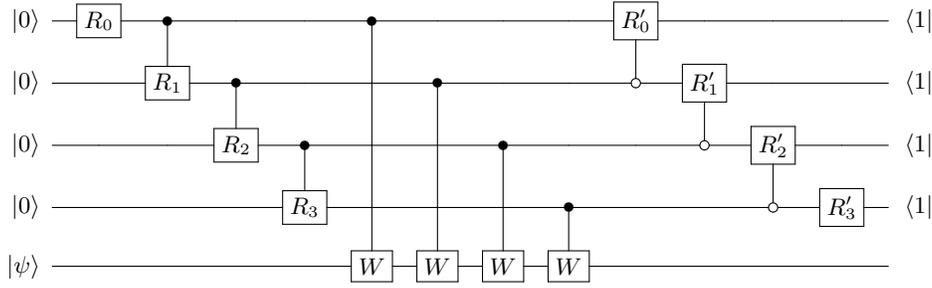

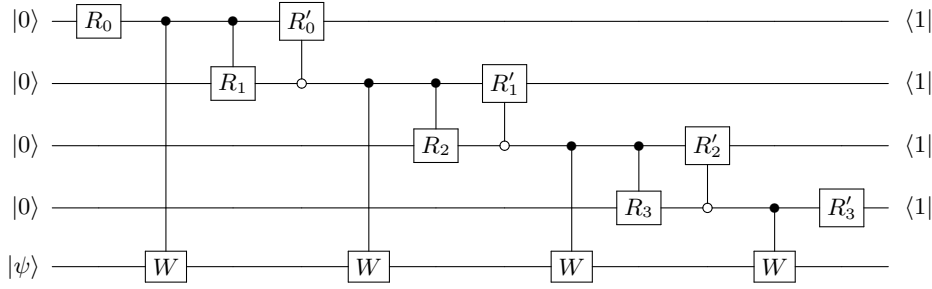
\begin{figure}
\centerline{
\Qcircuit @R=1em @C=1em {
\lstick{\ket{0}} & \gate{R_0} & \ctrl{4} & \ctrl{1} & \gate{R_0'} & \qw & \qw & \qw & \qw & \qw & \qw & \qw & \qw & \qw & \rstick{\!\!\!\!\!\bra{1}} \\
\lstick{\ket{0}} & \qw & \qw & \gate{R_1} & \ctrlo{-1} & \ctrl{3} & \ctrl{1} & \gate{R_1'} & \qw & \qw & \qw & \qw & \qw & \qw & \rstick{\!\!\!\!\!\bra{1}} \\
\lstick{\ket{0}} & \qw & \qw & \qw & \qw & \qw & \gate{R_2} & \ctrlo{-1} & \ctrl{2} & \ctrl{1} & \gate{R_2'} & \qw & \qw & \qw & \rstick{\!\!\!\!\!\bra{1}} \\
\lstick{\ket{0}} & \qw & \qw & \qw & \qw & \qw & \qw & \qw & \qw & \gate{R_3} & \ctrlo{-1} & \ctrl{1} & \gate{R_3'} & \qw & \rstick{\!\!\!\!\!\bra{1}} \\
\lstick{\ket{\psi}} & \qw & \gate{W} & \qw & \qw & \gate{W} & \qw & \qw & \gate{W} & \qw & \qw & \gate{W} & \qw & \qw \\
}}
\caption{A linear combination of steps using control registers prepared in unary, but with the order of the operations changed so we only need to use two ancilla qubits at a time.
\label{fig:lcu3}}
\end{figure}

Then in inverting the preparation, one could simply perform the reverse of all the controlled rotations as in Fig.~\ref{fig:lcu1}.
However, the trick is that the sequential state preparation procedure for the unary can be performed from either end.
The preparation could be achieved by performing rotations starting from the lart qubit, and working back to the first.
We do not do that for the preparation, but we do the reverse of that for the inverse preparation.
An example of this is shown in Fig.~\ref{fig:lcu2}.
When you reverse that form of preparation, you are working form the first qubit to the last, the same as for the preparation.
That means you only need to use two ancillas at once, by rearranging the operations as shown in Fig.~\ref{fig:lcu3}.
See \cref{ap:filtering} for a more explicit description of the sequence of rotations.

The major advantage of this procedure over singular value processing or quantum signal processing is that there is a very simple prescription for finding the sequence of operations.
A second advantage is that, instead of the measurement being performed at the end, measurements are performed sequentially, and a failure (the incorrect measurement result) can be flagged early.
That means that in cases where there will be a failure, it will on average be flagged halfway through, with the result that half the number of operations are needed since one can discard the state and start again.

Combining our result for the solution of QLSP with the filtering, we find that the overall complexity of the QLSP algorithm can be given as $\mathcal{O}(\kappa\log(1/\epsilon))$.
In particular, the result is as follows.
\begin{theorem}[QLSP with linear dependence on $\kappa$]\label{thm:qlsp}
Let $Ax=b$ be a system of linear equations, where $A$ is an $N$-by-$N$ matrix with $\norm{A}=1$ and $\norm{A^{-1}}=\kappa$. Given an oracle block encoding the operator $A$ and an oracle preparing $\ket{b}$, there exists a quantum algorithm which produces the normalized state $\ket{A^{-1}b}$ to within error $\epsilon$ using a number
\begin{equation}
    \mathcal{O}(\kappa\log(1/\epsilon))
\end{equation}
of oracle calls.
\end{theorem}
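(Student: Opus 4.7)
The plan is to combine the two main ingredients that have already been developed in the preceding sections: the discrete adiabatic walk from \cref{theo:p15}, which produces a constant-accuracy approximation of $\ket{A^{-1}b}$ in $\mathcal{O}(\kappa)$ walk steps, and the Dolph--Chebyshev filtering procedure of \cref{sec:filter}, which amplifies that constant accuracy to error $\epsilon$ at the price of only $\mathcal{O}(\kappa\log(1/\epsilon))$ additional walk steps. Concatenating the two gives the claimed $\mathcal{O}(\kappa\log(1/\epsilon))$ query complexity.

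More concretely, first I would fix a constant target error $\epsilon_0$ (for definiteness $\epsilon_0 = 1/2$, though any constant strictly below $1$ suffices) and invoke \cref{theo:p15} with $p=3/2$ and $T = \lceil C\kappa\rceil$ for a sufficiently large constant $C$ (namely $C \geq 15307/\epsilon_0$ for the general-$A$ bound \cref{eq:analyticWalk2}, with the lower-order $\mathcal{O}(\sqrt{\kappa}/T)$ terms absorbed into $C$). This produces a state $\ket{\tilde\phi}$ whose overlap with the target $+1$-eigenspace of $W_T(1)$ (corresponding to the zero-eigenspace of $H_1$, which encodes $\ket{A^{-1}b}$ via the block-encoding of \cref{eq:Hsencoding,eq:bvec}) is at least $1-\epsilon_0 \geq 1/2$, which is precisely the hypothesis required by the filtering analysis in \cref{sec:filter}.

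Second, I would apply the linear-combination-of-unitaries filter of \cref{sec:filter} to $\ket{\tilde\phi}$, using the Dolph--Chebyshev window $\tilde w(\phi) = \epsilon\, T_{\ell}(\beta\cos\phi)$ with $\beta = \cosh(\tfrac{1}{\ell}\cosh^{-1}(1/\epsilon))$. The spectral gap of $W_T(1)$ around its $\pm 1$ eigenvalues is $\Omega(1/\kappa)$ (inherited, via qubitisation, from the gap $\Delta_0(1) = 1/\kappa$ of $H(1)$ in the positive-definite case, and from $\Delta_0'(1) = 1/(\kappa\sqrt{2})$ in general). Setting the window width equal to this gap and solving \cref{eq:chebwid} yields $\ell = \mathcal{O}(\kappa\log(1/\epsilon))$, and the filter uses exactly $\ell$ calls to the block-encoded walk implemented via the unary control register of \cref{fig:lcu3}. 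The residual weight of the output on the orthogonal subspace is bounded by $\max_{k\in\{\perp\}}\tilde w(\phi_k)\le \epsilon$ (proved in \cref{ap:filtering}), which combined with the lower bound on the input overlap gives a final trace-distance error of order $\epsilon$; standard rescaling of $\epsilon$ by a constant factor recovers the stated bound. The two stages contribute $\mathcal{O}(\kappa)$ and $\mathcal{O}(\kappa\log(1/\epsilon))$ queries respectively, and summing proves the theorem.

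The only genuine subtleties, rather than obstacles, are matters already addressed earlier in the paper: the qubitised walk doubles every eigenvalue $0$ of $H(s)$ into a pair $\pm 1$ of eigenvalues of $W_T(s)$, and one must ensure that the discrete adiabatic evolution preserves the positive superposition of both branches so the filter's peak at $\phi=0$ and $\phi=\pi$ yields a coherent reconstruction of the desired eigenstate; this is exactly the phase-coherence claim handled in \cref{app:phasefactor} and invoked in the proof of \cref{theo:p15}. One must also verify that the degenerate zero eigenvalue (the spurious $\ket{1,b}$ branch) does not contaminate the final state, but the orthogonality induced by the ancilla bit-flip rules out any crossover during the ideal adiabatic evolution, so no further argument is needed. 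With these pieces in place, the proof is essentially a one-line composition.
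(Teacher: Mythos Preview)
Your proposal is correct and matches the paper's proof almost exactly: run the discrete adiabatic walk to constant accuracy at cost $\mathcal{O}(\kappa)$ (the paper cites \cref{thm:qlsp_general_p} rather than \cref{theo:p15}, but either works), then apply the Dolph--Chebyshev filter at cost $\mathcal{O}(\kappa\log(1/\epsilon))$. The one omission is that the LCU filter is not deterministic---it succeeds only when the ancilla measurement yields the correct outcome---so the paper explicitly adds a repeat-on-failure step; since the initial overlap is at least $1/2$ the success probability is at least $1/2$, giving only a constant-factor overhead, but you should state this rather than treat the filter as if it always succeeds.
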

\begin{proof}
In this theorem, we use standard assumptions that access to the oracles includes forward, reverse, and controlled uses.
We initially apply the oracle for preparing $\ket{b}$ to prepare the initial state of the form in \cref{eq:bvec}.
This preparation is also used to construct the projection operator $Q_\mathbf{b}$.
Together with the oracle for block encoding $A$, we can construct the operator for block encoding $H(s)$ as described in detail in \cref{app:blockHs}.
A reflection on the ancillas yields the walk operator.

Now use the discrete adiabatic theorem for the QLSP as given in \thm{qlsp_general_p} for fixed precision, such as $1/2$.
That step has complexity $\mathcal{O}(\kappa)$, and the only error is the overlap with other states that are not the solution.
Next, use the filtering as described above, which has complexity $\mathcal{O}(\kappa\log(1/\epsilon))$.
In the case of success, one has produced the state $\ket{A^{-1}b}$ to within norm-distance $\epsilon$.
In the case of failure of the filtering, repeat the procedure.
Since the probability of success may be made at least $1/2$ by suitably choosing the fixed precision for the adiabatic procedure, the adiabatic and filtering steps need only be applied 2 times on average before success.
This gives a factor of $2$ to the total complexity of $\mathcal{O}(\kappa)$ plus $\mathcal{O}(\kappa\log(1/\epsilon))$.
The total complexity is therefore $\mathcal{O}(\kappa\log(1/\epsilon))$ as claimed.
\end{proof}

Perhaps surprisingly, in this complexity the largest asymptotic complexity is for the filtering step, because it has a factor of $\log(1/\epsilon)$ which is absent from the adiabatic step.
In practice, we have found quite large constant factors for the adiabatic evolution, so it is likely that the adiabatic step will still be the most costly part of the algorithm for realistic values of the parameters.
In particular, for the numerical calculation of the upper bound it was found that the scaling constant was about 638, so to obtain our requirement of initial probability on the spectrum of interest at least $1/2$ (corresponding to needing to repeat the algorithm twice on average) one would need about $834\kappa$ steps of the adiabatic evolution.
In contrast, the $\ln(2/\epsilon)$ factor is only about 20 for $\epsilon$ as one part in a billion.

\section{Conclusions}

In this work we have shown the first QLSP algorithm that scales optimally in terms of the condition number. We achieved this by adapting prior algorithms for the QLSP based on adiabatic evolutions so that they did not require the additional overhead of the Dyson series algorithm for precisely evolving under time-dependent Hamiltonians on a gate model quantum computer. Instead, we show that one can directly discretize the time-evolution using quantum walks and that the error in this procedure can be obtained using a discrete adiabatic theorem. We also obtain rigorous new error bounds on the performance of those discrete adiabatic theorems.

While this improvement is ``only'' by a log factor, the fact that we can asymptotically match the lower bound is of fundamental interest. Furthermore, there is widespread anticipation that compelling practical application of the QLSP may eventually be found and that error-corrected quantum computers capable of realizing those applications may eventually be realized. Should this occur, then it will be crucial to program those devices using the best possible scaling versions of these algorithms in order to have the fastest implementations requiring the least overhead due to error-correction. Our expectation is that the QLSP approach described in this paper would be more performant than any other approach in the literature both in terms of asymptotic scaling but also in terms of the constant factors associated with realizing finite instances. Thus, we also foresee practical value in these results.

As well as scaling optimally in the condition number, our algorithm scales optimally in terms of the combination of the condition number and the precision $\epsilon$.
As was recently proven, a lower bound to the complexity is $\mathcal{O}(\kappa \log(1/\epsilon))$ \cite{RobinAram}.
Our result matches this lower bound, showing that it is optimal.
It is interesting that the complexity is multiplicative between $\kappa$ and $\log(1/\epsilon)$, in contrast to Hamiltonian simulation which is additive between the time and $\log(1/\epsilon)$.
In this approach to solving linear equations, the $\log(1/\epsilon)$ factor only comes from the filtering step, which in practice would have lower complexity than the initial adiabatic step.

Another question is the scaling with the sparsity in the case where the matrix is sparse and given by oracles for positions of nonzero entries.
In this work we have given the complexity in terms of calls to a block encoding of the matrix, rather than those more fundamental oracles.
The lower bound in terms of those oracles has a multiplicative factor of $\sqrt{d}$ in the sparsity $d$.
One could get such a scaling if there were a way of block encoding the matrix with complexity $\sqrt{d}$, but standard methods are linear.
It is shown in \cite{rootd} how to simulate a Hamiltonian with complexity $\sqrt{d}$ up to logarithmic factors using a nested interaction picture approach.
One could use that combined with the adiabatic approach to obtain this scaling with sparsity, but it would reintroduce logarithmic factors, so the complexity would no longer be strictly linear in $\kappa$.

More generally, we expect that other quantum algorithms based on continuous time-evolutions might benefit from using discrete time adiabatic algorithms. For example, there are quantum algorithms for optimization that use adiabatic evolution.
There was some analysis demonstrating that discrete adiabatic evolution could be used in \cite{Sanders2020}, but our analysis here is far tighter.
There was also recent work showing that digital adiabatic simulation based on Trotter-type formulas is robust against discretization \cite{Changhao2021}, whereas our approach does not introduce any discretization error since we directly invoke the discrete adiabatic theorem.
Our analysis here could be tightened further in terms of the constant factors.
There is over an order of magnitude difference between the numerical results and the analytically proven scaling constants.
A more careful accounting for the inequalities could tighten this difference, but we have not  done that in this work because our analysis is already very lengthy.

\section*{Acknowledgements}

The authors thank Lin Lin, Kianna Wan and Nathan Wiebe for helpful discussions. DWB worked on this project under a sponsored research agreement with Google Quantum AI.
DWB is also supported by Australian Research Council Discovery Projects DP190102633 and DP210101367. 
DA acknowledges the support by the Department of Defense through the Hartree Postdoctoral Fellowship at QuICS, and is also partially supported by the Department of Energy under Grant No. DE-SC0017867.
YRS is supported by Australian Research Council Grant DP200100950.

\appendix

\section{List of variables}

Here we give a list of variable names with links to their definitions.

\subsection{List of variables presented in \texorpdfstring{\cref{sec:summary}}{Section II}} 
\begin{itemize}
\setlength\itemsep{0em}
    \item $W_T(s)$ - The discrete walk operator.
    \item $n$ - An integer index used for the discrete walk operators, so $s=n/T$.
    \item $U_T(s)$ - The product of walk operators up to $s$.
    item $P_T(s)$ - The projector onto the spectrum of interest.
    \item $Q_T(s)$ - The projector onto the complementary spectrum.
    \item $U_T^A(s)$ - The ideal adibatic evolution, in contrast to $U_T(s)$ given by the actual walk operators.
    \item $W_T^A(s)$ - Ideal adiabatic walk operators that exactly preserve eigenstates.
    \item $s$ - A variable used to index adiabatic evolution, starting from 0 and ending at 1.
    \item $T$ - An integer corresponding to the number of discrete walk operators in discrete adiabatic evolution.
    \item $T^*$ - A lower bound used for $T$ for the definition in \cref{def:gaps}.
    \item $D$ - A difference operator, so for example $DW_T(s) = W_T(s+1/T)-W_T(s)$.
    \item $D^{(k)}$ - The iterated difference operator.
    \item $c_k(s)$ - A bound on the norm of $D^{(k)}W_T(s)$ as in \cref{def:difs}.
    \item $\hat c_k(s)$ - The maximum of $c_k(s)$ over neighbouring time steps, as in \cref{eq:chat}.
    \item $\sigma_P(s)$ - The spectrum of interest.
    \item $\sigma_Q(s)$ - The complementary spectrum.
    \item $\sigma_P^{(k)}$ - An arc including the spectrum $\sigma_P(s)$ at $k+1$ successive steps, as in \cref{eqn:assump2}.
    \item $\sigma_Q^{(k)}$ - Similar to $\sigma_P^{(k)}$, but for the complementary spectrum.
    \item $\Delta_k(s)$ - The gap between the spectra accounting for $k+1$ successive steps; see \cref{def:gaps}.
    \item $\Delta(s)$ - The gap accounting for up to 3 successive steps as defined in \cref{eq:minGaps}.
    \item $\check \Delta(s)$ - The maximum of $\Delta(s)$ accounting for neighbouring steps; see \cref{eq:fhat}.
\end{itemize}

\subsection{List of variables presented in \texorpdfstring{\cref{sec:adtheo}}{Section III}}

\begin{itemize} \setlength\itemsep{0em}
 \item $R_T(s,z)$ - The resolvent of $W_T(s)$; see \cref{eq:resolv}.
    \item $S_T(s,s')$ - The operator exactly mapping from the spectrum at step $s'$ to $s$; see \cref{eq:STdef}.
    We also use $S_T(s)=S_T(s+1/T,s)$.
    \item $V_T(s,s')$ - The unitary obtained from a polar decomposition of $S_T(s,s')$; see \cref{eq:V}.
    We also use $V_T(s)=V_T(s+1/T,s)$.
    \item $v_T(s,s')$ - The correction to obtain $V_T(s,s')$ from $S_T(s,s')$.
    We also use $v_T(s)=v_T(s+1/T,s)$.
    \item $\Omega_T(s)$ - The wave operator, accounting for the difference between the ideal and adiabatic walk; see \cref{eq:waveOp}.
    \item $\Theta_T(s)$ - The ripple operator, corresponding to a step of $\Omega_T(s)$; see \cref{eq:ripple}.
    \item $K_T(s)$ - The kernel function, see \cref{eq:K}.
    \item $X(s)$ - Given by $T(1-V_T^{\dagger}(s-1/T))$ and used in the proof of the adiabatic theorem.
    \item $\tilde X(s)$ - Obtained from a contour integral of $X(s)$ as in \cref{eq:Xtilde}.
    \item $A(s)$ - A variable used in the proof of the discrete adiabatic theorem; see \cref{eq:A}.
    \item $B(s)$ - Used in the proof of the discrete adiabatic theorem; see \cref{eq:B}.
    \item $Z(s)$ - Used in the proof of the discrete adiabatic theorem; see \cref{eq:Z}.
    \item $\Gamma_T(s)$ - A contour that encloses the spectrum of interest.
    \item $\Gamma_T(s,k)$ - A contour that encloses the spectrum of interest for $k+1$ successive steps of the walk.
    \item $\mathcal{F}_T(s)$ - A function of $DP_T(s)$ used for expressing $V_T(s)$; see \cref{eq:def_F}.
    \item $\mathcal{B}$ - The boundary term used in \cref{lem:sum_by_parts}.
    \item $\mathcal{S}$ - The sum used in \cref{lem:sum_by_parts}.
    \item $n_\pm$ - We use $n_+=n+1$ and $n_-=n-1$. We also use this notation for $l$.
    \item $P_0$ - The initial projector onto the spectrum of interest, $P_0=P_T(0)$.
    \item $Q_0$ - Similarly for the complementary spectrum $Q_0=Q_T(0)$.
    \item $\mathcal{D}_j(x)$ - The simple scalar functions $\mathcal{D}_1(x),\mathcal{D}_2(x),\mathcal{D}_3(x)$ are defined in \cref{eq:D_i}.
    \item $\xi_j$ - Constants used for upper bounds on $\mathcal{D}_j(x)$ as in \cref{eq:upperB_D}.
    \item $\mathcal{G}_{T,j}(s)$ - These functions for $j=1,2,3,4$ are defined in \cref{eq:G1,eq:G2,eq:G3,eq:G4}.
\end{itemize}

\subsection{ List of variables presented in \texorpdfstring{\cref{sec:linsys}}{Section IV}}

\begin{itemize} \setlength\itemsep{0em}
    \item $A$ - The matrix in the QLSP $Ax=b$.
    \item $b$ - The vector in the QLSP.
    \item $x$ - This is usually used as the solution vector in $Ax=b$, but in \cref{ap:proofDP} as a real variable of integration.
    \item $N$ - The dimension of the QLSP.
    \item $\kappa$ - The condition number of $A$.
    \item $\epsilon$ - The allowable error in the solution.
    \item $H_0$ - The initial Hamiltonian in adiabatic evolution.
    \item $H_1$ - The final Hamiltonian in adiabatic evolution.
    \item $\ket{b}$ - The state with amplitudes proportional to the entries of $b$.
    \item $Q_b$ - The projector eliminating $\ket{b}$, given as $I_N-\ketbra{b}{b}$.
    \item $f(s)$ - Used for the scheduling function, which we take as in \cref{eq:sched1}.
    \item $d_p$ - A constant used in constructing $f(s)$; see \cref{eq:gapCon}.
    \item $p$ - An adjustable parameter used in the scheduling function, taking values in the range $(1,2]$.
    \item $\mathbf{A}$ - A matrix constructed from $A$ so as to be Hermitian; see \cref{eq:Avec}.
    \item $\mathbf{b}$ - A vector comprised of $b$ and a zero vector; see \cref{eq:bvec}.
    \item $A(f)$ - The intermediate value of $A$ used in the adiabatic evolution; see \cref{eq:Af}.
    \item $H(s)$ - The Hamiltonian constructed from $A(f)$; see \cref{eq:Hsencoding}.
    \item $R(s)$ - A rotation used in block encoding $H(s)$; see \cref{eq:C-rot}.
   \end{itemize}

\subsection{ List of variables presented in \texorpdfstring{\cref{sec:filter}}{Section V}}

\begin{itemize} \setlength\itemsep{0em}
 \item $w_j$ - Weights used for the linear combination of unitaries for filtering.
    \item $\phi_k$ - Used to label eigenvalues of the walk operator, so the eigenvalue is $e^{i\phi_k}$.
    \item $\tilde w(\phi)$ - A Fourier transform of $w_j$ as in \cref{eq:tildew}.
    \item $\perp$ - A set of $k$ such that $\phi_k$ is not in the spectrum of interest.
    \item $T_\ell$ - The Chebyshev polynomial of the first kind.
    \item $\ell$ - The order of the Chebyshev polynomial.
\end{itemize}

\section{Proof of \texorpdfstring{\cref{lem:P}}{Lemma 6}}\label{ap:proofDP}

In order to bound $DP_T(s)$, we first rewrite $D R_T(s,z)$ as
\begin{align}
\label{eq:contProj}
    D R_T(s,z) &=  \left(W_T\left(s+\frac{1}{T}\right)-z I\right)^{-1}-\left(W_T\left(s\right)-z I\right)^{-1} \nonumber\\
  &= \left(W_T\left(s+\frac{1}{T}\right)-z I\right)^{-1}\left(W_T\left(s\right)-z I\right)\left(W_T\left(s\right)-z I\right)^{-1}\nonumber\\
  &\quad -\left(W_T\left(s+\frac{1}{T}\right)-z I\right)^{-1}\left(W_T\left(s+\frac{1}{T}\right)-z I\right)\left(W_T\left(s\right)-z I\right)^{-1}\nonumber\\
  &= \left(W_T\left(s+\frac{1}{T}\right)-z I\right)^{-1}\left(W_T\left(s\right)-W_T\left(s+\frac{1}{T}\right) \right)\left(W_T\left(s\right)-z I\right)^{-1}\nonumber\\
  &= - R_T\left(s+\frac{1}{T},z\right)DW_T\left(s\right) R_T\left(s,z\right). 
  \end{align}

Using this expression, we can then express $DP_T(s)$ in terms of a contour integral as
\begin{align}
\label{eq:contProj2}
    DP_T(s)&= \frac{1}{2 \pi i} \oint_{\Gamma_T(s,1)} \left[R_T\left(s+\frac{1}{T},z\right)-R_T(s,z)\right] dz \nonumber \\
  &= -\frac{1}{2 \pi i} \oint_{\Gamma_T(s,1)} R_T\left(s+\frac{1}{T},z\right)DW_T\left(s\right) R_T\left(s,z\right)dz. 
  \end{align}
  When we consider $P_T(s)$, the integrand drops off as $1/|z|$, so the contour must be at a finite distance as illustrated in \cref{fig:contour_1_infty}.
  For $DP_T(s)$, we can use the same contour for both $P_T(s)$ and $P_T(s+1/T)$.
  The principle now is that the integrand falls off as $1/|z|^2$, so the contribution from the arc will fall to zero for large radius.
  We denote the contour as $\Gamma_T(s,1,a)$, which is a sector of radius $(a+1)$ for some real number $a$, and we will take the limit $a \rightarrow \infty$.
Then we have 
\begin{align}
    \|DP_T(s)\|
    &= \frac{1}{2\pi}\left\|\oint_{\Gamma_T(s,1,a)} R_T\left(s+\frac{1}{T},z\right)DW_T\left(s\right) R_T\left(s,z\right)dz\right\| \nonumber\\
    &\leq \frac{1}{2\pi}\oint_{\Gamma_T(s,1,a)} \left\|R_T\left(s+\frac{1}{T},z\right)\right\|\|DW_T\left(s\right) \|\|R_T\left(s,z\right)\||dz| \nonumber\\
    &\leq \frac{c_1(s)}{T}\frac{1}{2\pi}\oint_{\Gamma_T(s,1,a)} \left\|R_T\left(s+\frac{1}{T},z\right)\right\|\| R_T\left(s,z\right)\||dz|,
\end{align}
where in the last line we used the bound from \cref{eq:main_ass}. 

  \begin{figure}
    \centering
    \includegraphics[width = 0.6\textwidth]{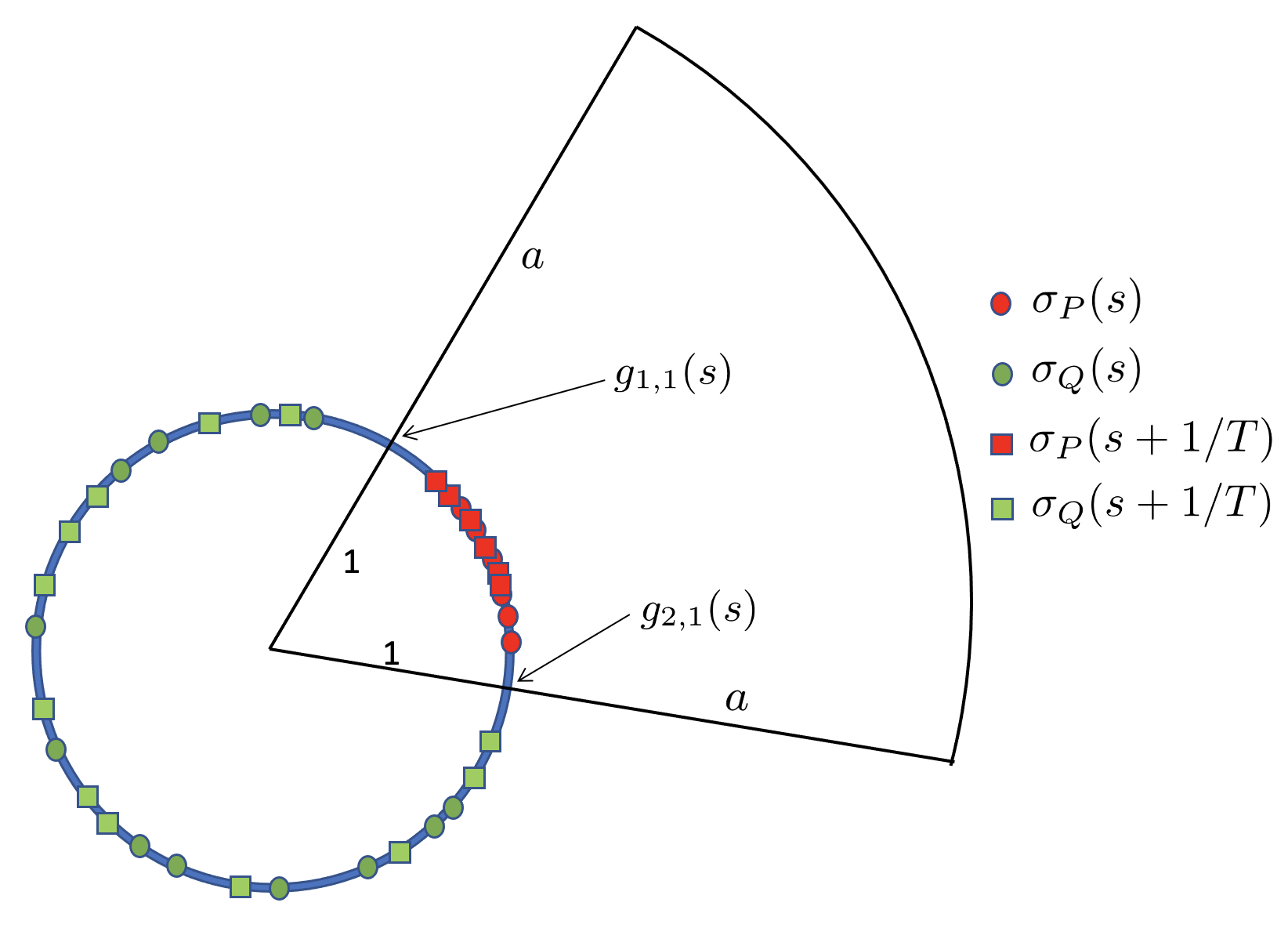}
    \caption{The contour $\Gamma_T(s,1,a)$ that passes through the two gaps, and has a closure of the contour via an arc at radius $a+1$.
    The centres of the two gaps are denoted $g_{1,1}(s)$ and $g_{2,1}(s)$, and the contour is taken to have two straight lines that are multiples of these complex numbers.}
    \label{fig:contour_1_infty}
\end{figure}

Since $R_T(s,z)$ is the resolvent of the unitary operator $W_T(s)$, we know that
\begin{equation}
\label{eq:dist}
    \left\|\left(W_T(s) - zI\right)^{-1}\right\|
    =
    \frac{1}{d\left(\sigma(W_T(s)),z\right)},
\end{equation}
where $d\left(\sigma(W_T(s)),z\right)$ is the distance between the spectrum of $W_T$ and $z$. Therefore, by separating the contour integral into three parts, two of them along the radius and one along the arc, we have that 
\begin{align}
        & \oint_{\Gamma_T(s,1,a)}  \left\|R_T\left(s+\frac{1}{T},z\right)\right\|\left\| R_T\left(s,z\right)\right\||dz| \nn
        & \leq 2 \int_0^{a+1} \frac{dx}{(x-\cos(\Delta_1(s)/2))^2 + (\sin(\Delta_1(s)/2))^2}
        + \int_{\text{arg}(g_{1,1}(s))}^{\text{arg}(g_{2,1}(s))} \frac{1}{a^2} (a+1)d\theta \nn
        & \leq 2 \int_0^{a+1} \frac{dx}{(x-\cos(\Delta_1(s)/2))^2 + (\sin(\Delta_1(s)/2))^2}
        + \frac{2\pi(a+1)}{a^2}. 
\end{align}
Here we have denoted the complex numbers in the centres of the gaps by $g_{1,1}(s)$ and $g_{2,1}(s)$.
By taking the limit $a \rightarrow \infty$, we have 
\begin{align}\label{eq:CountInt1}
    \lim_{a\to \infty} \oint_{\Gamma_T'(s,1,a)}  \left\|R_T\left(s+\frac{1}{T},z\right)\right\|\left\| R_T\left(s,z\right)\right\||dz| 
    &\le 2\int_0^{\infty}\frac{dx}{(x-\cos{\left(\Delta_1(s)/2\right)})^2+\sin{\left(\Delta_1(s)/2\right)^2}} \nn
    &= \frac{2\pi-\Delta_1(s)}{\sin{\left(\Delta_1(s)/2\right)}} \nn
    & \le \frac{4\pi}{\Delta_1(s)},
\end{align}
where in the last line we have used $(\pi-x)/\sin x \leq \pi/x$ for $0 < x \leq \pi/2$.
Note that taking the limit of $a\to\infty$ the contribution from the arc completely vanishes, and we have integrals to infinity along the two straight lines for the contour.
That gives a bound on $\|DP_T(s)\|$ as
\begin{equation}
\label{eq:CountInt}
    \|DP_T(s)\|\leq \frac{2c_1(s)}{T\Delta_1(s)}.
\end{equation}
A number of other integrals that can be obtained in a similar way.
In exactly the same way, we have
\begin{align}\label{eq:CountInt2}
    \oint_{\Gamma_T(s,1)}  \left\|R_T\left(s+\frac{1}{T},z\right)\right\|^2\left\| R_T\left(s,z\right)\right\||dz|
    &\le 2\int_0^{\infty}\frac{dx}{[(x-\cos{\left(\Delta_1(s)/2\right)})^2+\sin{\left(\Delta_1(s)/2\right)^2]^{3/2}}} \nn
    &= \frac 2{1-\cos(\Delta_1(s)/2)} .
\end{align}
The same bound holds for similar products of three terms.
Here we have written the integral as for the contour $\Gamma_T(s,1)$.
This contour can be regarded as the limit as $a\to\infty$ of the contour $\Gamma_T(s,1,a)$, but from now on we will omit the explicit procedure taking the limit.

Now we move on to the $D^{(2)}P_T$. Now since we are dealing with the second order difference, the contour should be chosen to be $\Gamma_T(s,2)$ which passes through the eigenvalue gap for three consecutive steps.
The reasoning for the contour integrals above is unchanged, except the gap $\Delta_1(s)$ is changed to $\Delta_2(s)$ for three consecutive steps.
    We therefore have 
    \begin{align}
            D^{(2)}P_T(s) & = D P_T\left(s+\frac{1}{T}\right) - D P_T\left(s\right) \nonumber\\
            & = -\frac{1}{2\pi i} \oint_{\Gamma_T(s,2)} \left[R_T\left(s+\frac{2}{T},z\right)DW_T\left(s+\frac{1}{T}\right) R_T\left(s+\frac{1}{T},z\right) - R_T\left(s+\frac{1}{T},z\right)DW_T\left(s\right) R_T\left(s,z\right)\right] dz \nonumber\\
            & = -\frac{1}{2\pi i} \oint_{\Gamma_T(s,2)} \left[R_T\left(s+\frac{2}{T},z\right) -R_T\left(s+\frac{1}{T},z\right) \right]DW_T\left(s+\frac{1}{T}\right) R_T\left(s+\frac{1}{T},z\right) dz \nonumber\\
            & \quad -\frac{1}{2\pi i} \oint_{\Gamma_T(s,2)} R_T\left(s+\frac{1}{T},z\right) D^{(2)}W_T\left(s\right) R_T\left(s+\frac{1}{T},z\right) dz \nonumber\\
            & \quad -\frac{1}{2\pi i} \oint_{\Gamma_T(s,2)} R_T\left(s+\frac{1}{T},z\right)DW_T\left(s\right) \left[R_T\left(s+\frac{1}{T},z\right) -R_T\left(s,z\right) \right] dz \nonumber\\ 
            & = \frac{1}{2\pi i} \oint_{\Gamma_T(s,2)} R_T\left(s+\frac{2}{T},z\right)DW_T\left(s\right) R_T\left(s+\frac{1}{T},z\right)DW_T\left(s+\frac{1}{T}\right) R_T\left(s+\frac{1}{T},z\right) dz \nonumber\\
            & \quad -\frac{1}{2\pi i} \oint_{\Gamma_T(s,2)} R_T\left(s+\frac{1}{T},z\right) D^{(2)}W_T\left(s\right) R_T\left(s+\frac{1}{T},z\right) dz \nonumber\\
            & \quad -\frac{1}{2\pi i} \oint_{\Gamma_T(s,2)} R_T\left(s+\frac{1}{T},z\right)DW_T\left(s\right) R_T\left(s+\frac{1}{T},z\right)DW_T\left(s\right) R_T\left(s,z\right) dz. 
        \end{align}
We can bound the first term as
\begin{align}
&    \left\| \frac{1}{2\pi i} \oint_{\Gamma_T(s,2)} R_T\left(s+\frac{2}{T},z\right)DW_T\left(s\right) R_T\left(s+\frac{1}{T},z\right)DW_T\left(s+\frac{1}{T}\right) R_T\left(s+\frac{1}{T},z\right) dz \right\| \nn
& \le  \frac{1}{2\pi} \oint_{\Gamma_T(s,2)} \left\|R_T\left(s+\frac{2}{T},z\right)\right\|\left\|DW_T\left(s\right)\right\| \left\|R_T\left(s+\frac{1}{T},z\right)\right\|\left\|DW_T\left(s+\frac{1}{T}\right)\right\| \left\|R_T\left(s+\frac{1}{T},z\right)\right\| |dz| \nn
& \le \frac{c_1(s) c_1(s+1/T)}{T^2}\frac{1}{2\pi} \oint_{\Gamma_T(s,2)} \left\|R_T\left(s+\frac{2}{T},z\right)\right\| \left\|R_T\left(s+\frac{1}{T},z\right)\right\| \left\|R_T\left(s+\frac{1}{T},z\right)\right\| |dz| \nn
& \le \frac{c_1(s) c_1(s+1/T)}{T^2}\frac{1}{\pi} \int_0^{\infty} \frac{dx}{[(x-\cos(\Delta_2(s)/2))^2+\sin(\Delta_2(s))/2)^2]^{3/2}} \nn
& = \frac{c_1(s) c_1(s+1/T)}{\pi T^2} \frac{1}{1-\cos(\Delta_2(s)/2)}.
\end{align}
For the second term we have the upper bound
\begin{align}
    &\left\| \frac{1}{2\pi i} \oint_{\Gamma_T(s,2)} R_T\left(s+\frac{1}{T},z\right) D^{(2)}W_T\left(s\right) R_T\left(s+\frac{1}{T},z\right) dz \right\| \nn
    &\le \frac{1}{2\pi} \oint_{\Gamma_T(s,2)} \left\|R_T\left(s+\frac{1}{T},z\right)\right\| \left\| D^{(2)}W_T\left(s\right)\right\| \left\| R_T\left(s+\frac{1}{T},z\right)\right\| |dz| \nn
    & \le \frac{c_2(s)}{T^2} \frac{1}{2\pi} \oint_{\Gamma_T(s,2)} \left\|R_T\left(s+\frac{1}{T},z\right)\right\| \left\| R_T\left(s+\frac{1}{T},z\right)\right\| |dz| \nn
    & \le \frac{c_2(s)}{\pi T^2}\int_0^{\infty} \frac{dx}{(x-\cos(\Delta_2(s)/2))^2+\sin(\Delta_2(s))/2)^2} \nn
    & = \frac{c_2(s)}{\pi T^2} \frac{\pi-\Delta_2(s)/2}{\sin(\Delta_2(s)/2)} \nn
    & \le \frac{2c_2(s)}{T^2} \frac{1}{\Delta_2(s)} .
\end{align}
For the third term we have identical reasoning as for the first term, except the $DW_T(s+1/T)$ is replaced with $DW_T(s)$.
That gives an upper bound
\begin{equation}
    \frac{c_1^2(s)}{\pi T^2} \frac{1}{1-\cos(\Delta_2(s)/2)} 
\end{equation}
The three bounds together give us
\begin{equation}
    \|D^{(2)}P_T(s)\| \leq \frac{ c_1(s)^2+c_1(s)c_1(s+1/T)}{\pi T^2 (1-\cos(\Delta_2(s)/2))}  + \frac{2c_2(s)}{T^2\Delta_2(s)}.
\end{equation}

\section{Proof of \texorpdfstring{\cref{lem:sum_by_parts}}{Lemma 14}}\label{ap:sum_parts}

Our initial point is noticing the following identity
\begin{equation}
\label{eq:p1}
    Q_T(s)X(s)P_T(s)=-Q_T(s)[W_T(s),\Tilde{X}(s)]P_T(s),
\end{equation}
which follows from 
\begin{align}
 [W_T(s),\Tilde{X}(s)]&= -\frac{1}{2\pi i}\oint_{\Gamma_T(s)}[W_T(s),R_T(s,z)X(s)R_T(s,z)]dz \nonumber\\
    &= -\frac{1}{2\pi i}\oint_{\Gamma_T(s)}[W_T(s)-zI,R_T(s,z)X(s)R_T(s,z)]dz \nonumber\\
    &= -\frac{1}{2\pi i}\oint_{\Gamma_T(s)}(X(s)R_T(s,z)-R_T(s,z)X(s))dz \nonumber\\
    &= [P_T(s),X(s)].\nonumber
\end{align}
Now, using the definition \cref{eq:Wa} for $W_T^A$, one gets $W_T(s)=V_T^{\dagger}(s)W_T^A(s)$. Substituting into \cref{eq:p1}, one gets 

\begin{equation}
\label{eq:p2}
    Q_T(s)X(s)P_T(s)=-Q_T(s)[W^A_T(s),\Tilde{X}(s)]P_T(s)-Q_T(s)[A(s),\Tilde{X}(s)]P_T(s),
\end{equation}
where $A(s)$ is given in \cref{eq:A}. Now, we can use $P_T(s)=U_T^{A}(s)P_0U_T^{A\dagger}(s)$ and $Q_T(s)=U_T^{A}(s)Q_0U_T^{A\dagger}(s)$ in  \cref{eq:p2} to obtain
\begin{equation}
\label{eq:p3}
    Q_0U_T^{A\dagger}(s)X(s)U_T^{A}(s)P_0 = -Q_0 U_T^{A\dagger}(s) [W^A_T(s),\Tilde{X}(s)]U_T^{A}(s)P_0 - Q_0U_T^{A\dagger}(s)[A(s),\Tilde{X}(s)]U_T^{A}(s)P_0,
\end{equation}
then,
\begin{align}
  &U_T^{A\dagger}\left(\frac{n}{T}\right)\left[W_T^{A}\left(\frac{n}{T}\right),\Tilde{X}\left(\frac{n}{T}\right)\right]U_T^A\left(\frac{n}{T}\right)\nonumber
  \\&= U_T^{A\dagger}\left(\frac{n}{T}\right)W_T^{A}\left(\frac{n}{T}\right)\Tilde{X}\left(\frac{n}{T}\right)U_T^A\left(\frac{n}{T}\right) - U_T^{A\dagger}\left(\frac{n}{T}\right)\Tilde{X}\left(\frac{n}{T}\right)W_T^{A}\left(\frac{n}{T}\right)U_T^A\left(\frac{n}{T}\right)\nonumber\\
 &= U_T^{A\dagger}\left(\frac{n}{T}\right)W_T^{A}\left(\frac{n}{T}\right)\Tilde{X}\left(\frac{n}{T}\right)U_T^A\left(\frac{n}{T}\right)-U_T^{A\dagger}\left(\frac{n}{T}\right)\Tilde{X}\left(\frac{n}{T}\right)U_T^A\left(\frac{n_+}{T}\right)\nn
 &= U_T^{A\dagger}\left(\frac{n}{T}\right)W_T^{A}\left(\frac{n_-}{T}\right)\Tilde{X}\left(\frac{n}{T}\right)U_T^A\left(\frac{n}{T}\right)-U_T^{A\dagger}\left(\frac{n}{T}\right)\Tilde{X}\left(\frac{n}{T}\right)U_T^A\left(\frac{n_+}{T}\right)\nonumber\\ 
 &\quad+ U_T^{A\dagger}\left(\frac{n}{T}\right)\left(W_T^{A}\left(\frac{n}{T}\right)-W_T^{A}\left(\frac{n_-}{T}\right)\right)\Tilde{X}\left(\frac{n}{T}\right)U_T^A\left(\frac{n}{T}\right)\nonumber\\
 &= U_T^{A\dagger}\left(\frac{n_-}{T}\right)\Tilde{X}\left(\frac{n}{T}\right)U_T^A\left(\frac{n}{T}\right)-U_T^{A\dagger}\left(\frac{n}{T}\right)\Tilde{X}\left(\frac{n}{T}\right)U_T^A\left(\frac{n_+}{T}\right)\nonumber\\ 
 &\quad+ U_T^{A\dagger}\left(\frac{n}{T}\right)DW_T^{A}\left(\frac{n_-}{T}\right)\Tilde{X}\left(\frac{n}{T}\right)U_T^A\left(\frac{n}{T}\right) \nn
  &= U_T^{A\dagger}\left(\frac{n_-}{T}\right)\Tilde{X}\left(\frac{n}{T}\right)U_T^A\left(\frac{n}{T}\right)-U_T^{A\dagger}\left(\frac{n}{T}\right)\Tilde{X}\left(\frac{n_+}{T}\right)U_T^A\left(\frac{n_+}{T}\right) \nonumber\\
  & \quad + U_T^{A\dagger}\left(\frac{n}{T}\right)DW_T^{A}\left(\frac{n_-}{T}\right)\Tilde{X}\left(\frac{n}{T}\right)U_T^A\left(\frac{n}{T}\right) + U_T^{A\dagger}\left(\frac{n}{T}\right)D\Tilde{X}\left(\frac{n}{T}\right)U_T^A\left(\frac{n_+}{T}\right),\nonumber\\
  &= U_T^{A\dagger}\left(\frac{n_-}{T}\right)\Tilde{X}\left(\frac{n}{T}\right)U_T^A\left(\frac{n}{T}\right) - U_T^{A\dagger}\left(\frac{n}{T}\right)\Tilde{X}\left(\frac{n_+}{T}\right)U_T^A\left(\frac{n_+}{T}\right) +  U_T^{A\dagger}\left(\frac{n}{T}\right)B\left(\frac{n}{T}\right)U_T^A\left(\frac{n}{T}\right),
\end{align}
where $B(s)$ is given in \cref{eq:B}. 
To complete this proof we have to multiply by $Y(s)$ on the right-hand side of Eq.~\eqref{eq:p3} and then do a sum from $1/T$ to $l/T$. First let us look to the boundary term, which is derived from the first part on the right-hand side of Eq. (\ref{eq:p3}), i.e.,
 \begin{align}
     &-\sum_{n=1}^l Q_0U_T^{A\dagger}\left(\frac{n}{T}\right)\left[W^A_T\left(\frac{n}{T}\right),\Tilde{X}\left(\frac{n}{T}\right)\right]U_T^{A}\left(\frac{n}{T}\right) P_0 Y\left(\frac{n}{T}\right)\nonumber\\ 
     &=    \sum_{n=1}^l Q_0U_T^{A\dagger}\left(\frac{n}{T}\right) \Tilde{X}\left(\frac{n_+}{T}\right)  U_T^A\left(\frac{n_+}{T}\right) P_0 Y\left(\frac{n}{T}\right) -\sum_{n=1}^l Q_0 U_T^{A\dagger}\left(\frac{n_-}{T}\right) \Tilde{X}\left(\frac{n}{T}\right) U_T^A\left(\frac{n}{T}\right) P_0 Y\left(\frac{n}{T}\right)\nonumber\\
     &\quad - \sum_{n=1}^l Q_0 U_T^{A\dagger}\left(\frac{n}{T}\right)B\left(\frac{n}{T}\right)U_T^A\left(\frac{n}{T}\right) P_0 Y\left(\frac{n}{T}\right)\nn
&=    \sum_{n=1}^l Q_0U_T^{A\dagger}\left(\frac{n}{T}\right) \Tilde{X}\left(\frac{n_+}{T}\right)  U_T^A\left(\frac{n_+}{T}\right) P_0 Y\left(\frac{n}{T}\right) -\sum_{n=0}^{l-1} Q_0 U_T^{A\dagger}\left(\frac{n}{T}\right) \Tilde{X}\left(\frac{n_+}{T}\right) U_T^A\left(\frac{n_+}{T}\right) P_0 Y\left(\frac{n_+}{T}\right)\nonumber\\
     &\quad - \sum_{n=1}^l Q_0 U_T^{A\dagger}\left(\frac{n}{T}\right)B\left(\frac{n}{T}\right)U_T^A\left(\frac{n}{T}\right) P_0 Y\left(\frac{n}{T}\right) \nn
     & = \mathcal{B}- \sum_{n=1}^l Q_0U_T^{A\dagger}\left(\frac{n}{T}\right) \Tilde{X}\left(\frac{n_+}{T}\right)  U_T^A\left(\frac{n_+}{T}\right) P_0 DY\left(\frac{n}{T}\right) - \sum_{n=1}^l Q_0 U_T^{A\dagger}\left(\frac{n}{T}\right)B\left(\frac{n}{T}\right)U_T^A\left(\frac{n}{T}\right) P_0 Y\left(\frac{n}{T}\right).
 \end{align}
 Here we have combined two sums using $Y(s)=-DY(s)+Y(s+1/T)$, and $\mathcal{B}$ is a correction accounting for the extra term at $n=0$ and the missing term at $n=l$. It is given by
\begin{equation}
    \mathcal{B}=Q_0U^{A\dagger}_T\left(\frac{l}{T}\right)\Tilde{X}\left(\frac{l_+}{T}\right)U^{A}_T\left(\frac{l_+}{T}\right)P_0Y\left(\frac{l_+}{T}\right)
    -Q_0U_T^{A\dagger}(0)\Tilde{X}\left(\frac{1}{T}\right)U^{A}_T\left(\frac{1}{T}\right)P_0Y\left(\frac{1}{T}\right).
\end{equation}
 Then plug the result above into Eq.~\eqref{eq:p3}, to give
 \begin{align}
    &\sum_{n=1}^l Q_0 U_T^{A\dagger}\left(\frac{n}{T}\right) X\left(\frac{n}{T}\right) U_T^{A}\left(\frac{n}{T}\right) P_0 Y\left(\frac{n}{T}\right)\nonumber\\
    & =  \mathcal{B}- \sum_{n=1}^l\left\{ Q_0U_T^{A\dagger}\left(\frac{n}{T}\right) \Tilde{X}\left(\frac{n_+}{T}\right)  U_T^A\left(\frac{n_+}{T}\right) P_0 DY\left(\frac{n}{T}\right)\right. \nonumber\\
     &\quad+ \left. Q_0 U_T^{A\dagger}\left(\frac{n}{T}\right)\left(\left[A\left(\frac{n}{T}\right),\Tilde{X}\left(\frac{n}{T}\right)\right]+B\left(\frac{n}{T}\right)\right) U_T^A\left(\frac{n}{T}\right) P_0 Y\left(\frac{n}{T}\right)\right\},\nonumber\\
     &=\mathcal{B}-\frac{1}{T}\mathcal{S},
 \end{align}
 where
 \begin{align}
     \mathcal{S}&=\sum_{n=1}^l\left\{ Q_0U_T^{A\dagger}\left(\frac{n}{T}\right) \Tilde{X}\left(\frac{n_+}{T}\right)  U_T^A\left(\frac{n_+}{T}\right) P_0 T DY\left(\frac{n}{T}\right)\right.\nonumber\\
     &\quad+ \left. TQ_0 U_T^{A\dagger}\left(\frac{n}{T}\right)\left(\left[A\left(\frac{n}{T}\right),\Tilde{X}(\frac{n}{T})\right]+B\left(\frac{n}{T}\right)\right) U_T^A\left(\frac{n}{T}\right) P_0 Y\left(\frac{n}{T}\right)\right\}. 
 \end{align}
 
\section{Details for the proof of \texorpdfstring{\cref{theoAdia}}{Theorem 15} and its corollary}\label{ap:theoAdia}

\subsection{Diagonal term}\label{ap:sec_diag}

Here we bound the ``diagonal'' term in \cref{eq:diag1}.
For this term (without loss of generality we only consider the term projected on $P_0$), we have 
\begin{align}
        &\quad \left\|\sum_{n=1}^{sT}P_0{U_T^A}^{\dagger}\left(\frac{n}{T}\right)\left(I - V_T^{\dagger}\left(\frac{n_-}{T}\right)\right){U_T^A}\left(\frac{n}{T}\right)P_0\Omega_T\left(\frac{n_-}{T}\right)\right\| \nonumber\\
        & = \left\|\sum_{n=1}^{sT}{U_T^A}^{\dagger}\left(\frac{n}{T}\right)P_T\left(\frac{n}{T}\right)\left(I - V_T^{\dagger}\left(\frac{n_-}{T}\right)\right)P_T\left(\frac{n}{T}\right){U_T^A}\left(\frac{n}{T}\right)\Omega_T\left(\frac{n_-}{T}\right)\right\| \nonumber\\
        & \leq \sum_{n=1}^{sT}\left\|P_T\left(\frac{n}{T}\right)\left(I - V_T^{\dagger}\left(\frac{n_-}{T}\right)\right)P_T\left(\frac{n}{T}\right)\right\|\left\|\Omega_T\left(\frac{n_-}{T}\right)\right\| \nonumber\\
        & = \sum_{n=1}^{sT}\left\|P_T\left(\frac{n}{T}\right)\left(I - V_T^{\dagger}\left(\frac{n_-}{T}\right)\right)P_T\left(\frac{n}{T}\right)\right\|. 
\end{align}
In the second line we have used \cref{eq:projU}.
Using \cref{lem:V_v2}, we have 
\begin{align}
\label{eq:diag_p1}
    & \quad \sum_{n=1}^{sT}\left\|P_T\left(\frac{n}{T}\right)\left(I - V_T^{\dagger}\left(\frac{n_-}{T}\right)\right)P_T\left(\frac{n}{T}\right)\right\| \nonumber\\
    &= \sum_{n=1}^{sT}\left\|P_T\left(\frac{n}{T}\right)\left(I - \mathcal{F}_T\left(\frac{n_-}{T}\right) +  \left(2P_T\left(\frac{n_-}{T}\right)-I\right)DP_T\left(\frac{n_-}{T}\right)\mathcal{F}_T\left(\frac{n_-}{T}\right)\right)P_T\left(\frac{n}{T}\right)\right\|\nonumber\\
    & \leq \sum_{n=1}^{sT}\left\|P_T\left(\frac{n}{T}\right)\left(I - \mathcal{F}_T\left(\frac{n_-}{T}\right)\right)P_T\left(\frac{n}{T}\right)\right\| + \sum_{n=1}^{sT}\left\|P_T\left(\frac{n}{T}\right)\left(2P_T\left(\frac{n_-}{T}\right)-I\right)DP_T\left(\frac{n_-}{T}\right)\mathcal{F}\left(\frac{n_-}{T}\right)P_T\left(\frac{n}{T}\right)\right\| \nonumber\\
    & \leq \sum_{n=0}^{sT-1}\left\|I-\mathcal{F}_T\left(\frac{n}{T}\right)\right\| + \sum_{n=1}^{sT}\left\|P_T\left(\frac{n}{T}\right)\left(2P_T\left(\frac{n_-}{T}\right)-I\right)DP_T\left(\frac{n_-}{T}\right)\left(\mathcal{F}_T\left(\frac{n_-}{T}\right) - I\right)P_T\left(\frac{n}{T}\right)\right\| \nonumber\\
    & \quad + \sum_{n=1}^{sT}\left\|P_T\left(\frac{n}{T}\right)\left(2P_T\left(\frac{n_-}{T}\right)-I\right)DP_T\left(\frac{n_-}{T}\right)P_T\left(\frac{n}{T}\right)\right\| 
\end{align}
From the second term in the last inequality we have
\begin{align}
    &\sum_{n=1}^{sT}\left\|P_T\left(\frac{n}{T}\right)\left(2P_T\left(\frac{n_-}{T}\right)-I\right)DP_T\left(\frac{n_-}{T}\right)\left(\mathcal{F}_T\left(\frac{n_-}{T}\right) - I\right)P_T\left(\frac{n}{T}\right)\right\|\nonumber\\
    &\leq \sum_{n=1}^{sT}\left\|DP_T\left(\frac{n_-}{T}\right)\left(\mathcal{F}_T\left(\frac{n_-}{T}\right) - I\right)\right\|\nonumber\\
    &=\sum_{n=0}^{sT-1}\left\|I-\mathcal{F}_T\left(\frac{n}{T}\right)\right\|  \left\| DP_T\left(\frac{n}{T}\right) \right\|.
\end{align}
Now if we replace  $P_T(n_{-}/T)=P_T(n/T)-DP_T(n_{-}/T)$ in the last term of \cref{eq:diag_p1}, i.e.,
\begin{align}
    \left\|P_T\left(\frac{n}{T}\right)\left(2P_T\left(\frac{n_-}{T}\right) - I\right) DP_T\left(\frac{n_-}{T}\right) P_T\left(\frac{n}{T}\right)\right\| &= \left\|P_T\left(\frac{n}{T}\right ) \left[2\left(P_T\left(\frac{n}{T}\right)-DP_T\left(\frac{n_{-}}{T}\right)\right)-I\right] DP_T\left(\frac{n_-}{T}\right)P_T\left(\frac{n}{T}\right)\right\|\nonumber\\
    &\leq \left\|P_T\left(\frac{n}{T}\right ) \left(2P_T\left(\frac{n}{T}\right)-I\right) DP_T\left(\frac{n_-}{T}\right)P_T\left(\frac{n}{T}\right)\right\|\nonumber\\
    &\quad+2\left\|P_T\left(\frac{n}{T}\right ) DP_T\left(\frac{n_-}{T}\right)^2 P_T\left(\frac{n}{T}\right)\right\|\nonumber\\
    &= \left\|P_T\left(\frac{n}{T}\right ) DP_T\left(\frac{n_-}{T}\right)P_T\left(\frac{n}{T}\right)\right\|\nonumber\\ 
    &\quad+ 2\left\|P_T\left(\frac{n}{T}\right ) DP_T\left(\frac{n_-}{T}\right)^2 P_T\left(\frac{n}{T}\right)\right\|\nonumber\\
    &= 3\left\|P_T\left(\frac{n}{T}\right ) DP_T\left(\frac{n_-}{T}\right)^2 P_T\left(\frac{n}{T}\right)\right\|,
    \end{align}
In the last calculation above, we used the following equality $p(p-q)p=p(p-q)^2p$ when we have $p,q$ as any two projections. Thus,  \begin{align}
\label{eq:diag_p2}
    & \quad \sum_{n=1}^{sT}\left\|P_T\left(\frac{n}{T}\right)\left(I - V_T^{\dagger}\left(\frac{n_-}{T}\right)\right)P_T\left(\frac{n}{T}\right)\right\| \nonumber\\
    & \leq \sum_{n=0}^{sT-1} \left\|I-\mathcal{F}_T\left(\frac{n}{T}\right)\right\|\left(1+\left\|DP_T\left(\frac{n}{T}\right)\right\|\right) + 3\sum_{n=1}^{sT}\left\|P_T\left(\frac{n}{T}\right)DP_T\left(\frac{n_-}{T}\right)^2 P_T\left(\frac{n}{T}\right)\right\| \nonumber\\
    & = \sum_{n=0}^{sT-1} \left\|I-\mathcal{F}_T\left(\frac{n}{T}\right)\right\|\left(1+\left\|DP_T\left(\frac{n}{T}\right)\right\|\right) + 3\sum_{n=1}^{sT}\left\|P_T\left(\frac{n}{T}\right)(DP_T\left(\frac{n_-}{T}\right)^2 P_T\left(\frac{n}{T}\right)\right\| \nonumber\\
    & \leq \sum_{n=0}^{sT-1} \left\|I-\mathcal{F}_T\left(\frac{n}{T}\right)\right\|\left(1+\left\|DP_T\left(\frac{n}{T}\right)\right\|\right) + 3\sum_{n=0}^{sT-1}\left\|DP_T\left(\frac{n}{T}\right)\right\|^2. 
\end{align}
To bound $\|\mathcal{F}_T(s) - I\|$, we can use \cref{lem:P} and the definition of $\mathcal{F}_T(s)$ as follows
\begin{align}
\label{eq:bounF-i}
    \|\mathcal{F}_T(s) - I\| &\leq \sum_{k=1}^{\infty} \frac{\Pi_{i=1}^{k}(2i-1)}{2^k k!}\left\| DP_T(s) \right\|^{2k}\nonumber\\ 
    &\leq \sum_{k=1}^{\infty} \frac{\Pi_{i=1}^{k}(2i-1)}{2^k k!}\left(\frac{2c_1(s)}{T\Delta_1(s)}\right)^{2k}\nonumber\\ 
    &=  \left(1-\frac{4c_1(s)^2}{T^2\Delta_1(s)^2}\right)^{-1/2}-1,\nonumber\\
    &=\mathcal{D}_1\left(\frac{2c_1(s)}{T\Delta_1(s)}\right)-1.
\end{align}

\subsection{Off-diagonal term}\label{ap:offD}

For the ``off-diagonal'' term in \cref{eq:offdiag1} used for \cref{theoAdia}, we have 
\begin{align}
\label{eq:offDiag_B}
    & \quad \left\|\sum_{n=1}^{sT}Q_0{U_T^A}^{\dagger}\left(\frac{n}{T}\right)\left(I - V_T^{\dagger}\left(\frac{n_-}{T}\right)\right){U_T^A}\left(\frac{n}{T}\right)P_0\Omega_T\left(\frac{n_-}{T}\right)\right\| \nonumber\\
    & \leq \frac{1}{T}\left\|\sum_{n=1}^{sT-1}Q_0{U_T^A}^{\dagger}\left(\frac{n}{T}\right)X\left(\frac{n}{T}\right){U_T^A}\left(\frac{n}{T}\right)P_0Y\left(\frac{n}{T}\right)\right\| + \frac{1}{T}\left\|Q_0{U_T^A}^{\dagger}\left(s\right)X\left(s\right){U_T^A}\left(s\right)P_0Y\left(s\right)\right\| \nonumber\\
    & \leq \frac 1T \| \mathcal{B} \| + \frac 1{T^2} \|\mathcal{S} \| + \frac{1}{T} \left\|X\left(s\right)\right\|\left\|Y\left(s\right)\right\| \nn
    & \leq \frac{1}{T}\left\|\tilde{X}\left(\frac{1}{T}\right)\right\|\left\|Y\left(\frac{1}{T}\right)\right\| + \frac{1}{T} \left\|\tilde{X}\left(s\right)\right\|\left\|Y\left(s\right)\right\| 
     + \frac{1}{T^2}\sum_{n=1}^{sT-1} \left\|Z\left(\frac{n}{T}\right)\right\| \left\|Y\left(\frac{n}{T}\right)\right\|+ \frac{1}{T}\sum_{n=1}^{sT-1} \left\|\tilde{X}\left(\frac{n_+}{T}\right)\right\| \left\|DY\left(\frac{n}{T}\right)\right\| \nonumber\\ 
     &\quad+ \frac{1}{T} \left\|X\left(s\right)\right\|\left\|Y\left(s\right)\right\| \nonumber\\
     & = \frac{1}{T}\left\|\tilde{X}\left(\frac{1}{T}\right)\right\| + \frac{1}{T}\left\|\tilde{X}\left(s\right)\right\| 
     + \frac{1}{T^2}\sum_{n=1}^{sT-1} \left\|Z\left(\frac{n}{T}\right)\right\|  + \frac{1}{T}\sum_{n=1}^{sT-1} \left\|\tilde{X}\left(\frac{n_+}{T}\right)\right\| \left\|DY\left(\frac{n}{T}\right)\right\|+ \frac{1}{T}\left\|X\left(s\right)\right\|. 
\end{align}
In the third line we have used the summation by parts result in \cref{lem:sum_by_parts} with $l=sT-1$.
In the fourth line we have used the product rule for norms of products and the fact that the spectral norms of projectors and unitary operators are 1.
In the last line we have used the fact that the choice of $Y_T$ is unitary.
Next we use the previously derived lemmas to provide bounds for the individual operators, $X(s)$, $\tilde{X}(s)$, $Z(s)$ and $DY(s)$.
Starting with $X(s)$ we have from \cref{lem:V_bound2} combined with \cref{lem:P} that
\begin{align}
    \left\|X\left(\frac{n}{T}\right)\right\|&=T\left\|V\left(\frac{n_-}{T}\right)-I\right\|\nonumber\\
    &\leq T\left\|\mathcal{F}_T\left(\frac{n_-}{T}\right) - I\right\| +  T\frac{2c_1(n_-/T)}{T\Delta_1(n_-/T)}\left\|\mathcal{F}_T\left(\frac{n_-}{T}\right)\right\|.
\end{align}
Now we use the upper bound from \cref{eq:bounF-i} to provide a bound on $\mathcal{F}_T(s)$ as 
\begin{align}
    \|\mathcal{F}_T(s)\| &\leq 1+\sum_{k=1}^{\infty} \frac{\Pi_{i=1}^{k}(2i-1)}{2^k k!}\left(\frac{2c_1(s)}{T\Delta_1(s)}\right)^{2k}\nonumber\\ 
    &=  \left(1-\frac{4c_1(n_-/T)^2}{T^2\Delta_1(n_-/T)^2}\right)^{-1/2},
\end{align}
and similarly 
\begin{align}
    \|\mathcal{F}_T(s) - I\| \leq \left(1-\frac{4c_1(n_-/T)^2}{T^2\Delta_1(n_-/T)^2}\right)^{-1/2} - 1.
\end{align}
That gives the following upper bound for $X(s)$
\begin{align}\label{eq:Xbnd}
    \left\|X\left(\frac{n}{T}\right)\right\|&\leq  T\left[\left(1+\frac{2c_1(n_-/T)}{T\Delta_1(n_-/T)}\right)\left(1-\frac{4c_1(n_-/T)^2}{T^2\Delta_1(n_-/T)^2}\right)^{-1/2}-1\right]\nonumber\\
    &=T\left[\left(1+\frac{2c_1(n_-/T)}{T\Delta_1(n_-/T)}\right)^{1/2}\left(1-\frac{2c_1(n_-/T)}{T\Delta_1(n_-/T)}\right)^{-1/2}-1\right]\nonumber\\
    &=T\mathcal{D}_2\left(\frac{2c_1(n_-/T)}{T\Delta_1(n_-/T)}\right) .
\end{align}
Now for $\tilde{X}(s)$ we can use the bound from \cref{lem:Xt}, which gives
\begin{align}\label{eq:Xtildebnd}
    \left\|\tilde{X}\left(\frac{n}{T}\right)\right\|&\leq 
    \frac{2T}{\Delta_0(n/T)}\mathcal{D}_2\left(\frac{2c_1(n_-/T)}{T\Delta_1(n_-/T)}\right).
\end{align}
For the bound on $Z(s)$ we can use \cref{eq:Zbound} from \cref{lem:ABZ2_v2}, but first we need bounds for $DX(s)$ and $DV_T(s-1/T)$, which can be obtained using \cref{lem:DV_v2} in combination with \cref{lem:P} as follows
\begin{align}\label{eq:XVrel}
    \left\|DX\left(\frac{n}{T}\right)\right\| &=T\left\|DV_T\left(\frac{n_-}{T}\right)\right\|\\ 
    &\leq T\left(1+\|DP_T(n/T)\|\right)\left\|D^{(2)}P_T(n_-/T)\right\| \mathcal{D}_3\left( \max(\|DP_T(n/T)\|,\|DP_T(n_-/T)\|) \right)\nonumber\\
    &\quad+T\|\mathcal{F}_T(n_-/T)\|\left(\left\|D^{(2)}P_T(n_-/T)\right\| + 2 \|(DP_T(n_-/T))\|^2\right)\nonumber\\
    &\leq \left(1+\frac{2c_1(n/T)}{T\Delta_1(n/T)}\right)\frac{\mathcal{G}_{T,1}(n_-/T)}{T}\mathcal{D}_3\left( \max\left(\frac{2c_1(n/T)}{T\Delta_1(n/T)},\frac{2c_1(n_-/T)}{T\Delta_1(n_-/T)}\right) \right)\nonumber\\
    &\quad +\mathcal{D}_1\left(\frac{2c_1(n_-/T)}{T\Delta_1(n_-/T)}\right)\left(\frac{\mathcal{G}_{T,1}(n_-/T)}{T} +  \frac{8c_1(n_-/T)^2}{T\Delta_1(n_-/T)^2}\right).
\end{align}
In the first line we have used $X(s) = T(I- V_T^\dagger(s-1/T))$, in the second line we have used \cref{lem:DV_v2}, and at the end we have used \cref{lem:P} in combination with the fact that the functions $\mathcal{D}_1$ and $\mathcal{D}_3$ are monotonically increasing.
Now the functions $\mathcal{G}_{T,2}$ and $\mathcal{G}_{T,3}$ from
\cref{eq:G2,eq:G3} can be used in the last expression above, to give
\begin{align}
    \left\|DX\left(\frac{n}{T}\right)\right\| &\leq \left(1+\frac{2c_1(n/T)}{T\Delta_1(n/T)}\right)\frac{\mathcal{G}_{T,2}(n_-/T)}{T})\nonumber\\
    &+\mathcal{D}_1\left(\frac{2c_1(n_-/T)}{T\Delta_1(n_-/T)}\right)\left(\frac{\mathcal{G}_{T,1}(n_-/T)}{T} +  \frac{8c_1(n_-/T)^2}{T\Delta_1(n_-/T)^2}\right)\nonumber\\
    &=\frac{\mathcal{G}_{T,3}(n_-/T)}{T}.
\end{align}
Then, from \cref{eq:XVrel} we have
\begin{equation}
    \left\|DV_T\left(\frac{n_-}{T}\right)\right\| \leq \frac{\mathcal{G}_{T,3}(n_-/T)}{T^2}.
\end{equation}
Now that we have these bounds we proceed to bound $Z(s)$ using \cref{eq:Zbound}.
Starting with the replacement of the bound of $DV_T(s-1/T)$ we can make use of the function ${\mathcal{G}_{T,4}(s)}$ as defined in \cref{eq:G4}
\begin{align}
       \left\|Z\left(\frac{n}{T}\right)\right\| &\leq \frac{4T}{\Delta_0\left(n/T\right)} \left(\left\|\mathcal{F}_T\left(\frac{n}{T}\right) - I\right\| +  \left\|DP_T\left(\frac{n}{T}\right)\right\|\left\|\mathcal{F}_T\left(\frac{n}{T}\right) \right\|\right)\left\|X\left(\frac{n}{T}\right)\right\| + \frac{2T}{\Delta_1(n/T)}\left\|DX\left(\frac{n}{T}\right)\right\| \nonumber\\
       &\quad+ \frac{2c_1(n/T)}{\pi  (1-\cos(\Delta_1(n/T)/2))} \left\|X\left(\frac{n}{T}\right)\right\| + \frac{2\mathcal{G}_{T,4}(n_-/T)}{\Delta_0(n/T)}\left\|X\left(\frac{n}{T}\right)\right\|. 
   \end{align}
Our next step is the replacement of the bound of $X(s)$,
\begin{align}
       \left\|Z\left(\frac{n}{T}\right)\right\| &\leq \frac{4T^2}{\Delta_0\left(n/T\right)} \left(\left\|\mathcal{F}_T\left(\frac{n}{T}\right) - I\right\| +  \left\|DP_T\left(\frac{n}{T}\right)\right\|\left\|\mathcal{F}_T\left(\frac{n}{T}\right) \right\|\right)\mathcal{D}_2\left(\frac{2c_1(n_-/T)}{\Delta_1(n_-/T)}\right) + \frac{2T}{\Delta_1(n/T)}\left\|DX\left(\frac{n}{T}\right)\right\| \nonumber\\
       &\quad+ \frac{2Tc_1(n/T)}{\pi  (1-\cos(\Delta_1(n/T)/2))} \mathcal{D}_2\left(\frac{2c_1(n_-/T)}{T\Delta_1(n_-/T)}\right) + \frac{2T\mathcal{G}_{T,4}(n_-/T)}{\Delta_0(n/T)}\mathcal{D}_2\left(\frac{2c_1(n_-/T)}{T\Delta_1(n_-/T)}\right). 
   \end{align} 
Now we use
\begin{equation}
\label{eq:comp_upper}
    \left\|\mathcal{F}_T\left(\frac{n}{T}\right) - I\right\| +  \left\|DP_T\left(\frac{n}{T}\right)\right\|\left\|\mathcal{F}_T\left(\frac{n}{T}\right) \right\|\leq \mathcal{D}_2\left(\frac{2c_1(n/T)}{T\Delta_1(n/T)}\right),
\end{equation}
and the bound derived above for $DX(s)$ to yield
\begin{align}\label{eq:Zbnd}
       \left\|Z\left(\frac{n}{T}\right)\right\| &\leq \frac{4T^2}{\Delta_0\left(n/T\right)} \mathcal{D}_2\left(\frac{2c_1(n/T)}{\Delta_1(n/T)}\right)\mathcal{D}_2\left(\frac{2c_1(n_-/T)}{\Delta_1(n_-/T)}\right) + \frac{2\mathcal{G}_{T,3}(n_-/T)}{\Delta_1(n/T)} \nonumber\\
       &\quad+ \frac{2Tc_1(n/T)}{\pi  (1-\cos(\Delta_1(n/T)/2))} \mathcal{D}_2\left(\frac{2c_1(n_-/T)}{T\Delta_1(n_-/T)}\right) + \frac{2T\mathcal{G}_{T,4}(n_-/T)}{\Delta_0(n/T)}\mathcal{D}_2\left(\frac{2c_1(n_-/T)}{T\Delta_1(n_-/T)}\right). 
   \end{align} 
   
 Finally, for the upper bound of $DY(s)$, first notice that $Y(s)=\Omega_T(s)$, so $D\Omega_T(s)=DY(s)$. Therefore, using \cref{lem:DOmega2} and our bound in \cref{eq:comp_upper} we obtain
\begin{equation}\label{eq:Ybnd}
\left\|DY\left(\frac{n}{T}\right) \right\| \leq \mathcal{D}_2\left(\frac{2c_1(n/T)}{T\Delta_1(n/T)}\right).
\end{equation} 

\subsection{Upper bounds for the functions of \texorpdfstring{\cref{theoAdia}}{Theorem 15}}\label{sec:uppMainF}

Starting the assumption that $T \geq \max (4\hat{c}_1(s)/\check{\Delta}(s))$, from the upper bounds given in \cref{eq:upperB_D} and from the inequality $1-\cos(\theta/2) \geq \theta^2/\pi^2$ for all $0\leq \theta \leq \pi$, we have
\begin{align}\label{eq:GT1app}
    \mathcal{G}_{T,1}(n_-/T) &= \frac{ c_1(n_-/T)^2+c_1(n_-/T)c_1(n/T)}{\pi (1-\cos(\Delta_2(n_-/T)/2))}  + \frac{2c_2(n_-/T)}{\Delta_2(n_-/T)}\nonumber \\ 
    & \leq \frac{ \pi c_1(n_-/T)^2 + \pi c_1(n_-/T)c_1(n/T)}{ \Delta_2(n_-/T)^2}  + \frac{2c_2(n_-/T)}{\Delta_2(n_-/T)} \nonumber \\
    & \leq \frac{2\pi\hat{c}_1(n/T)^2}{\check{\Delta}(n/T)^2} + \frac{2\hat{c}_2(n/T)}{\check{\Delta}(n/T)}. 
\end{align}
Therefore 
\begin{align}\label{eq:GT2app}
    \mathcal{G}_{T,2}(n_-/T) & \leq \xi_3 \frac{2\hat{c}_1(n/T)}{T\check{\Delta}(n/T)} \left(\frac{2\pi\hat{c}_1(n/T)^2}{\check{\Delta}(n/T)^2} + \frac{2\hat{c}_2(n/T)}{\check{\Delta}(n/T)}\right) \nonumber \\
    & = \frac{4\pi\xi_3 \hat{c}_1(n/T)^3}{T\check{\Delta}(n/T)^3} + \frac{4\xi_3 \hat{c}_1(n/T)\hat{c}_2(n/T)}{T\check{\Delta}(n/T)^2}, 
\end{align}
\begin{align}\label{eq:GT3app}
    \mathcal{G}_{T,3}(n_-/T) &\leq \frac{3}{2}\left(\frac{4\pi\xi_3 \hat{c}_1(n/T)^3}{T\check{\Delta}(n/T)^3} + \frac{4\xi_3\hat{c}_1(n/T)\hat{c}_2(n/T)}{T\check{\Delta}(n/T)^2}\right) + \xi_1 \left(\frac{2\pi\hat{c}_1(n/T)^2}{\check{\Delta}(n/T)^2} + \frac{2\hat{c}_2(n/T)}{\check{\Delta}(n/T)} + \frac{8\hat{c}_1(n/T)^2}{\check{\Delta}(n/T)^2}\right) \nonumber \\
    & \leq \frac{3}{2}\left(\frac{\pi\xi_3  \hat{c}_1(n/T)^2}{\check{\Delta}(n/T)^2} + \frac{\xi_3 \hat{c}_2(n/T)}{\check{\Delta}(n/T)}\right) + \xi_1\left(\frac{(2\pi+8)\hat{c}_1(n/T)^2}{\check{\Delta}(n/T)^2} + \frac{2\hat{c}_2(n/T)}{\check{\Delta}(n/T)}\right) \nonumber \\
    & \leq \left(3\pi\xi_3/2 + (2\pi+8)\xi_1\right)\frac{\hat{c}_1(n/T)^2}{\check{\Delta}(n/T)^2} + \left(3\xi_3/2+2\xi_1\right)\frac{\hat{c}_2(n/T)}{\check{\Delta}(n/T)}, 
\end{align}
and 
\begin{align}\label{eq:GT4app}
    \mathcal{G}_{T,4}(n_-/T) \leq \left(3\pi\xi_3/2 + (2\pi+8)\xi_1\right)\frac{\hat{c}_1(n/T)^2}{T\check{\Delta}(n/T)^2} + \left(3\xi_3/2+2\xi_1\right)\frac{\hat{c}_2(n/T)}{T\check{\Delta}(n/T)} + \hat{c}_1(n/T). 
\end{align}
These bounds are used in the body of the paper in \cref{eq:GT1body,eq:GT2body,eq:GT3body,eq:GT4body}.

Plugging all these bounds back to \cref{theoAdia} and using $1-\cos(\theta/2) \geq \theta^2/\pi^2$ again, we have 
\begin{align}\label{eq:Uerrbndapp}
    & \quad \|U_T(s) - U_T^A(s)\| \nonumber\\
    &\leq  \frac{8\xi_2 \hat{c}_1(0)}{T\check{\Delta}(0)^2}+  \frac{8\xi_2 \hat{c}_1(s)}{T\check{\Delta}(s)^2} + \frac{4\xi_2 \hat{c}_1(s)}{T\check{\Delta}(s)} + \sum_{n=1}^{sT-1}\frac{12}{\check{\Delta}(n/T)}\xi_2^2\left(\frac{2\hat{c}_1(n/T)}{T\check{\Delta}(n/T)}\right)^2  \nonumber\\
    & \quad + \sum_{n=1}^{sT-1}\left(6\pi\xi_3 + (8\pi+32)\xi_1\right)\frac{\hat{c}_1(n/T)^2}{T^2\check{\Delta}(n/T)^3} + \sum_{n=1}^{sT-1}\left(6\xi_3+8\xi_1\right)\frac{\hat{c}_2(n/T)}{T^2\check{\Delta}(n/T)^2} + \sum_{n=1}^{sT-1} \frac{4\pi \hat{c}_1(n/T)}{ T \check{\Delta}(n/T)^2} \xi_2 \frac{2\hat{c}_1(n/T)}{T\check{\Delta}
       (n/T)} \nonumber \\
       & \quad + \sum_{n=1}^{sT-1}\left(6\pi\xi_3 + (8\pi+32)\xi_1\right) \frac{\hat{c}_1(n/T)^2}{T^2\check{\Delta}(n/T)^3}\left(\xi_2\frac{2\hat{c}_1(n/T)}{T\check{\Delta}
       (n/T)}\right) + \sum_{n=1}^{sT-1}\left(6\xi_3 + 8\xi_1\right) \frac{\hat{c}_2(n/T)}{T^2\check{\Delta}(n/T)^2}\left(\xi_2\frac{2\hat{c}_1(n/T)}{T\check{\Delta}
       (n/T)}\right) \nonumber \\
       & \quad + \sum_{n=1}^{sT-1} \frac{4\hat{c}_1(n/T)}{T\check{\Delta}(n/T)}\left(\xi_2\frac{2\hat{c}_1(n/T)}{T\check{\Delta}
       (n/T)}\right)  + \sum_{n=0}^{sT-1} \frac{24\hat{c}_1(n/T)^2}{T^2\check{\Delta}(n/T)^2} +  \sum_{n=0}^{sT-1}\frac{8\hat{c}_1(n/T)^2}{T^2\check{\Delta}(n/T)^2} \nonumber \\
       & \leq \frac{8\xi_2\hat{c}_1(0)}{T\check{\Delta}(0)^2}+  \frac{8\xi_2\hat{c}_1(s)}{T\check{\Delta}(s)^2} + \frac{4\xi_2\hat{c}_1(s)}{T\check{\Delta}(s)} + \sum_{n=1}^{sT-1}\frac{48\xi_2^2\hat{c}_1(n/T)^2}{T^2\check{\Delta}(n/T)^3} \nonumber\\
    & \quad + \sum_{n=1}^{sT-1}\left(6\pi\xi_3+(8\pi+32)\xi_1\right)\frac{\hat{c}_1(n/T)^2}{T^2\check{\Delta}(n/T)^3} + \sum_{n=1}^{sT-1}(6\xi_3+8\xi_1)\frac{\hat{c}_2(n/T)}{T^2\check{\Delta}(n/T)^2}+ \sum_{n=1}^{sT-1} \frac{8\pi\xi_2 \hat{c}_1(n/T)^2}{ T^2 \check{\Delta}(n/T)^3}  \nonumber \\
       & \quad + \sum_{n=1}^{sT-1}\left(12\pi\xi_2\xi_3+(16\pi+64)\xi_1\xi_2\right) \frac{\hat{c}_1(n/T)^3}{T^3\check{\Delta}(n/T)^4} + \sum_{n=1}^{sT-1} \left(12\xi_2\xi_3+16\xi_1\xi_2\right)\frac{\hat{c}_1(n/T)\hat{c}_2(n/T)}{T^3\check{\Delta}(n/T)^3} \nonumber \\
       & \quad + \sum_{n=1}^{sT-1} \frac{8\xi_2 \hat{c}_1(n/T)^2}{T^2\check{\Delta}(n/T)^2} + \sum_{n=0}^{sT-1} \frac{24\hat{c}_1(n/T)^2}{T^2\check{\Delta}(n/T)^2} +  \sum_{n=0}^{sT-1}\frac{8\hat{c}_1(n/T)^2}{T^2\check{\Delta}(n/T)^2}  \nonumber \\
       & \leq \frac{8\xi_2\hat{c}_1(0)}{T\check{\Delta}(0)^2}+  \frac{8\xi_2\hat{c}_1(s)}{T\check{\Delta}(s)^2} + \frac{4\xi_2\hat{c}_1(s)}{T\check{\Delta}(s)} +  \sum_{n=1}^{sT-1} \left(48\xi_2^2 + 6\pi \xi_3 + (8\pi+32)\xi_1 + 8\pi \xi_2\right) \frac{\hat{c}_1(n/T)^2}{T^2\check{\Delta}(n/T)^3} \nonumber \\
       & \quad + \sum_{n=1}^{sT-1}(6\xi_3+8\xi_1)\frac{\hat{c}_2(n/T)}{T^2\check{\Delta}(n/T)^2}+ \sum_{n=0}^{sT-1} \frac{(32+8\xi_2) \hat{c}_1(n/T)^2}{ T^2 \check{\Delta}(n/T)^2}  \nonumber \\
       & \quad + \sum_{n=1}^{sT-1}\left(12\pi\xi_2\xi_3+(16\pi+64)\xi_1\xi_2\right) \frac{\hat{c}_1(n/T)^3}{T^3\check{\Delta}(n/T)^4}  + \sum_{n=1}^{sT-1} \left(12\xi_2\xi_3+16\xi_1\xi_2\right)\frac{\hat{c}_1(n/T)\hat{c}_2(n/T)}{T^3\check{\Delta}(n/T)^3}.  
\end{align}
This result is used in the body of the paper in \cref{eq:Uerrbndbody}.

\section{Block encoding of \texorpdfstring{$H(s)$}{H(s)}}\label{app:blockHs}
Here we describe how to perform the block encoding of $H(s)$ as given in \cref{eq:Hsencoding}.
We denote the unitary for the block encoding of $A$ as $U_A$, which acts on an ancilla denoted with subscript $a$ and the system such that
\begin{equation}
    {}_a \! \bra{0} U_A \ket{0}_a = A.
\end{equation}
We also denote the unitary oracle for preparing $\ket{b}$ as $U_b$ such that
\begin{equation}
    U_b \ket{0} = \ket{b}.
\end{equation}
As well as the ancilla system used for the block encoding of $A$, we use four ancilla qubits.
These ancilla qubits are used as follows.
\begin{enumerate}
    \item The first selects between the blocks in $A(f)$.
    \item The next is used for preparing the combination of $\sigma_z\otimes I$ and $\mathbf{A}$.
    \item The third is used in implementing $Q_\mathbf{b}$.
    \item The fourth selects between the blocks in $H(s)$.
\end{enumerate}
These three qubits will be denoted with subscripts $a_1$ to $a_4$.

First consider $A(f)$, which can be written as
\begin{equation}
    A(f) = (1-f) \sigma^z_{a_1} \otimes I_N + f (\ketbra{0}{1}_{a_1}\otimes A + \ketbra{1}{0}_{a_1}\otimes A^\dagger).
\end{equation}
Note that the first operators in the tensor products here, $\sigma^z$ and $\ketbra{0}{1}$ or $\ketbra{1}{0}$, act upon the ancilla denoted $a_1$.
To block encode the operation using ancilla $a_2$, we can use the select operation
\begin{equation}
   U_{A(f)} = \ketbra{0}{0}_{a_2} \otimes \sigma^z_{a_1} \otimes I_{N} \otimes I_a +
    \ketbra{1}{1}_{a_2} \otimes (\ketbra{0}{1}_{a_1}\otimes U_A + \ketbra{1}{0}_{a_1}\otimes U_A^\dagger).
\end{equation}
Here we have included $I_a$ to indicate that the operation is acting as the identity on the ancilla system used for the block encoding of $A$.
Note that we require the ability to apply the oracle $U_A$ in a selected way, where we either perform $U_A$, $U_A^\dagger$ or the identity.

Next consider $Q_{\mathbf{b}}$, which is given by
\begin{equation}
    Q_{\mathbf{b}} = I_{a_1} \otimes I_N - \ketbra{1}{1}_{a_1} \otimes \ketbra{b}{b}.
\end{equation}
Here we have used the ancilla $a_1$ to account for using $\mathbf{b}$ which is encoded as $\ket{1}_{a_1}\otimes \ket{b}$.
We can construct this projector using
\begin{equation}
    (I_{a_1} \otimes U_b^\dagger) \left[ I_{a_1} \otimes I_N - \ketbra{1}{1}_{a_1} \otimes \ketbra{0}{0}_N \right] (I_{a_1} \otimes U_b),
\end{equation}
where we are using subscript $N$ on $\ketbra{0}{0}$ to indicate it is on the system.
We can block encode this projector using the ancilla $a_4$.
We simply need to create this ancilla in an equal superposition, and use the unitary operation
\begin{equation}
   U_{Qb} = (I_{a_3} \otimes I_{a_1} \otimes U_b^\dagger) \left[ \ketbra{0}{0}_{a_3} \otimes I_{a_1} \otimes I_N 
    +\ketbra{1}{1}_{a_3} \otimes (2I_{a_1} \otimes I_N 
    - \ketbra{1}{1}_{a_1} \otimes \ketbra{0}{0}_N) \right] (I_{a_3} \otimes I_{a_1} \otimes U_b).
\end{equation}
That gives the projector as a linear combination of the identity and a reflection.
Finally we are prepared to describe the unitary to block encode $H(s)$, which can be written as
\begin{equation}
    H(s) = \ketbra{0}{1}_{a_4} \otimes A(f(s)) Q_{\mathbf{b}} + \ketbra{1}{0}_{a_4} \otimes Q_{\mathbf{b}} A(f(s)).
\end{equation}

In order to select between $A(f(s)) Q_{\mathbf{b}}$ and $Q_{\mathbf{b}} A(f(s))$, we will apply $Q_{\mathbf{b}}$ in a controlled way before and after $A(f(s))$.
We will denote the controlled unitary for $Q_{\mathbf{b}}$, as controlled on 0 or 1, by $CU^0_{Qb}$ or $CU^1_{Qb}$, respectively.
We may make $Q_{\mathbf{b}}$ controlled simply by making the reflection $2I_{a_1} \otimes I_N 
- \ketbra{1}{1}_{a_1} \otimes \ketbra{0}{0}_N$ controlled, and we do not need to make the oracle $U_b$ controlled.
We can therefore apply $CU^1_{Qb}$ as
\begin{align}
    CU^1_{Qb} &= \ketbra{0}{0}_{a_4} \otimes I_{a_3} \otimes I_{a_1} \otimes I_N + \ketbra{1}{1}_{a_4} \otimes U_{Qb} \nn
    &= (I_{a_3} \otimes I_{a_1} \otimes U_b^\dagger) \left[ I_{a_4} \otimes \ketbra{0}{0}_{a_3} \otimes I_{a_1} \otimes I_N 
    +\ketbra{0}{0}_{a_4} \otimes \ketbra{1}{1}_{a_3} \otimes I_{a_1} \otimes I_N \right. \nn  & \quad \left.
    +\ketbra{1}{1}_{a_4} \otimes \ketbra{1}{1}_{a_3} \otimes (2I_{a_1} \otimes I_N 
    - \ketbra{1}{1}_{a_1} \otimes \ketbra{0}{0}_N)  
    \right] (I_{a_3} \otimes I_{a_1} \otimes U_b)
\end{align}
and similarly for $CU^0_{Qb}$.

We also need to perform the rotation $R(s)$ before or after these operations controlled on the ancilla $a_4$.
That is, we will perform at the beginning
\begin{equation}
    CR^0(s) = \ketbra{0}{0}_{a_4} \otimes R(s)_{a_2} + \ketbra{1}{1}_{a_4} \otimes \mathcal{H}_{a_2},
\end{equation}
where $\mathcal{H}$ denotes the Hadamard operation.
Then at the end we perform the controlled operation
\begin{equation}
    CR^1(s) = \ketbra{1}{1}_{a_4} \otimes R(s)_{a_2} + \ketbra{0}{0}_{a_4} \otimes \mathcal{H}_{a_2},
\end{equation}

We are finally ready to provide the complete sequence of operations to block encode $H(s)$.
In the following we will use the various operations defined above on subsets of the ancillas, with the convention that they act as the identity on any ancillas their action has not been described on.
\begin{enumerate}
    \item We apply the Hadamard on $a_3$ to provide the linear combination needed for $Q_\mathbf{b}$.
    \item Next apply $CU^1_{Qb}$ for controlled implementation of $Q_\mathbf{b}$ before $A(f(s))$.
    \item Apply $CR^0(s)$ to provide the rotation on ancilla $a_2$.
    \item Apply $U_{A(f)}$ for the block encoding of $A(f(s))$.
    \item Apply $CR^1(s)$ to provide the symmetric form of the rotation on ancilla $a_2$.
    \item Apply $CU^0_{Qb}$ for controlled implementation of $Q_\mathbf{b}$ after $A(f(s))$.
    \item Apply the Hadamard on $a_3$ again.
    \item Finally, apply $\sigma^x$ on $a_4$ to flip that bit.
\end{enumerate}
Now recall that we require the unitary operation in the block encoding to be self-inverse for the qubitisation.
To see that this sequence of operations is self-inverse, first note that each of the individual operations is self-inverse.
Second, note that $CU^0_{Qb}=\sigma^x_{a_4} CU^0_{Qb} \sigma^x_{a_4}$ and $CR^0(s)=\sigma^x_{a_4}CR^1(s)\sigma^x_{a_4}$.
That is, we may flip between controlling on 0 and 1 by applying the \textsc{not} gate to $a_4$.
Therefore we have the complete operation squared given by
\begin{align}
    &\left[ \sigma^x_{a_4} \,\mathcal{H}_{a_3}\, CU^0_{Qb}\,CR^1(s)\,U_{A(f)}\,CR^0(s)\,CU^1_{Qb}\,\mathcal{H}_{a_3}\right]
\left[    \sigma^x_{a_4}\, \mathcal{H}_{a_3} \,CU^0_{Qb}\,CR^1(s)\,U_{A(f)}\,CR^0(s)\,CU^1_{Qb}\,\mathcal{H}_{a_3} \right]\nn
&= \sigma^x_{a_4} \,\mathcal{H}_{a_3}\, CU^0_{Qb}\,CR^1(s)\,U_{A(f)}\,CR^0(s)\,CU^1_{Qb}\,
 \sigma^x_{a_4}\, CU^0_{Qb}\,CR^1(s)\,U_{A(f)}\,CR^0(s)\,CU^1_{Qb}\,\mathcal{H}_{a_3} \nn
&= \sigma^x_{a_4} \,\mathcal{H}_{a_3}\, CU^0_{Qb}\,CR^1(s)\,U_{A(f)}\,CR^0(s)\,\sigma^x_{a_4}\,CU^0_{Qb}\,
  CU^0_{Qb}\,CR^1(s)\,U_{A(f)}\,CR^0(s)\,CU^1_{Qb}\,\mathcal{H}_{a_3} \nn
&= \sigma^x_{a_4} \,\mathcal{H}_{a_3}\, CU^0_{Qb}\,CR^1(s)\,U_{A(f)}\,CR^0(s)\,\sigma^x_{a_4}\,CR^1(s)\,U_{A(f)}\,CR^0(s)\,CU^1_{Qb}\,\mathcal{H}_{a_3} \nn
&= \sigma^x_{a_4} \,\mathcal{H}_{a_3}\, CU^0_{Qb}\,CR^1(s)\,U_{A(f)}\,\sigma^x_{a_4}\,CR^1(s)\,CR^1(s)\,U_{A(f)}\,CR^0(s)\,CU^1_{Qb}\,\mathcal{H}_{a_3} \nn
&= \sigma^x_{a_4} \,\mathcal{H}_{a_3}\, CU^0_{Qb}\,CR^1(s)\,U_{A(f)}\,\sigma^x_{a_4}\,U_{A(f)}\,CR^0(s)\,CU^1_{Qb}\,\mathcal{H}_{a_3} \nn
&= \sigma^x_{a_4} \,\mathcal{H}_{a_3}\, CU^0_{Qb}\,CR^1(s)\,\sigma^x_{a_4}\,U_{A(f)}\,U_{A(f)}\,CR^0(s)\,CU^1_{Qb}\,\mathcal{H}_{a_3} \nn
&= \sigma^x_{a_4} \,\mathcal{H}_{a_3}\, CU^0_{Qb}\,CR^1(s)\,\sigma^x_{a_4}\,CR^0(s)\,CU^1_{Qb}\,\mathcal{H}_{a_3} \nn
&= \sigma^x_{a_4} \,\mathcal{H}_{a_3}\, CU^0_{Qb}\,\sigma^x_{a_4}\,CR^0(s)\,CR^0(s)\,CU^1_{Qb}\,\mathcal{H}_{a_3} \nn
&= \sigma^x_{a_4} \,\mathcal{H}_{a_3}\, CU^0_{Qb}\,\sigma^x_{a_4}\,CU^1_{Qb}\,\mathcal{H}_{a_3} \nn
&= \sigma^x_{a_4} \,\mathcal{H}_{a_3}\, \sigma^x_{a_4}\,CU^1_{Qb}\,CU^1_{Qb}\,\mathcal{H}_{a_3} \nn
&= \sigma^x_{a_4} \,\mathcal{H}_{a_3}\, \sigma^x_{a_4}\,\mathcal{H}_{a_3} \nn
&= I.
\end{align}
Here we have repeatedly commuted $\sigma^x_{a_4}$ through operators, and used the property that operators are self-inverse to cancel them.
This shows that our sequence of operations is self-inverse as required.

\section{Upper bounds of \texorpdfstring{\cref{cor:adia}}{Theorem 3} with \texorpdfstring{$p=3/2$}{p=3/2}}\label{ap:upperB_cor}

We split the proof of \cref{theo:p15} into three parts: the upper bounds for the three terms without sums, the sums with $\hat{c}_1$, and the summation term with $\hat{c}_2$.
Before we proceed with each calculation, first we note that in \cref{cor:adia} the three gaps are replaced by the minimum one, \cref{eq:fhat}, that is
 \begin{equation}
\check{\Delta}(s) = \min_{s' \in \left\{s-1/T,s,s+1/T\right\} \cap [0,1] } \Delta(s').
\end{equation}
We have \cref{eq:Gaps} using the fact that the gap is monotonically decreasing, so then the fact that $f$ is monotonically increasing gives us
\begin{equation}
\label{eqn:qlsp_gap_check}
    \check{\Delta}(s) = \begin{cases}       
       (1- f(s+3/T) + f(s+3/T)/\kappa), & 0 \leq s \leq 1-3/T, \\
       1/\kappa, & s = 1-2/T, 1-1/T, 1.  \\
    \end{cases}
\end{equation}

Next, in \cref{cor:adia} we have the functions $\hat{c}_1(s)$ and $\hat{c}_2(s)$ as defined in \cref{eq:chat}.
Choices for the functions $c_1(s)$ and $c_2(s)$ are given in \cref{lem:DR}.
Using the monotonicity properties of the function $f$, we find
\begin{equation}
\label{eq:cic1corr}
\hat{c}_1(s) =   \begin{cases}
       2Tf(1/T), & s= 0, \\
       2T(f(s)-f(s-1/T)), & 1/T \leq s \leq 1,
\end{cases}
\end{equation}
and
\begin{equation}
\label{eq:cic2corr}
\hat{c}_2(s) = 2\left(2|f'(s)|^2 + |f''(s)|\right).
\end{equation}
For $\hat{c}_1(s)$ we have used the fact that $f'(s)$ is monotonically decreasing, so a larger difference will be obtained for a smaller value of $s$.
For $\hat{c}_2(s)$ we have also used the fact that $|f''(s)|$ is monotonically decreasing, so again larger values will be obtained for smaller values of $s$.
The monotonicity properties of $f$ are easily checked by checking expressions for the derivatives; $f'(s)$ is positive, $f''(s)$ is negative and $f'''(s)$ is positive.

In the block encoding we need to account for how the gap in $H(s)$ is translated to the gap in the walk operators.
The solution state has eigenvalue $0$, which is translated to the eigenvalues $\pm 1$ for the walk operator.
The eigenvalues $\lambda$ of $H$ are generally translated to $\pm e^{\pm i\arcsin \lambda}$, which means the gap for the walk operator is increased to the arcsine of the gap of the Hamiltonian.
Since the arcsine can only increase the gap, the lower bounds on the gap for $H(s)$ also apply to the walk operator.

\subsection{Single components}\label{appsec:single}

Beginning with the first term from the bound in \cref{cor:adia}, using the expression for $\check{\Delta}(0)$ from \cref{eqn:qlsp_gap_check}, for $\hat{c}_1(0)$ from \cref{eq:cic1corr}, and $f(s)$ from \cref{eq:sched1}, we get
\begin{align}
\label{eq:firstupperB}
    \frac{\hat{c}_1(0)}{T\check{\Delta}(0)^2}& = 2 \frac{f(1/T)}{(1-f(3/T)+f(3/T)/\kappa)^2}\nonumber\\
    &=\frac{2}{ T^4}\frac{\kappa}{\sqrt{\kappa}+1}\frac{\left(3\sqrt{\kappa}-3+T\right)^4\left(\sqrt{\kappa}-1+2T\right)}{(\sqrt{\kappa}-1+T)^2}\nonumber\\
    &=\frac{4}{ T}\frac{\kappa}{\sqrt{\kappa}+1} \frac{(1+2\alpha_1)^4 (1-\alpha_1/2)}{(1-\alpha_1)^3} \nn
    &=\frac{4}{ T}\frac{\kappa}{\sqrt{\kappa}+1} \left[1+\mathcal{O}(\alpha_1)\right] \nn
    &= \frac{4\sqrt{\kappa}}{T} + \mathcal{O}\left( \frac{\kappa}{T^2} \right),
\end{align}
where
\begin{equation}
    \alpha_n:=\frac{\sqrt\kappa-1}{T+n(\sqrt\kappa-1)},
\end{equation}
so $\alpha_n=\mathcal{O}(\sqrt\kappa/T)$,
and we have used $T>\kappa$.
This result is given in \cref{eq:single_c(0)} of the body.

We next show \cref{eq:single_c(1),eq:single_c(1)_2}.
This time we use $\hat{c}_1(s)$ and $\check{\Delta}(s)$ for $s=1$; by \cref{eq:cic1corr}  we get $\hat{c}_1(1)=2(1-f(1-1/T))$ and from \cref{eqn:qlsp_gap_check} we have            $\check{\Delta}(1)=1/\kappa$.
Therefore 
\begin{align}
    \frac{\hat{c}_1(1)}{T\check{\Delta}(1)^2}& = 2\kappa^2 (1-f(1-1/T))\nonumber\\
    & = 2\kappa^2 \left[1+\frac{\kappa}{1-\kappa}\left(1-\frac{1}{(1+(\sqrt{\kappa} -1)(1-1/T))^2}\right)\right].
\end{align}
Now we simplify the terms inside the square brackets to give
\begin{align}
\label{eq:bracket}
    1+\frac{\kappa}{1-\kappa}\left[1-\frac{T^2}{(T+(\sqrt{\kappa} -1)(T-1))^2}\right] &= \frac{(1-\kappa)\left(1+\sqrt{\kappa}(T-1)\right)^2+\kappa\left(1+\sqrt{\kappa}(T-1)\right)^2 - T^2\kappa}{\left(1 - \kappa \right)\left(1+\sqrt{\kappa}(T-1)\right)^2}\nonumber\\
    &= \frac{\left(1+\sqrt{\kappa}(T-1)\right)^2 - T^2\kappa}{\left(1 - \kappa \right)\left(1+\sqrt{\kappa}(T-1)\right)^2}\nonumber\\
    &=\frac{\sqrt{\kappa}( 2 T-1) +1}{(\sqrt{\kappa}+1)(1+\sqrt{\kappa}(T-1))^2} \nn
    &=\frac{2}{T(\kappa+\sqrt{\kappa})} 
    \frac{1-\beta/2}{(1-\beta)^2} \nn
    &=\frac{2}{T(\kappa+\sqrt{\kappa})} \left[1+\mathcal{O}(\beta)\right] \nn
    &=\frac{2}{\kappa T} + \mathcal{O}\left( \frac{1}{\kappa T^2} \right),
\end{align}
with
\begin{equation}
    \beta=\frac{1-1/\sqrt\kappa} T,
\end{equation}
so $\beta=\mathcal{O}(1/T)$.
Therefore, we can conclude
\begin{equation}
\label{eq:SecondupperB}
    \frac{\hat{c}_1(1)}{ T\check{\Delta}(1)^2} 
    =\frac{4\kappa}{T} + \mathcal{O}\left( \frac{\kappa}{T^2} \right).
\end{equation}
This is the result given in \cref{eq:single_c(1)}.
For the other upper bound, we have $\check{\Delta}(1)$ instead of $\check{\Delta}(1)^2$, so get for the upper bound shown in \cref{eq:single_c(1)_2}
\begin{equation}
\label{eq:thirdupperB}
    \frac{\hat{c}_1(1)}{T\check{\Delta}(1)} 
    =\frac{4}{T} + \mathcal{O}\left( \frac{1}{T^2} \right).
\end{equation}

\subsection{\texorpdfstring{$c_1(s)$}{c1(s)} summations}
\label{appsec:summc1}
We start by considering the sum of $\hat{c}_1(s)^2/(T^2\check{\Delta}(s)^3)$ for $1/T\leq s\leq 1-3/T$.
In this range we get
\begin{align}
    \frac{\hat{c}_1(n/T)^2}{T^2\check{\Delta}(n/T)^3} &= 4 \frac{(f(n/T)-f((n-1)/T))^2}{(1-f((n+3)/T)+f((n+3)/T)/\kappa)^3} \nn
&= \frac{16\kappa^2}{(\sqrt\kappa + 1)^2T^2}\frac{\left[(3 + n)(\sqrt{\kappa}-1)+T\right]^6\left[(n - 1/2)(\sqrt{\kappa}-1)+T\right]^2}{\left[n(\sqrt{\kappa}-1)+T\right]^4\left[(n-1)(\sqrt{\kappa}-1)+T\right]^4}\nonumber\\
&= \frac{16\kappa^2}{(\sqrt\kappa + 1)^2T^2}
\frac{(1+3\alpha_n)^6(1-\alpha_n/2)^2}{(1-\alpha_n)^4}
\nonumber\\
&= \frac{16\kappa^2}{(\sqrt\kappa + 1)^2T^2}
\left[1+\mathcal{O}(\alpha_n)\right] \nn
    &=  \frac{16 \kappa}{T^2}+ \mathcal{O} \left( \frac{\kappa^{3/2}}{T^3} \right).
\end{align}
Now for the last two elements of the sum we have
\begin{align}
\label{eq:upper-1/t}
   \frac{\hat{c}_1(1-2/T)^2}{T^2\check{\Delta}(1-2/T)^3} &= 4 \kappa^3(f(1-2/T) - f(1-3/T) )^2\nn
   &=
   \frac{16 \kappa^2}{
 T^2 (1 + \sqrt\kappa)^2} \frac{[1 - 5/(2 T) + 5/(
     2 T \sqrt\kappa)]^2}{[1 + 6/T^2 - 5/T + 6/(T^2 \kappa) - 12/(
     T^2 \sqrt\kappa) + 5/(T \sqrt\kappa)]^4}\nn
     &=\frac{16 \kappa^2}{
 T^2 (1 + \sqrt\kappa)^2} \left[ 1+\mathcal{O}\left( \frac 1T \right) \right] \nn
   &=\frac{16 \kappa}{T^2}+ \mathcal{O} \left( \frac{\kappa}{T^3} \right),
\end{align}
and
\begin{align}
\label{eq:upper-2/t}
    \frac{\hat{c}_1(1-1/T)^2}{T^2\check{\Delta}(1-1/T)^3} &= 4\kappa^3(f(1-1/T) - f(1-2/T) )^2 \nn
   &=
   \frac{16 \kappa^2}{
 T^2 (1 + \sqrt\kappa)^2} \frac{[1 - 3/(2 T) + 3/(
     2 T \sqrt\kappa)]^2}{[1 + 2/T^2 - 3/T + 2/(T^2 \kappa) - 4/(
     T^2 \sqrt\kappa) + 3/(T \sqrt\kappa)]^4}\nn
     &=\frac{16 \kappa^2}{
 T^2 (1 + \sqrt\kappa)^2} \left[ 1+\mathcal{O}\left( \frac 1T \right) \right] \nn    
    &=\frac{16 \kappa}{T^2}+ \mathcal{O} \left( \frac{\kappa}{T^3} \right).
\end{align}
Therefore, for all $n$ in the sum we have an upper bound of $16\kappa/T^2$ up to leading order.
The total upper bound for the sum of $\hat{c}_1(n/T)^2/(T^2\check{\Delta}(n/T)^3)$ from from $n=1$ to $T-1$ is therefore
\begin{equation}
\label{eq:uppersum1}
   \sum_{n=1}^{T-1} \hat{c}_1(n/T)^2/(T^2\check{\Delta}(n/T)^3) = \frac{16 \kappa}{T}+ \mathcal{O} \left( \frac{\kappa^{3/2}}{T^2} \right),
\end{equation}
which is given in \cref{eq:sumc1}.

Next we show the upper bound for the sum of the elements $\hat{c}_1(s)^2/(T^2\check{\Delta}(s)^2)$, which is given in \cref{eq:sumc1_2} above. When $1/T\leq s\leq 1-3/T$ we have
\begin{align}
   \frac{\hat{c}_1(n/T)^2}{T^2\check{\Delta}(n/T)^2}&= 4 \frac{(f(n/T)-f((n-1)/T))^2}{(1-f((n+3)/T)+f((n+3)/T)/\kappa)^2}\nonumber\\
   &= \frac{4\kappa^2}{(\sqrt{\kappa} + 1)^2}\frac{\left[(3+n)(\sqrt{\kappa}-1)+T\right]^4 \left[ \left(2n - 1\right)\left(\sqrt{\kappa} - 1\right) + 2T\right]^2 }{\left[n(\sqrt{\kappa}-1)+T\right]^4\left[(n-1)(\sqrt{\kappa}-1)+T\right]^4}\nonumber\\
&=  \frac{16\kappa^2}{[T+n(\sqrt\kappa-1)]^2(\sqrt{\kappa} + 1)^2} \frac{(1-\alpha_n/2)^2(1+3\alpha_n)^4}{(1-\alpha_n)^4}\nn   
&= \frac{16\kappa^2}{[T+n(\sqrt\kappa-1)]^2(\sqrt{\kappa} + 1)^2} \left[ 1+ \mathcal{O}\left(\alpha_n \right)\right]\nn
&\le \frac{16}{T^2}+\mathcal{O}\left(\frac {\sqrt\kappa}{T^3} \right).
\end{align}

Because the sum starts from $n=0$ we need the following upper bound
\begin{align}
   \frac{\hat{c}_1(0)^2}{T^2\check{\Delta}(0)^2}&=4\frac{f(1/T)^2}{(1-f(3/T)+f(3/T)/\kappa)^2}\nonumber\\
   & = \frac{4\kappa^2}{(\sqrt{\kappa} + 1)^2}\frac{\left(3(\sqrt{\kappa}-1) + T\right)^4\left(\sqrt{\kappa} - 1 + 2T\right)^2}{T^4\left(\sqrt{\kappa}-1+T\right)^4}\nonumber\\
   & = \frac{16\kappa^2}{(\sqrt\kappa+1)^2T^2} \frac{(1+a_0/2)^2(1+3a_0)^4}{(1+a_0)^4} \nn
   & = \frac{16\kappa^2}{(\sqrt\kappa+1)^2T^2} \left[ 1+\mathcal{O}(a_0)\right] \nn
   &= \frac{16\kappa}{T^2} + \mathcal{O} \left( \frac {\kappa^{3/2}}{T^3} \right).
 \end{align}

We also have to upper bound the cases where $s=1-1/T$ and $s=1-2/T$. This upper bound is the same as we had in~\cref{eq:upper-1/t,eq:upper-2/t}, but now with $1/\kappa^2$ in the denominator rather than  $1/\kappa^3$, so we get
\begin{equation}
   \frac{\hat{c}_1(1-2/T)^2}{T^2\check{\Delta}(1-2/T)^2} = 4\kappa^2(f(1-2/T) - f(1-3/T) )^2 = \frac{16}{T^2} + \mathcal{O} \left( \frac {1}{T^3} \right),
\end{equation}
and
\begin{equation}
    \frac{\hat{c}_1(1-1/T)^2}{T^2\check{\Delta}(1-1/T)^2} = 4\kappa^2(f(1-1/T) - f(1-2/T) )^2= \frac{16}{T^2} + \mathcal{O} \left( \frac {1}{T^3} \right).
\end{equation}
There are $T$ terms in the sum, and each is upper bounded by $16/T^2$ to leading order except that at $n=0$.
We therefore get
\begin{equation}
    \sum_{n=0}^{T-1}\frac{\hat{c}_1(n/T)^2}{T^2\check{\Delta}(n/T)^2}
    \leq \frac{16}{T} + \frac{16\kappa}{T^2} + \mathcal{O} \left( \frac {\sqrt\kappa}{T^2} \right)
    = \frac{16}{T} +\mathcal{O} \left( \frac {\kappa}{T^2} \right).
\end{equation}
This is the result given as \cref{eq:sumc1_2} above.

\subsection{\texorpdfstring{$c_2(s)$}{c2(s)} summation}\label{appsec:sumc2}

Next we show the upper bound given in \cref{eq:sumc2}.
Using \cref{eq:cic2corr,eqn:qlsp_gap_check} for $1\leq s\leq 1 - 3/T$ we have
\begin{align}
\frac{\hat{c}_2(n/T)}{T^2\check{\Delta}(n/T)^2}&= \frac{1}{ T^2} \frac{4 |f'(n/T)|^2 + |f''(n/T)|}{(1-f((n+3)/T)+f((n+3)/T)/\kappa)^2}\nonumber\\
&= \frac{2\kappa}{T^2(\sqrt\kappa+1)^2} (1+3a_n)^4((3+8\gamma^2)\kappa-3) \nn
&\leq \frac{22\kappa}{T^2(\sqrt\kappa+1)^2}(1+3a_n)^4\nn
  &= \frac{22\kappa}{T^2} + \mathcal{O}\left(\frac{\kappa^{3/2}}{T^3}\right),
\end{align}
where
\begin{equation}
    \gamma=\frac T{T+n(\sqrt\kappa-1)}<1.
\end{equation}
We need to separately consider the case $s=1-1/T$, which gives
\begin{align}
    \frac{\hat{c}_2(1-1/T)}{T^2\check{\Delta}(1-1/T)^2}&= \frac{\kappa^2}{ T^2} \left(4|f'(1-1/T)|^2 + |f''(1-1/T)|\right)\nonumber\\
    &= \frac{2T^2\kappa^3}{(\sqrt\kappa+1)^2(\sqrt\kappa(T-1)+1)^4} (3\kappa + 5 + 16\delta + 8\delta^2) \nn
    &= \frac {6\kappa}{T^2} + \mathcal{O}\left( \frac{1}{T^2}\right),
\end{align}   
where
\begin{equation}
    \delta = \frac{\sqrt\kappa-1}{\sqrt\kappa(T-1)+1} = \mathcal{O}(1/T).
\end{equation}
Similarly, we obtain
\begin{align}
    \frac{\hat{c}_2(1-2/T)}{T^2\check{\Delta}(1-2/T)^2}&= \frac{\kappa^2}{ T^2} \left(4|f'(1-2/T)|^2 + |f''(1-2/T)|\right)\nonumber\\
    &= \frac{2T^2\kappa^3}{(\sqrt\kappa+1)^2(\sqrt\kappa(T-2)+2)^4} (3\kappa + 5 + 32\delta + 32\delta^2) \nn
    &= \frac {6\kappa}{T^2} + \mathcal{O}\left( \frac{1}{T^2}\right),
\end{align}
where this time
\begin{equation}
    \delta = \frac{\sqrt\kappa-1}{\sqrt\kappa(T-2)+2} = \mathcal{O}(1/T).
\end{equation}
Finally, since there are $T-1$ terms in the sum, and each is upper bounded by $22\kappa/T^2$ to leading order, we get
\begin{equation}
\sum_{n=1}^{T-1} \frac{ \hat{c}_2(n/T)}{ T^2 \check{\Delta}(n/T)^2} \leq \frac {22\kappa}{T} + \mathcal{O}\left( \frac{\kappa^{3/2}}{T^2}\right).
\end{equation}
This is the bound given in \cref{eq:sumc2}.

\section{Phase factors in the adiabatic evolution}
\label{app:phasefactor}
One would normally consider the eigenspace of interest in a single group for the adiabatic theorem.
In contrast, here the eigenspace of interest is separated in two parts, corresponding to $\pm 1$, and the remaining eigenspace is separated by a gap in two parts in the upper and lower halves of the complex plane.
In order to address this, one can instead consider just the eigenvalue $1$ as the eigenspace of interest, which is then in a single group.
Then the discrete adiabatic theorem can be applied unchanged to show that the state is correctly mapped to the final eigenstate.
Similarly, one can just consider the adiabatic theorem with $-1$.
Since using the adiabatic theorem separately on each eigenstate shows that it properly evolves to the final state, the superposition of the two eigenstates must also.

To be more specific, as discussed in Eqs.\ (10) to (13) of \cite{BerryNPJ18}, the eigenvectors of the walk operator are of the form (correcting a missing $i$ in the reference)
\begin{equation}\label{eq:qubeig}
    \frac{1}{\sqrt 2} \left( \ket{0}_a \ket{k}_s \pm i\ket{0k^\perp}_{as} \right),
\end{equation}
where $\ket{0}_a$ is the zero state on the ancilla, $\ket{k}_s$ is the eigenstate of $H$ of energy $E_k$ on the system, and $\ket{0k^\perp}_{as}$ is a state orthogonal to $\ket{0}_a$ on the ancilla.
In our case, the target eigenvalue of the Hamiltonian is $E_k=0$, which yields eigenvalues of $\pm 1$ of the walk operator with these two eigenstates.
When we have a positive superposition of the two eigenstates, then the resulting state is the solution given by $\ket{0}_a \ket{k}_s$.
In contrast, if we have a negative superposition of the two eigenstates, then the result is the non-solution state $\ket{0k^\perp}_{as}$.
In the adiabatic evolution, we start with the positive superposition, and we must maintain that positive superposition at the end in order to obtain the solution.
Therefore, we should show that there is no phase factor introduced by the adiabatic evolution.

To show this, it is again sufficient to consider the evolution of each eigenvalue on its own.
To obtain the phase factor, it is sufficient to consider the exact adiabatic evolution given by the adiabatic walk operators
\begin{align}
    W_T^A(s) &= V_T(s) W_T(s) \nn
    &= v_T(s',s)^{-1} S_T(s',s) W_T(s) \nn
    &= [S_T(s',s) S_T^\dagger(s',s)]^{-1/2} S_T(s',s) W_T(s) \nn
    &= [P_T(s')P_T(s)P_T(s')+ Q_T(s')Q_T(s)Q_T(s')]^{-1/2} [P_T(s')P_T(s)+ Q_T(s')Q_T(s)] W_T(s).
\end{align}
where $s'=s+1/T$. (We are swapping the $s'$ and $s$ from the way $S_T$ and $v_T$ were given originally.)
For the case we are interested in, there may be multiple states within the spectrum of interest, but they are orthogonal.
More specifically, there is the solution state (ground state of the Hamiltonian)
\begin{equation}
    \begin{pmatrix}
     {A}(f(s))^{-1} \bm{b} \\ 0
    \end{pmatrix},
\end{equation}
as well as a non-solution state of the form
\begin{equation}
    \begin{pmatrix}
     0 \\ \bm{b}
    \end{pmatrix}.
\end{equation}
In that case the product of projectors is of the form
\begin{equation}
    P_T(s')P_T(s) = \sum_{j,j'} \ket{\lambda_j(s')}\braket{\lambda_j(s')}{\lambda_{j'}(s)}\bra{\lambda_{j'}(s)} = 
    \sum_{j} \braket{\lambda_j(s')}{\lambda_{j}(s)} \ket{\lambda_j(s')}\bra{\lambda_{j}(s)} ,
\end{equation}
and similarly
\begin{equation}
    P_T(s')P_T(s)P_T(s') = 
    \sum_{j} |\braket{\lambda_j(s')}{\lambda_{j}(s)}|^2 \ket{\lambda_j(s')}\bra{\lambda_{j}(s)} ,
\end{equation}
We use $\ket{\lambda_j(s)}$ to indicate eigenstates, including for degenerate eigenvalues.
What this means is that we cannot flip between orthogonal eigenstates in the spectrum of interest during the (exact) adiabatic evolution, and we do not have the solution state leaking into the non-solution state.

For the eigenstate $\ket{\lambda_j(s)}$ in the spectrum of interest for $W_T(s)$, we have
\begin{align}
    &[P_T(s')P_T(s)P_T(s')+ Q_T(s')Q_T(s)Q_T(s')]^{-1/2} [P_T(s')P_T(s)+ Q_T(s')Q_T(s)] W_T(s) \ket{\lambda_j(s)} \nn
    &=\lambda_j(s) [P_T(s')P_T(s)P_T(s')+ Q_T(s')Q_T(s)Q_T(s')]^{-1/2} [P_T(s')P_T(s)+ Q_T(s')Q_T(s)] \ket{\lambda_j(s)} \nn
    &=\lambda_j(s)\braket{\lambda_j(s')}{\lambda_{j}(s)} [P_T(s')P_T(s)P_T(s')+ Q_T(s')Q_T(s)Q_T(s')]^{-1/2} \ket{\lambda_j(s')} \nn
    &=\lambda_j(s)\braket{\lambda_j(s')}{\lambda_{j}(s)} [|\braket{\lambda_j(s')}{\lambda_{j}(s)}|^2\ket{\lambda_j(s')}\bra{\lambda_{j}(s)}]^{-1/2} \ket{\lambda_j(s')} \nn
    &= \lambda_j(s)\frac{\braket{\lambda_j(s')}{\lambda_{j}(s)}}{|\braket{\lambda_j(s')}{\lambda_{j}(s)}|}\ket{\lambda_j(s')}.
\end{align}
In the second line applying $W_T(s)$ gives the eigenvalue.
If this eigenstate is $\ket{\lambda_j(s)}$, then applying $P_T(s')P_T(s)+Q_T(s')Q_T(s)$ gives the updated state $\ket{\lambda_j(s')}$ times $\braket{\lambda_j(s')}{\lambda_{j}(s)}$.
Then applying $v_T(s',s)^{-1}$ cancels the magnitude of $\braket{\lambda_j(s')}{\lambda_{j}(s)}$, and we only have its phase.

For our application the eigenvalues of $W_T(s)$ are $\pm 1$, but provided the total number of steps of the walk is even then this sign flip cancels out.
Ideally we would show that $\braket{\lambda_j(s')}{\lambda_{j}(s)}$ is real in order to show that there are no spurious phase factors.
However, it is sufficient to just show that any phase factor from $\braket{\lambda_j(s')}{\lambda_{j}(s)}$ is the same between the $\pm 1$ eigenvectors of $W_T(s)$.
The eigenstates of $W_T(s)$ are given as in \cref{eq:qubeig}.
In order to describe the inner products between the eigenstates at successive time steps, let us use $k_1$ and $k_2$.
Then the inner product of eigenstates at successive steps is
\begin{equation}
    \frac 12 \left( {}_a\!\bra{0}\, {}_s\!\bra{k_1} \mp i \, {}_{as}\!\bra{0k_1^\perp}\right)\left( \ket{0}_a \ket{k_2}_s \pm i\ket{0k_2^\perp}_{as} \right) = 
    \frac 12 \left( \, {}_s\!\braket{k_1}{k_2}_s +\, {}_{as}\! \braket{0k_1^\perp}{0k_2^\perp}_{as} \right).
\end{equation}
Here we have used $\pm$ to indicate that we are using $+1$ for both steps or $-1$ for both steps.
The crucial result here is that the inner product does \emph{not} depend on whether we were considering the $+1$ or $-1$ eigenstates.
This means that there may be a phase factor, but it will be the \emph{same} between the $\pm 1$ eigenstates.
The one caveat is that we need to use an even number of steps to avoid a $-1$ factor, but it is always possible to slightly adjust the schedule so that there is an even number of steps, and that does not change the asymptotic scaling of the number of steps needed.

The net result of this is that the adiabatic walk with the qubitised walk operator still works, despite there being separated eigenvalues at $\pm 1$.
Here we have not needed to use any special properties of the Hamiltonian other than that the eigenvalue of interest (for the Hamiltonian) is $0$, so this result may be used for applications other than solving linear equations.
If there were a nonzero eigenvalue that was \emph{known}, then it would be possible to add a multiple of the identity to rezero that eigenvalue, and the above method would again work.

\section{Upper bounds of \texorpdfstring{\cref{thm:qlsp_general_p}}{Theorem 18} with \texorpdfstring{$1<p<2$}{1<p<2}}\label{app:qlsp_general_p}

In order to show the linear dependence in $T$ on $\kappa$, we use \cref{cor:adia} and calculate the scaling of each term. 
There are two main difficulties: estimates of finite differences of the walk operator and different discrete time points used in \cref{cor:adia}. 
To overcome the first difficulty, we establish a connection between the discrete finite difference coefficients $c_k(s)$ and the corresponding continuous derivatives of the schedule function. 
Then, according to the definition of the schedule function \cref{eq:gapCon}, this can be directly related to the spectrum gap and cancel with the denominators in the error bound. 
For the second difficulty, we use continuity and monotonicity of the spectrum gap in the linear system problem to unify the time points with sacrifice of larger preconstants.

We first reformulate the coefficients $c_1$ and $c_2$, which have already been done in \cref{lem:DR}. 
Here, we will use a slightly different version with continuous time values, that 
\begin{equation}
\label{eq:c1}
    c_1(s) = 2\max_{\tau\in[s,s+1/T]\cap[0,1]}|f'(\tau)|,
\end{equation}
and 
\begin{equation}
    c_2(s) = 2\max_{\tau\in[s,s+2/T]\cap[0,1]}(2|f'(\tau)|^2+|f''(\tau)|).
\end{equation}
Notice that the choices of $c_1$ and $c_2$ here are even larger than those in \cref{lem:DR}. 
Then we can use the definition of the schedule function to establish the connection between $c_k(s)$ and the spectrum gap. 
\begin{lemma}\label{lem:qlsp_c_bound}
    Consider solving linear system problem using discrete adiabatic evolution with schedule function defined in \cref{eq:gapCon}. 
    Then the walk operators satisfy 
    \begin{enumerate}
        \item for any $0 \leq s \leq 1-1/T$, we have 
    \begin{equation}
        c_1(s) = 2d_p\Delta_0(s)^{p}, 
    \end{equation}
    \item for any $0 \leq s \leq 1-2/T$, we have 
    \begin{equation}
        c_2(s) = 4 d_p^2\Delta_0(s)^{2p} + 2 d_p^2 p (1-1/\kappa) \Delta_0(s)^{2p-1}. 
    \end{equation}
    \end{enumerate}
\end{lemma}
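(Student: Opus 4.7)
The proof strategy is a direct computation: translate the finite-difference coefficients $c_k(s)$ into derivatives of the schedule $f$, then rewrite those derivatives in terms of $\Delta_0(s)$ using the defining ODE $\dot f(s) = d_p \Delta_0(s)^p$.

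First I would compute $f'$ and $f''$ explicitly. By definition $f'(\tau) = d_p \Delta_0(\tau)^p$, which already handles the $c_1$ case. For $c_2$, differentiate once more using the chain rule:
\begin{equation}
    f''(\tau) = d_p \cdot p \Delta_0(\tau)^{p-1} \cdot \Delta_0'(\tau),
\end{equation}
and since $\Delta_0(\tau) = 1 - f(\tau)(1-1/\kappa)$, we get $\Delta_0'(\tau) = -f'(\tau)(1-1/\kappa) = -d_p(1-1/\kappa)\Delta_0(\tau)^p$. Hence $|f''(\tau)| = d_p^2 p(1-1/\kappa)\Delta_0(\tau)^{2p-1}$ and $|f'(\tau)|^2 = d_p^2\Delta_0(\tau)^{2p}$, so the integrand inside the $c_2$ maximum equals $2d_p^2\Delta_0(\tau)^{2p} + d_p^2 p(1-1/\kappa)\Delta_0(\tau)^{2p-1}$.

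Second, I would handle the maximum over the subintervals $[s,s+1/T]$ and $[s,s+2/T]$. Here the key observation is that $\Delta_0(\tau) = 1 - f(\tau) + f(\tau)/\kappa$ is monotonically decreasing in $\tau$: this follows from $f$ being monotonically increasing (already noted in the paper immediately after \cref{eq:sched1}) together with the sign $1 - 1/\kappa > 0$. Consequently each positive power $\Delta_0(\tau)^q$ appearing above is maximised at the left endpoint $\tau = s$, and the maximum defining $c_1(s)$ and $c_2(s)$ is attained there. Substituting gives precisely the formulas claimed in the lemma after the outer factor of $2$ in the definitions of $c_k$ is applied.

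There is no real obstacle — the result is purely mechanical once one notices that the schedule ODE yields $f'$ as a power of the gap. The only point that requires a brief sentence is the monotonicity of $\Delta_0$, which justifies evaluating the maximum at the left endpoint; everything else is chain-rule bookkeeping. The proof should fit in under half a page.
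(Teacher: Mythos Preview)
Your proposal is correct and matches the paper's proof essentially line for line: compute $f'(\tau)=d_p\Delta_0(\tau)^p$ from the schedule ODE, differentiate once more via the chain rule to get $|f''(\tau)|=d_p^2 p(1-1/\kappa)\Delta_0(\tau)^{2p-1}$, and then invoke the monotonicity of $\Delta_0$ to evaluate the maxima at the left endpoint $\tau=s$.
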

\begin{proof}
    According to \cref{lem:DR},  we only need to compute the derivatives of the schedule function. 
    The first order derivative directly comes from the definition of the schedule function that 
    \begin{equation}
        f'(\tau) = d_p \Delta_0(s)^p. 
    \end{equation}
    For the second order derivative, we have 
    \begin{align}
        f''(\tau) &= \frac{d}{d\tau} \left(d_p \left(1-f(\tau)+f(\tau)/\kappa\right)^p\right)  \nonumber \\
        & = d_p p\left(1-f(\tau)+f(\tau)/\kappa\right)^{p-1} (-1+1/\kappa)f'(\tau) \nonumber \\
        &= d_p^2 p (-1+1/\kappa) \Delta_0(s)^{2p-1}. 
    \end{align}
    The proof is completed using the monotonicity of $\Delta_0$. 
\end{proof}

In the error estimate in \cref{cor:adia}, we encounter taking maximum or minimum of several consequent time steps, which poses technical difficulty in calculating the scaling of the error. 
In the following lemma, we show how to resolve the different time point issue. 
\begin{lemma}\label{lem:qlsp_time_unify}
    Let $\Delta_0(s)$ denote the spectrum gap of the time-dependent Hamiltonian used in solving linear system problem. 
    Assume $T \geq 16(\sqrt{2})^p \left(\frac{\kappa^{p-1}-1}{p-1}\right) = \mathcal{O}(\kappa^{p-1})$. 
    Then for any $s \leq s' \leq s+4/T$, we have 
    \begin{equation}
        \Delta_0(s) \leq \frac{4}{3}\Delta_0(s'). 
    \end{equation}
\end{lemma}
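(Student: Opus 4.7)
The plan hinges on the fact that along the schedule ODE $\dot{f}(s) = d_p\Delta_0(s)^p$, the function $\Delta_0(s)^{1-p}$ is exactly affine in $s$, so the ratio $\Delta_0(s)/\Delta_0(s')$ can be controlled by a single closed-form expression. Combining $\Delta_0(s) = 1-(1-1/\kappa)f(s)$ with the schedule ODE gives
\[
\Delta_0'(s) = -(1-1/\kappa)\dot{f}(s) = -d_p(1-1/\kappa)\Delta_0(s)^p.
\]
This is separable. Integrating and using the normalisation $f(1)=1$ (equivalently, evaluating $d_p = \int_0^1\Delta_0^{-p}(u)\,du$ directly) yields the explicit relation
\[
\Delta_0(s')^{1-p}-\Delta_0(s)^{1-p} = (p-1)d_p(1-1/\kappa)(s'-s) = (\kappa^{p-1}-1)(s'-s),
\]
which will be the main analytic input.

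Next I would divide through by $\Delta_0(s)^{1-p}$. Since $\Delta_0(s)\le 1$ and $1-p<0$, we have $\Delta_0(s)^{1-p}\ge 1$, giving the clean bound
\[
\left(\frac{\Delta_0(s)}{\Delta_0(s')}\right)^{p-1} = 1 + \frac{(\kappa^{p-1}-1)(s'-s)}{\Delta_0(s)^{1-p}} \le 1 + (\kappa^{p-1}-1)(s'-s).
\]
Inserting the bound $s'-s\le 4/T$ and the hypothesis $T\ge 16(\sqrt{2})^p(\kappa^{p-1}-1)/(p-1)$ simplifies the right-hand side to $1 + (p-1)/[4(\sqrt{2})^p]$.

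Finally I would take $(p-1)$-th roots. Using $\log(1+y)\le y$ one obtains $(1+(p-1)a)^{1/(p-1)} \le e^{a}$, which gives the $p$-independent bound
\[
\frac{\Delta_0(s)}{\Delta_0(s')} \le \exp\!\left(\frac{1}{4(\sqrt{2})^p}\right) \le \exp\!\left(\frac{1}{4\sqrt{2}}\right),
\]
where the second inequality uses $(\sqrt{2})^p\ge \sqrt{2}$ for $p\ge 1$. A numerical check confirms $1/(4\sqrt{2})\approx 0.177 < \ln(4/3)\approx 0.288$, so the right-hand side is strictly below $4/3$, finishing the proof.

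The one delicate point is the treatment of the $(p-1)$-th root when $p-1$ can be arbitrarily small: a naive bound like $(1+x)^{1/(p-1)}$ blows up, and it is crucial that the quantity inside the parenthesis carries a matching factor of $(p-1)$ supplied by the hypothesis on $T$. Apart from that, all steps are elementary once the ODE has been solved exactly.
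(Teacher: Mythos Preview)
Your proof is correct and takes a genuinely different route from the paper. The paper bounds the increment $|\Delta_0(s)-\Delta_0(s+4/T)|$ via the mean value theorem applied to $\Delta_0'$, obtaining a self-referential inequality for the ratio $\Delta_0(s)/\Delta_0(s+4/T)$ which it then solves algebraically (using $\frac{4(1-1/\kappa)d_p}{T}\le \tfrac14$ to get the ratio $\le \tfrac43$). You instead integrate the separable ODE exactly, exploiting the observation that $\Delta_0(s)^{1-p}$ is affine in $s$; indeed from the explicit schedule one has $\Delta_0(s)^{1-p}=1+s(\kappa^{p-1}-1)$, which makes your key identity immediate and independent of whatever value of $d_p$ one quotes. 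Your approach avoids the self-referential step entirely, yields the sharper intermediate constant $e^{1/(4\sqrt{2})}\approx 1.19<\tfrac43$, and handles the potentially delicate $(p-1)$-th root cleanly because the hypothesis on $T$ supplies exactly the factor of $(p-1)$ needed inside the exponent. Both arguments are short; yours is somewhat more transparent because it uses the exact solution rather than a derivative bound.
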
 
\begin{proof}
    For simplicity we only consider the Hermitian positive definite case, since the gap in the general non-Hermitian case only differs from the positive definite case by a multiplication factor of $\sqrt{2}$. 
    We define $\Delta_{\text{linear}}(y) = 1-y+y/\kappa$. 
    Then $\Delta_0(s) = \Delta_{\text{linear}}(f(s))$. 
    Since $\Delta(s)$ is a monotonically decreasing function, it suffices to prove $\Delta_0(s)/\Delta_0(s+4/T) \leq 4/3$. 
    We first compute the derivative of the gap, 
    \begin{align}
        \Delta_0'(s) = \frac{d}{ds} \left(\Delta_{\text{linear}}(f(s))\right) = \Delta_{\text{linear}}'(f(s))f'(s) = (-1+1/\kappa) d_p \Delta_0(s)^p. 
    \end{align}
    Then for any $0\leq s \leq 1-4/T$, 
    \begin{equation}
        |\Delta_0(s) - \Delta_0(s+4/T)| \leq \frac{4}{T} \max_{s'\in[s,s+4/T]} |\Delta_0'(s')| = \frac{4}{T}(1-1/\kappa) d_p\Delta_0(s)^p, 
    \end{equation}
    and thus 
    \begin{align}
        \frac{\Delta_0(s)}{\Delta_0(s+4/T)} &= 1 + \frac{\Delta_0(s)-\Delta_0(s+4/T)}{\Delta_0(s+4/T)} \nonumber \\
        & \leq 1 + \frac{4(1-1/\kappa)d_p \Delta_0(s)^{p-1}}{T}\frac{\Delta_0(s)}{\Delta_0(s+4/T)} \nonumber \\
        & \leq 1 + \frac{4(1-1/\kappa)d_p }{T}\frac{\Delta_0(s)}{\Delta_0(s+4/T)}. 
    \end{align}
    It has been computed in~\cite{an2019quantum} that $d_p = \frac{2^{p/2}}{p-1}\frac{\kappa}{\kappa-1}(\kappa^{p-1}-1)$. 
    Together with the assumption that $T \geq 16(\sqrt{2})^p \left(\frac{\kappa^{p-1}-1}{p-1}\right)$, we have 
    \begin{equation}
        \frac{4(1-1/\kappa)d_p }{T} = \frac{2^{p/2+2}}{T(p-1)}(\kappa^{p-1}-1) \leq \frac{1}{4}, 
    \end{equation}
    and thus 
    \begin{equation}
        \frac{\Delta_0(s)}{\Delta_0(s+4/T)} \leq 1 + \frac{1}{4} \frac{\Delta_0(s)}{\Delta_0(s+4/T)}, 
    \end{equation}
    which implies $\Delta_0(s)/\Delta_0(s+4/T) \leq 4/3$. 
\end{proof}

Now we are ready to prove \cref{thm:qlsp_general_p}, the complexity estimate of using discrete adiabatic evolution to solve linear system problems. 

\begin{proof}[Proof of \cref{thm:qlsp_general_p}]
Without loss of generality, here we only prove the scenario with Hermitian positive definite matrix $A$.
This is because the spectrum gap in the general non-Hermitian case only differs by a multiplication factor of $\sqrt{2}$, which can be absorbed by changing the concrete definition of the constant factor $C_p$. 

First notice that 
\begin{equation}
4\frac{c_1(s)}{\Delta_1(s)} = \frac{8d_{p} \Delta_0(s)^{p}}{\Delta_0(s+1/T)} \leq \frac{32d_p\Delta_0(s)^{p}}{3\Delta_0(s)} \leq \frac{32d_p}{3}. 
\end{equation}
Therefore the assumption that $T \geq 32d_p/3$ ensures that the assumption in \cref{cor:adia} satisfies. 
Now we apply \cref{cor:adia} to bound the error. 
To simplify the computation, we now unify the time in the hat and check notations by applying \cref{lem:qlsp_time_unify}. 
More precisely, 
\begin{equation}
\hat{c}_1(s) = \max_{s'\in\{s-1/T,s,s+1/T\}\cap [0,1-1/T]} c_1(s') = \begin{cases}
2d_p \Delta_0(0)^{p}, & s = 0, \\
2d_p \Delta_0(s-1/T)^{p}, & 1/T \leq s \leq 1. 
\end{cases}
\end{equation}
Applying \cref{lem:qlsp_time_unify} to change all the discrete time to $s$, we have for all $0 \leq s \leq 1$, 
\begin{equation}\label{eqn:qlsp_c1_hat_proof}
\hat{c}_1(s) \leq \frac{2^{2p+1}}{3^p}d_p \Delta_0(s)^{p}. 
\end{equation}
Similarly, 
\begin{equation}
\hat{c}_2(s) = \begin{cases}
4 d_p^2\Delta_0(0)^{2p} + 2p d_p^2  (1-1/\kappa) \Delta_0(0)^{2p-1}, & s = 0\\ 
4 d_p^2\Delta_0(s-1/T)^{2p} + 2p d_p^2 (1-1/\kappa) \Delta_0(s-1/T)^{2p-1}, & 1/T \leq s \leq 1-1/T,
\end{cases}
\end{equation}
and for all $0 \leq s \leq 1-1/T$,
\begin{equation}\label{eqn:qlsp_c2_hat_proof}
\hat{c}_2(s) \leq \frac{2^{4p+2}}{3^{2p}}d_p^2\Delta_0(s)^{2p} + \frac{2^{4p-1}}{3^{2p-1}} p d_p^2 (1-1/\kappa) \Delta_0(s)^{2p-1}. 
\end{equation}
For the spectrum gap, by \cref{eqn:qlsp_gap_check} and \cref{lem:qlsp_time_unify}, we have 
\begin{equation}\label{eqn:qlsp_gap_hat_proof}
\check{\Delta}(s) \geq \frac{3}{4} \Delta_0(s). 
\end{equation}
    
Combining \cref{eqn:qlsp_c1_hat_proof,eqn:qlsp_c2_hat_proof,eqn:qlsp_gap_hat_proof} and the fact that 
\begin{equation}\label{eqn:qlsp_bound_dp}
d_p = \frac{2^{p/2}}{p-1}\frac{\kappa}{\kappa-1} (\kappa^{p-1}-1) \leq \frac{2^{1+p/2}}{p-1} \kappa^{p-1}, 
\end{equation}
we are now ready to bound each term in the error bound in \cref{cor:adia}. 
The first three terms (i.e. boundary terms) in \cref{cor:adia} can be bounded as follows: 
\begin{equation}\label{eqn:qlsp_term1}
\frac{\hat{c}_1(0)}{T\check{\Delta}(0)^2} \leq \frac{2^{2p+1}}{3^p}d_p \Delta_0(0)^p \frac{2^4}{3^2T\Delta_0(0)^2} = \frac{2^{2p+5}d_p}{3^{p+2}T} \leq \frac{2^{6+5p/2}}{3^{p+2}(p-1)} \frac{\kappa^{p-1}}{T}, \end{equation}
\begin{equation}\label{eqn:qlsp_term2}
\frac{\hat{c}_1(1)}{T\check{\Delta}(1)^2} \leq  \frac{2^{2p+1}}{3^p}d_p \Delta_0(1)^p \frac{2^4}{3^2T\Delta_0(1)^2} = \frac{2^{2p+5}d_p}{3^{p+2}T\Delta_0(1)^{2-p}} \leq \frac{2^{6+5p/2}}{3^{p+2}(p-1)} \frac{\kappa}{T}, 
\end{equation}
and
\begin{equation}\label{eqn:qlsp_term3}
\frac{\hat{c}_1(1)}{T\check{\Delta}(1)} \leq \frac{2^{2p+1}}{3^p}d_p \Delta_0(1)^p \frac{2^2}{3 T\Delta_0(1)} = \frac{2^{2p+3}d_p}{3^{p+1}T}\Delta_0(1)^{p-1} \leq \frac{2^{4+5p/2}}{3^{p+1}(p-1)}\frac{1}{T}. 
\end{equation}
Again by \cref{eqn:qlsp_c1_hat_proof,eqn:qlsp_c2_hat_proof,eqn:qlsp_gap_hat_proof}, the last three terms in \cref{cor:adia} can be bounded as 
\begin{align}
\sum_{n=1}^{T-1}\frac{\hat{c}_1(n/T)^2}{T^2\check{\Delta}(n/T)^3} \leq \sum_{n=1}^{T-1} \frac{2^{4p+2}}{3^{2p}}d_p^2\Delta_0(n/T)^{2p}\frac{1}{T^2}\frac{2^6}{3^3\Delta_0(n/T)^3} = \frac{2^{4p+8}d_p^2}{3^{2p+3}T^2}\sum_{n=1}^{T-1} \Delta_0(n/T)^{2p-3} \label{eqn:qlsp_bound_term4_1},  
\end{align}
\begin{align}
\sum_{n=0}^{T-1} \frac{ \hat{c}_1(n/T)^2}{ T^2 \check{\Delta}(n/T)^2}  \leq \sum_{n=0}^{T-1} \frac{2^{4p+2}}{3^{2p}}d_p^2\Delta_0(n/T)^{2p}\frac{1}{T^2}\frac{2^4}{3^2\Delta_0(n/T)^2}
= \frac{2^{4p+6}d_p^2}{3^{2p+2}T^2}\sum_{n=0}^{T-1} \Delta_0(n/T)^{2p-2} \label{eqn:qlsp_bound_term5_1}, 
\end{align}
and 
\begin{align}
\sum_{n=1}^{T-1}\frac{\hat{c}_2(n/T)}{T^2\check{\Delta}(n/T)^2} &\leq \sum_{n=1}^{T-1} \frac{2^{4p+2}}{3^{2p}}d_p^2\Delta_0(n/T)^{2p} \frac{1}{T^2}\frac{2^4}{3^2 \Delta_0(n/T)^2} \nonumber \\
& \quad + \sum_{n=1}^{T-1} \frac{2^{4p-1}}{3^{2p-1}}p d_p^2 (1-1/\kappa) \Delta_0(n/T)^{2p-1} \frac{1}{T^2} \frac{2^4}{3^2 \Delta_0(n/T)^2} \nonumber \\
& = \frac{2^{4p+6}d_p^2}{3^{2p+2}T^2}\sum_{n=1}^{T-1} \Delta_0(n/T)^{2p-2} + \frac{2^{4p+3}pd_p^2}{3^{2p+1}T^2}(1-1/\kappa)\sum_{n=1}^{T-1} \Delta_0(n/T)^{2p-3} \label{eqn:qlsp_bound_term6_1}.  
\end{align}
    
To proceed, we need to bound the summations $\frac{1}{T}\sum_{n=0}^{T-1}\Delta_0(n/T)^{2p-2}$ and $\frac{1}{T}\sum_{n=0}^{T-1}\Delta_0(n/T)^{2p-3}$. 
Notice that the summations are in the Riemann sum form. 
The idea is then to approximate the summations by corresponding integrals and to bound both the integrals and the difference terms. 
More precisely, according to~\cite{BurdenNA}, for any continuously differentiable $g(t)$ on the interval $[a,b]$, we have 
\begin{equation}
\left|\int_a^b g(t)dt - (b-a)g(a)\right| \leq \frac{(b-a)^2}{2} \max_{t\in[a,b]} \left|g'(t)\right|. 
\end{equation}
This implies that
\begin{equation}
\left|\int_0^1 g(t)dt - \frac{1}{T}\sum_{n=0}^{T-1}g(n/T)\right| \leq \frac{1}{2T^2} \sum_{n=0}^{T-1}\max_{t\in[n/T,(n+1)/T]} \left|g'(t)\right|. 
\end{equation}
If we further assume $g(t)>0$ for all $t$, then 
\begin{equation}\label{eqn:qlsp_bound_sum_integral}
\frac{1}{T}\sum_{n=0}^{T-1}g(n/T) \leq \int_0^1 g(t)dt + \frac{1}{2T^2} \sum_{n=0}^{T-1}\max_{t\in[n/T,(n+1)/T]} \left|g'(t)\right|. 
\end{equation}
By taking the function $g(t)$ to be $\Delta_0(t)^{2p-2}$ and $\Delta_0(t)^{2p-3}$ respectively, we can bound the desired summations. 
    
We start with the summation of $\Delta_0^{2p-2}$. By change of variable $x = f(t)$, the integral can be computed as 
\begin{align}
\int_0^1 \Delta_0(t)^{2p-2} dt &= \int_0^1 (1-f(t)+f(t)/\kappa)^{2p-2} dt \nonumber \\
& = \int_0^1 (1-f+f/\kappa)^{2p-2} \frac{1}{d_p (1-f+f/\kappa)^p} df \nonumber \\
& = \frac{1}{d_p} \int_0^1 (1-f+f/\kappa)^{p-2}df \nonumber \\
& = \frac{1}{d_p}\frac{\kappa^{2-p}}{p-1}\frac{\kappa^{p-1}-1}{\kappa-1}. 
\end{align}
The derivative can be computed as 
    \begin{equation}
        \frac{d}{dt}\Delta_0(t)^{2p-2} =  (2p-2)(-1+1/\kappa)d_p(1-f(t)+f(t)/\kappa)^{3p-3} = (2p-2)(-1+1/\kappa)d_p\Delta_0(t)^{3p-3}. 
    \end{equation}
    Therefore, according to \cref{eqn:qlsp_bound_sum_integral} and the fact that $\Delta_0(t)$ is bounded by $1$, we have 
    \begin{align}
        \frac{1}{T}\sum_{n=0}^{T-1}\Delta_0(n/T)^{2p-2} &\leq \frac{1}{d_p}\frac{\kappa^{2-p}}{p-1}\frac{\kappa^{p-1}-1}{\kappa-1} + \frac{(p-1)d_p}{T^2}\frac{\kappa-1}{\kappa} \sum_{n=0}^{T-1} \Delta_0(n/T)^{3p-3} \nonumber \\
        & \leq \frac{1}{d_p}\frac{\kappa^{2-p}}{p-1}\frac{\kappa^{p-1}-1}{\kappa-1} + \frac{(p-1)d_p}{T^2}\frac{\kappa-1}{\kappa} \sum_{n=0}^{T-1} \Delta_0(n/T)^{2p-2} \nonumber \\
        & \leq \frac{1}{d_p}\frac{\kappa^{2-p}}{p-1}\frac{\kappa^{p-1}-1}{\kappa-1} + \frac{d_p}{T} \frac{1}{T}\sum_{n=0}^{T-1} \Delta_0(n/T)^{2p-2}. 
    \end{align}
    By the assumption that $T > 32d_p/3$, we have $d_p/T \leq 3/32$ and thus 
    \begin{equation}
        \frac{1}{T}\sum_{n=0}^{T-1}\Delta_0(n/T)^{2p-2} \leq \frac{1}{d_p}\frac{\kappa^{2-p}}{p-1}\frac{\kappa^{p-1}-1}{\kappa-1} + \frac{3}{32} \frac{1}{T}\sum_{n=0}^{T-1} \Delta_0(n/T)^{2p-2}. 
    \end{equation}
    Solving the summation from the above inequality leads to 
    \begin{align}
        \frac{1}{T}\sum_{n=0}^{T-1}\Delta_0(n/T)^{2p-2} &\leq \frac{32}{29}\frac{1}{d_p}\frac{\kappa^{2-p}}{p-1}\frac{\kappa^{p-1}-1}{\kappa-1}  \nonumber \\
        & \leq \frac{2^5}{3^3}\frac{1}{d_p}\frac{\kappa^{2-p}}{p-1}\frac{\kappa^{p-1}-1}{\kappa-1} \nonumber \\
        & \leq  \frac{2^6}{3^3(p-1)}\frac{1}{d_p} \label{eqn:qlsp_bound_sum_2p-2},
    \end{align}
    where the last inequality follows from $\kappa^{2-p}\frac{\kappa^{p-1}-1}{\kappa-1} \leq 2$. 
    
    The summation of $\Delta_0^{2p-3}$ can be bounded similarly but requiring some more delicate computations. 
    We first assume that $p \neq 1.5$ such that $2p-3 \neq 0$. 
    Again, the integral and the derivative can be computed as 
    \begin{align}
        \int_0^1 \Delta_0(t)^{2p-3} dt &= \int_0^1 (1-f(t)+f(t)/\kappa)^{2p-3} dt \nonumber \\
        & = \int_0^1 (1-x+x/\kappa)^{2p-3} \frac{1}{d_p (1-x+x/\kappa)^p} dx \nonumber \\
        & = \frac{1}{d_p} \int_0^1 (1-x+x/\kappa)^{p-3}dx \nonumber \\
        & = \frac{1}{d_p} \frac{1}{2-p} \frac{\kappa}{\kappa-1} (\kappa^{2-p}-1),
    \end{align}
    and 
    \begin{equation}
        \frac{d}{dt}\Delta_0(t)^{2p-3} =  (2p-3)(-1+1/\kappa)d_p(1-f(t)+f(t)/\kappa)^{3p-4} = (2p-3)(-1+1/\kappa)d_p\Delta_0(t)^{3p-4}. 
    \end{equation}
    According to \cref{eqn:qlsp_bound_sum_integral}, we have 
    \begin{align}
        \frac{1}{T}\sum_{n=0}^{T-1}\Delta_0(n/T)^{2p-3} & \leq  \frac{1}{d_p} \frac{1}{2-p} \frac{\kappa}{\kappa-1} (\kappa^{2-p}-1) + \frac{|2p-3|d_p}{2T^2}\frac{\kappa-1}{\kappa} \sum_{n=0}^{T-1} \max_{t\in[n/t,(n+1)/T]} \Delta_0(t)^{3p-4} \nonumber \\
        & \leq \frac{1}{d_p} \frac{1}{2-p} \frac{\kappa}{\kappa-1} (\kappa^{2-p}-1) + \frac{|2p-3|d_p}{2T^2}\frac{\kappa-1}{\kappa} \sum_{n=0}^{T-1} \max_{t\in[n/t,(n+1)/T]} \Delta_0(t)^{2p-3}.
    \end{align}
    Since $\Delta_0(t)^{2p-3}$ is always monotonic, $\max_{t\in[n/T,(n+1)/T]} \Delta_0(t)^{2p-3}$ becomes either $\Delta_0(n/T)^{2p-3}$ or $\Delta_0((n+1)/T)^{2p-3}$. 
    The corresponding summation is then bounded by either $\sum_{n=0}^{T-1}\Delta_0(n/T)^{2p-3}$ or $\sum_{n=0}^{T-1}\Delta_0((n+1)/T)^{2p-3}$, of which both can be bounded by $\sum_{n=0}^{T}\Delta_0(n/T)^{2p-3}$. 
    Then 
    \begin{align}
        \frac{1}{T}\sum_{n=0}^{T-1}\Delta_0(n/T)^{2p-3} & \leq \frac{1}{d_p} \frac{1}{2-p} \frac{\kappa}{\kappa-1} (\kappa^{2-p}-1) + \frac{|2p-3|d_p}{2T^2}\frac{\kappa-1}{\kappa} \sum_{n=0}^{T}  \Delta_0(n/T)^{2p-3} \nonumber \\
        & \leq \frac{1}{d_p} \frac{1}{2-p} \frac{\kappa}{\kappa-1} (\kappa^{2-p}-1) + \frac{d_p}{2T^2} \sum_{n=0}^{T}  \Delta_0(n/T)^{2p-3}. 
    \end{align}
    Again using the fact that $d_p/T \leq \frac{3}{32}$ and separating the term with $n = T$ in the summation on the right hand side, we obtain 
    \begin{align}
        \frac{1}{T}\sum_{n=0}^{T-1}\Delta_0(n/T)^{2p-3} & \leq \frac{1}{d_p} \frac{1}{2-p} \frac{\kappa}{\kappa-1} (\kappa^{2-p}-1) + \frac{3}{32} \frac{1}{T} \sum_{n=0}^{T}  \Delta_0(n/T)^{2p-3} \nonumber \\
        & \leq \frac{1}{d_p} \frac{1}{2-p} \frac{\kappa}{\kappa-1} (\kappa^{2-p}-1) + \frac{3}{32} \frac{1}{T} \sum_{n=0}^{T-1}  \Delta_0(n/T)^{2p-3} + \frac{3}{32} \frac{\kappa^{3-2p}}{T}. 
    \end{align}
    Solving the summation gives 
    \begin{align}
        \frac{1}{T}\sum_{n=0}^{T-1}\Delta_0(n/T)^{2p-3} &\leq \frac{32}{29}\frac{1}{d_p} \frac{1}{2-p} \frac{\kappa}{\kappa-1} (\kappa^{2-p}-1) + \frac{3}{29} \frac{\kappa^{3-2p}}{T} \nonumber \\
        & \leq \frac{2^5}{3^3}\frac{1}{d_p} \frac{1}{2-p}\frac{\kappa}{\kappa-1} (\kappa^{2-p}-1) + \frac{1}{3^3} \frac{\kappa^{3-2p}}{T} \label{eqn:qlsp_bound_sum_2p-3}. 
    \end{align}
    Notice that the above estimate also holds for $p = 1.5$ since when $p=1.5$, the left hand side is a constant $1$ and the right hand hand is always larger than $1$. 
    
    Now we are ready to bound the last three terms in \cref{cor:adia}. 
    By plugging \cref{eqn:qlsp_bound_sum_2p-2,eqn:qlsp_bound_sum_2p-3} back into \cref{eqn:qlsp_bound_term4_1,eqn:qlsp_bound_term5_1,eqn:qlsp_bound_term6_1} and using the representation of $d_p$ in \cref{eqn:qlsp_bound_dp}, we have 
    \begin{align}
        \sum_{n=1}^{T-1}\frac{\hat{c}_1(n/T)^2}{T^2\check{\Delta}(n/T)^3} & \leq \frac{2^{4p+8}d_p}{3^{2p+3}T} \frac{2^6}{3^3}\frac{1}{2-p}(\kappa^{2-p}-1) + \frac{2^{4p+8}d_p^2}{3^{2p+3}T} \frac{1}{3^3}\frac{\kappa^{3-2p}}{T} \nonumber \\
        & = \frac{2^{4p+14}d_p}{3^{2p+6}(2-p)T} (\kappa^{2-p}-1) + \frac{2^{4p+8}d_p^2}{3^{2p+6}T} \frac{\kappa^{3-2p}}{T} \nonumber \\
        & \leq \frac{2^{16+9p/2}}{3^{2p+6}(2-p)(p-1)} \frac{\kappa}{T} +  \frac{2^{5p+10}}{3^{2p+6}(p-1)^2} \frac{\kappa}{T^2} , \label{eqn:qlsp_term4}
    \end{align}
    \begin{align}
        \sum_{n=0}^{T-1} \frac{ \hat{c}_1(n/T)^2}{ T^2 \check{\Delta}(n/T)^2}  \leq \frac{2^{4p+6}d_p^2}{3^{2p+2}T}\frac{2^6}{3^3(p-1)}\frac{1}{d_p} 
        \leq \frac{2^{13+9p/2}}{3^{2p+5}(p-1)^2 }\frac{\kappa^{p-1}}{T} ,   \label{eqn:qlsp_term5}
    \end{align}
    and 
    \begin{align}
        \sum_{n=1}^{T-1}\frac{\hat{c}_2(n/T)}{T^2\check{\Delta}(n/T)^2} &\leq  \frac{2^{4p+6}d_p^2}{3^{2p+2}T} \frac{2^6}{3^3(p-1)}\frac{1}{d_p} + \frac{2^{4p+3}pd_p^2}{3^{2p+1}T}\frac{\kappa-1}{\kappa}\left(\frac{2^5}{3^3}\frac{1}{d_p} \frac{1}{2-p}\frac{\kappa}{\kappa-1} (\kappa^{2-p}-1) + \frac{1}{3^3} \frac{\kappa^{3-2p}}{T}\right) \nonumber \\
        & = \frac{2^{4p+12}d_p}{3^{2p+5}(p-1)T} + \frac{2^{4p+8}pd_p(\kappa^{2-p}-1)}{3^{2p+4}(2-p)T} + \frac{2^{4p+3}pd_p^2}{3^{2p+4}T}\frac{\kappa-1}{\kappa} \frac{\kappa^{3-2p}}{T} \nonumber \\
        & \leq \frac{2^{13+9p/2}}{3^{2p+5}(p-1)^2}\frac{\kappa^{p-1}}{T} +  \frac{2^{9+9p/2}p}{3^{2p+4}(2-p)(p-1)}\frac{\kappa}{T} + \frac{2^{5p+4}p}{3^{2p+4}(p-1)^2 } \frac{\kappa}{T^2} .  \label{eqn:qlsp_term6}
    \end{align}
    
    Finally, plugging \cref{eqn:qlsp_term1,eqn:qlsp_term2,eqn:qlsp_term3,eqn:qlsp_term4,eqn:qlsp_term5,eqn:qlsp_term6} into the estimate in \cref{cor:adia}, the adiabatic error can be bounded by 
    \begin{align}
        & \quad 12 \frac{2^{6+5p/2}}{3^{p+2}(p-1)} \frac{\kappa^{p-1}}{T} + 12 \frac{2^{6+5p/2}}{3^{p+2}(p-1)} \frac{\kappa}{T} + 6 \frac{2^{4+5p/2}}{3^{p+1}(p-1)}\frac{1}{T} \nonumber \\
        & \quad\quad + 305 \frac{2^{16+9p/2}}{3^{2p+6}(2-p)(p-1)} \frac{\kappa}{T} + 305 \frac{2^{5p+10}}{3^{2p+6}(p-1)^2} \frac{\kappa}{T^2} + 44\frac{2^{13+9p/2}}{3^{2p+5}(p-1)^2 }\frac{\kappa^{p-1}}{T} \nonumber \\
        & \quad \quad + 32\frac{2^{13+9p/2}}{3^{2p+5}(p-1)^2}\frac{\kappa^{p-1}}{T} +  32\frac{2^{9+9p/2}p}{3^{2p+4}(2-p)(p-1)}\frac{\kappa}{T} + 32\frac{2^{5p+4}p}{3^{2p+4}(p-1)^2 } \frac{\kappa}{T^2} \nonumber \\
        & \leq  C_p^{(1)}\frac{\kappa}{T} + C_p^{(2)}\frac{\kappa^{p-1}}{T} +C_p^{(3)}\frac{\kappa}{T^2} + C_p^{(4)}\frac{1}{T}, \label{eqn:qlsp_general_p_error}
    \end{align}
    where
    \begin{align}
        C_p^{(1)} &:= 12 \frac{2^{6+5p/2}}{3^{p+2}(p-1)} + 305 \frac{2^{16+9p/2}}{3^{2p+6}(2-p)(p-1)} + 32\frac{2^{9+9p/2}p}{3^{2p+4}(2-p)(p-1)},   \\
        C_p^{(2)} &:= 12 \frac{2^{6+5p/2}}{3^{p+2}(p-1)} +44\frac{2^{13+9p/2}}{3^{2p+5}(p-1)^2 } + 32\frac{2^{13+9p/2}}{3^{2p+5}(p-1)^2} , \\
        C_p^{(5)} &:= 305 \frac{2^{5p+10}}{3^{2p+6}(p-1)^2}+32\frac{2^{5p+4}p}{3^{2p+4}(p-1)^2 } , \\
        C_p^{(4)} &:= 6 \frac{2^{4+5p/2}}{3^{p+1}(p-1)} .
    \end{align}
    This completes the proof of the first part by defining $C_p$ to be the largest constant factor in \cref{eqn:qlsp_general_p_error},
    \begin{equation}\label{eq:Cpdef}
        C_p := \max_j C_p^{(j)}.
    \end{equation}
    The second part of \cref{thm:qlsp_general_p}, which is $T = \mathcal{O}\left({\kappa}/{\epsilon}\right)$,   
    directly follows from this bound by noticing that each term of the adiabatic error in the first part can be bounded by $\mathcal{O}(\kappa/T)$.    
\end{proof}

\section{Additional details for filtering}
\label{ap:filtering}

Here we give a proof of the upper bound on the norm of the difference of states for filtering.
We are assuming that $\tilde w(\phi)=0$ for the desired part of the spectrum, and the initial probability for the desired part of the spectrum is at least $1/2$.
Then the squared norm for the undesired part of the state is
\begin{equation}
    P(\perp) = \left\| 
    \sum_{k\in \perp} \tilde w(\phi_k) \psi_k \ket{k}
    \right\|^2 \le \left(\max_{k\in\{\perp\}} \tilde w(\phi_k) \right)^2 \left( \sum_{k\in \perp} |\psi_k|^2 \right) ,
\end{equation}
where we are using $\perp$ to denote the set of undesired states.
Recall that this was part of a state that is not normalised.
The squared norm for the desired part of the spectrum is
\begin{equation}
    P(\not\perp) = \sum_{k\in\not\perp} |\psi_k|^2,
\end{equation}
where $\not\perp$ is indicating the desired part of the spectrum.
As a result, the normalised probability for the desired part is lower bounded by
\begin{equation}
\frac{P(\not\perp)}{P(\not\perp) + P(\perp)} \ge
\frac{\sum_{k\in\not\perp} |\psi_k|^2}{\sum_{k\in\not\perp} |\psi_k|^2 + \left(\max_{k\in\perp} \tilde w(\phi_k) \right)^2 \left( \sum_{k\in \perp} |\psi_k|^2 \right)} \ge \frac{1}{1 + \left(\max_{k\in\perp} \tilde w(\phi_k) \right)^2 } ,
\end{equation}
where the second inequality comes from assuming that the initial probability for the desired part of the spectrum is at least $1/2$.
Given this probability, the norm of the difference from the desired state is
\begin{equation}
    \sqrt{2-\frac 2{1 + \left(\max_{k\in\perp} \tilde w(\phi_k) \right)^2}} \le \max_{k\in\perp} \tilde w(\phi_k).
\end{equation}

Next we give a more explicit description of the sequence of rotations needed for the filtering.
The first rotation prepares the state
\begin{equation}
    \frac 1{\sqrt{\sum_j w_j}} \left( \sqrt{w_0} \ket{0} + \sqrt{\sum_{j>0} w_j} \ket{1} \right) .
\end{equation}
The first controlled rotation gives
\begin{equation}
    {\sqrt{\sum_{j>0} w_j}} \ket{10}\mapsto \sqrt{w_1} \ket{10} + {\sqrt{\sum_{j>1} w_j}} \ket{11} .
\end{equation}
In general, the controlled rotation with qubit $k$ as control and $k+1$ as target maps
\begin{equation}
    {\sqrt{\sum_{j\ge k} w_j}} \ket{10}\mapsto \sqrt{w_k} \ket{10} + {\sqrt{\sum_{j>k} w_j}} \ket{11} .
\end{equation}
If one were to perform the rotations for the preparation in the reverse order, one would use a rotation on the last qubit to take zero to
\begin{equation}
    \frac 1{\sqrt{\sum_j w_j}} \left( \sqrt{\sum_{j=0}^{\ell-1} w_j} \ket{0} + \sqrt{w_{\ell}} \ket{1} \right) .
\end{equation}
Then the controlled rotation would take
\begin{equation}
    \sqrt{\sum_{j=0}^{\ell-1} w_j} \ket{10} \mapsto \sqrt{\sum_{j=0}^{\ell-2} w_j} \ket{00} + \sqrt{w_{\ell-1}} \ket{10} .
\end{equation}
Inverting this rotation gives
\begin{align}
    \sqrt{\sum_{j=0}^{\ell-1} w_j} \ket{00} &\mapsto \sqrt{\sum_{j=0}^{\ell-2} w_j} \ket{10} - \sqrt{w_{\ell-1}} \ket{00}, \\
    \sqrt{\sum_{j=0}^{\ell-1} w_j} \ket{10} &\mapsto \sqrt{w_{\ell-1}} \ket{10} + \sqrt{\sum_{j=0}^{\ell-2} w_j} \ket{00}.
\end{align}
More generally, the rotation with qubit $k+1$ as control and $k$ as target gives
\begin{align}
    \sqrt{\sum_{j=0}^{k} w_j} \ket{00} &\mapsto \sqrt{\sum_{j=0}^{k-1} w_j} \ket{10} - \sqrt{w_{k}} \ket{00}, \\
    \sqrt{\sum_{j=0}^{k} w_j} \ket{10} &\mapsto \sqrt{w_{k}} \ket{10} + \sqrt{\sum_{j=0}^{k-1} w_j} \ket{00} .
\end{align}

This means that the sequence of two controlled rotations gives
\begin{align}
    {\sqrt{\sum_{j\ge k} w_j}} \ket{1} &\mapsto \sqrt{w_k} \ket{10} + {\sqrt{\sum_{j>k} w_j}} \ket{11}  \nn
    & \mapsto \frac{ \sqrt{w_k} }{\sqrt{\sum_{j=0}^{k} w_j}} \left( \sqrt{w_{k}} \ket{10} + \sqrt{\sum_{j=0}^{k-1} w_j} \ket{00} \right) + {\sqrt{\sum_{j>k} w_j}} \ket{11}
\end{align}
and
\begin{align}
     \ket{0} &\mapsto \frac 1{\sqrt{\sum_{j=0}^{k} w_j}} \left(\sqrt{\sum_{j=0}^{k-1} w_j} \ket{10} - \sqrt{w_{k}} \ket{00} \right).
\end{align}
Projecting onto one on the first qubit then gives the mapping
\begin{align}
     \ket{1} &\mapsto \frac 1{\sqrt{\sum_{j\ge k} w_j}} \left( \frac{ w_k }{\sqrt{\sum_{j=0}^{k} w_j}} \ket{0}  + {\sqrt{\sum_{j>k} w_j}} \ket{1} \right) \\
     \ket{0} &\mapsto \frac {\sqrt{\sum_{j=0}^{k-1} w_j}}{\sqrt{\sum_{j=0}^{k} w_j}} \ket{0}.
\end{align}
To see the effect of this, let us consider $k=1$, so we are considering the operation immediately after the qubit rotation and controlled $W$ on the target system.
Assuming the target system is in an eigenstate with eigenvalue $e^{i\phi}$, the state at this point will be
\begin{equation}
    \frac 1{\sqrt{\sum_j w_j}} \left( \sqrt{w_0} \ket{0} + \sqrt{\sum_{j>0} w_j} e^{i\phi} \ket{1} \right) .
\end{equation}
The above mapping then gives
\begin{equation}
    \frac 1{\sqrt{\sum_j w_j}} \left( \frac{w_0}{\sqrt{\sum_{j=0}^1 w_j}} \ket{0} + e^{i\phi} \left( \frac{w_1}{\sqrt{\sum_{j=0}^1 w_j}} \ket{0} + \sqrt{\sum_{j>1} w_j}\ket{1} \right) \right) .
\end{equation}
This can be written as
\begin{equation}
    \frac 1{\sqrt{\sum_j w_j}} \left( \frac{w_0+e^{i\phi} w_1}{\sqrt{\sum_{j=0}^1 w_j}} \ket{0} + e^{i\phi} \sqrt{\sum_{j>1} w_j}\ket{1} \right) .
\end{equation}
Thus we can see that we have the desired weights $w_0$ and $w_1$ on the $\ket{0}$ state, and the $\ket{1}$ state is flagging the remainder of the linear combination still to be obtained.
More generally, after performing the controlled rotations between qubits $k$ and $k+1$ and the projection onto $\ket{1}$ on the ancilla qubit, the state will be of the form
\begin{equation}
    \frac 1{\sqrt{\sum_j w_j}} \left( \frac{\sum_{j=0}^k e^{ij\phi} w_j}{\sqrt{\sum_{j=0}^k w_j}} \ket{0} + e^{ik\phi} \sqrt{\sum_{j>k} w_j}\ket{1} \right) .
\end{equation}

\end{document}